\documentclass[a4paper,10.9pt,reqno]{article}

\usepackage[T1]{fontenc}			
\usepackage[utf8]{inputenc}
\usepackage[english]{babel}
\usepackage{amsmath, amsfonts, amssymb, amsthm, amscd, mathtools}
\usepackage{geometry}
\usepackage[round]{natbib}	
\usepackage{bbm, bm}
\usepackage{xspace} 
\usepackage{enumerate, enumitem}
\usepackage{hyperref}
\usepackage{xcolor}
\usepackage{comment}
\usepackage{eurosym}
\usepackage{graphicx}

\newcommand{\bbF}{{\ensuremath{\mathbb F}} }

\newcommand{\cB}{{\ensuremath{\mathcal B}} }
\newcommand{\cC}{{\ensuremath{\mathcal C}} }

\newcommand{\cE}{{\ensuremath{\mathcal E}} }
\newcommand{\cF}{{\ensuremath{\mathcal F}} }

\newcommand{\cH}{{\ensuremath{\mathcal H}} }

\newcommand{\cL}{{\ensuremath{\mathcal L}} }

\newcommand{\cO}{{\ensuremath{\mathcal O}} }
\newcommand{\cP}{{\ensuremath{\mathcal P}} }

\newcommand{\cS}{{\ensuremath{\mathcal S}} }

\newcommand{\bfJ}{{\ensuremath{\mathbf J}} }

\newcommand{\dd}{{\ensuremath{\mathrm d}} }

\newcommand{\dC}{{\ensuremath{\mathrm C}} }

\newcommand{\R}{\mathbb{R}}

\renewcommand{\P}{\mathbb{P}}
\newcommand{\E}{\mathbb{E}}

\newcommand{\ind}{\ensuremath{\mathbf{1}}}

\DeclarePairedDelimiterX{\inprod}[2]{\langle}{\rangle}{#1, #2}

\newcommand{\changeepsilon}
{
	\let\temp\epsilon
	\let\epsilon\varepsilon
	\let\varepsilon\temp
}

\newcommand{\changetheta}
{
	\let\temp\theta
	\let\theta\vartheta
	\let\vartheta\temp
}
\newcommand{\changephi}
{
	\let\temp\phi
	\let\phi\varphi
	\let\varphi\temp
}

\theoremstyle{plain}
\newtheorem{theorem}{Theorem}[section]
\newtheorem*{theorem*}{Theorem}
\newtheorem{lemma}[theorem]{Lemma}
\newtheorem*{lemma*}{Lemma}
\newtheorem{proposition}[theorem]{Proposition}
\newtheorem*{proposition*}{Proposition}

\theoremstyle{definition}
\newtheorem{definition}{Definition}[section]
\newtheorem{assumption}{Assumption}[section]

\theoremstyle{remark}

\newtheorem*{remark*}{Remark}

\definecolor{darkviolet}{rgb}{0.58, 0.0, 0.83}

\definecolor{gre}{rgb}{0.03,0.50,0.03}

\numberwithin{equation}{section}

\hypersetup{
	colorlinks,
	citecolor=blue,
	filecolor=blue,
	linkcolor=blue,
	urlcolor=blue
}

\makeatletter
\def\@setthanks{\vspace{-\baselineskip}\def\thanks##1{\@par##1\@addpunct.}\thankses}
\makeatother

\graphicspath{{./figs/}}

\begin{document}
	\title{A Stationary Mean-Field Model \\ for Carbon Market Stability Reserve}
	\author{
		René Aïd\thanks{Université Paris-Dauphine -- PSL Research University, Department of Economics, UMR CNRS 8007. René Aïd thanks the financial support of the {\em Finance and Sustainable Development EDF-CA CIB Chair}, the {\em Finance for Energy Market} Research Initiative, the French ANR PEPR Math-Vives project MIRTE ANR-23-EXMA-0011.}  \quad 
		Federico~Cannerozzi\thanks{Bielefeld University, Center for Mathematical Economics (IMW), Bielefeld (Germany). federico.cannerozzi@uni-bielefeld.de } \quad 
		Giorgio~Ferrari\thanks{ Bielefeld University, Center for Mathematical Economics (IMW), Bielefeld (Germany). giorgio.ferrari@uni-bielefeld.de}
	}
	\date{\today}
	\changephi
	\changeepsilon
	\allowdisplaybreaks
	
	\maketitle
	
	\begin{abstract}
		We formulate the design of a market stability reserve as a stationary mean-field singular control problem. The allowance balance follows a regime-switching diffusion, while the regulator injects or withdraws allowances through bounded-variation controls. The objective trades off the average allowance balance, its stationary dispersion, deviations from a long-run target, and intervention costs.
		The optimal policy is a regime-dependent band rule characterized by four reflecting thresholds. Its computation reduces to a scalar fixed-point problem combining a coupled variational inequality with the invariant distribution of the reflected process. A calibration to the EU ETS yields thresholds of the same order of magnitude as the existing Market Stability Reserve rule. Comparative statics show that volatility is the main determinant of the intervention band, whereas the frequency and duration of crisis regimes have limited effects. A simpler two-threshold policy based on average regime dynamics closely reproduces the optimal regime-dependent policy in terms of liquidity and intervention volumes.
	\end{abstract}
	
	\noindent \textbf{Keywords}: EU ETS; Market Stability Reserve; carbon markets; singular stochastic control; mean-field control; regime switching; ergodic control.
	
	\smallskip
	
	\noindent \textbf{AMS 2020}: 93E20; 49N80; 91B76; 91B70. \qquad \textbf{JEL}: Q58; Q54; C61.
	
	\tableofcontents
	
	\section{Introduction}
	
	To decarbonize its economy, the European Union has developed a broad set of transition instruments, combining direct support for low-carbon technologies, renewable energy subsidies, regulatory standards, and carbon pricing. Among these instruments, the European Union Emissions Trading System (EU ETS) is one of the central pillars of the EU decarbonization strategy. Launched in 2005, the EU ETS is the world’s first international carbon market and is currently in its fourth trading phase. It covers greenhouse gas emissions from around 15,000 installations in the energy sector and manufacturing industry, as well as aviation within its scope; since 2024, it also covers maritime transport. Overall, the system accounts for roughly 35--40 percent of the EU’s greenhouse gas emissions. This achievement is a spectacular application of the seminal work of \cite{Montgomery72}.
	
	However, going from a one-step deterministic model showing the equivalence between a tax on pollutants and a cap-and-trade system to a real-world application facing uncertainties and economic dynamics requires some tuning. In particular over the now twenty years of lifespan of the ETS, it has known a long period of depressed prices (around 5~\euro/t) because of an unfortunate combination of an excess of allowances in circulation and the 2008 financial crisis that led to the Great Financial Recession. 
	
	For this reason and to align the needs of the firms covered by the ETS together with the objectives of emissions reduction, the EU designed and adopted in 2015 a stabilising system named the Market Stability Reserve (MSR). The MSR is a rule-based mechanism that adjusts auction volumes in response to the total number of allowances in circulation (TNAC). In the 2014 Commission proposal \cite[sec. 6.2.3]{EC14MSR}, allowances are added to the reserve whenever the TNAC exceeds \(833\) million allowances and released whenever it falls below \(400\) million. The use of the TNAC to determine the increase and decrease of allowances to be auctionned is explicitly justified as a direct measure of the supply--demand imbalance in the EU ETS, compared to more indirect indicators such as GDP, fuel prices, or weather conditions. 
	The proposal further explains that the upper and lower thresholds were determined following consultations with stakeholders and were meant to define a range in which, according to market experience, the system could operate in an orderly manner. An associated release volume of \(100\) million allowances per year was chosen to ensure predictability and gradual adjustment, and was justified as being roughly \(5\%\) of current annual EU ETS emissions, which historical experience suggested would be sufficient even in the presence of sudden and strong increases in demand. 
	
	For the concern of this paper, it is important to notice that the MSR design was undertaken without any consideration of equilibrium carbon prices, since prices have to be set by market forces and not by regulation. 
	
	Since its creation, the EU~ETS has attracted considerable attention from academics, with studies focusing on a broad range of its characteristics, including the modeling of allowance price dynamics (\cite{Carmona10,Carmona13}), the efficiency of the mechanism compared to a tax system (\cite{Kollenberg16}) and its effectiveness as a device to reduce emissions (\cite{Eslahi25}), the role and optimisation of allowance regulation in price formation (\cite{Aid23}), and the effect of the MSR itself on market equilibrium (\cite{Chaton18}).
	For a recent complete review of the efficiency of the MSR, the reader is referred to the literature survey of \cite{Borghesi23}. This short list of papers hides the more than 2,500 articles published between 2005 and 2025 having in its title the exact expression~'EU ETS' (according to Google Scholar).  However, regarding the MSR, the literature has focused on proposing amendments to the existing MSR rules based on an assessment of its performance with respect to certain objectives (see \cite{Perino16}, \cite{Perino22} and again the thorough review of the MSR by \cite{Borghesi23}).  
	
	This paper provides, to the best of our knowledge, a first attempt to model an optimal Market Stability Reserve directly. The objective is to describe how a regulator should define the lower and upper boundaries of an MSR in order to provide firms with the flow of permits needed to cover their emissions, while remaining consistent with long-run decarbonization objectives. In our model, the total number of allowances in circulation is represented by a controlled consumption process: allowances are depleted by emissions,  while the regulator can adjust their supply through cumulative injections and withdrawals.
	The regulator’s objective combines three components. The first one penalizes the aggregate volume of allowances, reflecting the will to operate a market under a minimal amount of allowances. The second penalizes fluctuations in the stock of allowances in circulation, capturing the idea that excessive volatility in allowance availability may lead to an excess in price volatility and destabilize the market. The third penalizes deviations from a target level, interpreted as the amount of allowances needed for the market to operate normally over a yearly horizon and also, the long-term target of emissions rate the regulator wishes to set.
	
	To account for unexpected macroeconomic or energy-market shocks, the emissions rate is allowed to switch between two regimes according to a continuous-time Markov chain. This specification captures the fact that emissions needs may differ substantially between normal periods and crisis periods, while the regulator observes the current regime and adjusts allowance injections accordingly. 
	The resulting problem is a singular stochastic control problem with regime switching, in which the regulator chooses when and how much to inject into  or withdraw from the market so as to stabilize the allowance stock while taking into account the long-run allowance level generated by its own intervention policy.
	
	From a mathematical viewpoint, the model is a stationary mean-field control (MFC) problem  with two-sided singular controls and regime switching. The mean-field interaction is through the stationary mean of the controlled allowance balance, which enters the regulator's running cost and is itself endogenously determined by the control. Since the control acts through cumulative injections and withdrawals of allowances, the optimal policy is expected to be of threshold type, with minimal interventions that keep the allowance stock inside a regime-dependent inaction region. The problem is closely related to the class of ergodic mean-field games and mean-field control problems with singular controls studied in \cite{cannerozzi2026cooperation}, \cite{cannerozzi2026stationary} and \cite{dianetti2026ergodic}.  To the best of our knowledge, stationary mean-field control problems with two-sided singular controls and regime switching have not been studied before.
	
Our approach is to reduce the stationary MFC problem to an auxiliary stationary mean-field game (MFG) with an ergodic criterion. In this auxiliary game, the mean-field term is frozen in a scalar parameter representing the long-run average stock of allowances, and  a Lagrange multiplier is introduced to enforce the constraint that the parameter coincides with the stationary mean generated by the controlled dynamics. For each value of the parameter, the representative control problem is an ergodic singular control problem with regime switching. Its solution is characterized by an auxiliary Dynkin game and a system of coupled variational inequalities, yielding regime-dependent reflecting barriers. We then impose the two stationary consistency conditions of the auxiliary MFG and show that they reduce to a single scalar fixed-point equation for the stationary mean. Solving this equation determines the endogenous stationary mean and the associated optimal intervention thresholds.
	
The paper delivers three main contributions. First, we establish existence of an optimal control for the stationary mean-field control problem and derive a complete characterization of the associated reflected policy through a unique scalar fixed point for the stationary mean. The fixed point uniquely determines both the endogenous stationary mean and the regime-dependent intervention thresholds.

	Second, we show that the optimal policy is characterized by four intervention thresholds, two in each regime. These thresholds determine when the regulator should inject allowances into the market and when it should withdraw them. The relative ordering of the two regime-dependent inaction regions is not known a priori and may vary with the model parameters, leading to different possible configurations of the intervention thresholds. Thus, rather than treating these configurations separately using smooth-paste conditions of each configurations and trying to numerically find the appropriate situation , we solve the variational inequality directly on a grid, with boundary conditions and a policy-iteration algorithm. This order-free numerical method also covers degenerate configurations, including the case where the two regimes share a common intervention band.
	
	Third, on the economic side, we provide a stylized calibration to the EU~ETS and use the model to identify the main determinants of the optimal intervention rule. The comparative statics show that volatility is the dominant driver of the width of the inaction region, whereas the frequency and duration of crisis regimes have only limited effects on the intervention thresholds. Intervention costs shift the corresponding boundaries in the expected direction (increased width), but their quantitative impact remains moderate over the range considered. Moreover, the regime-dependent thresholds remain highly close across all experiments. This observation is reinforced by a comparison with a reduced one-regime model based on the stationary average allowance-depletion rate: the resulting two-threshold rule closely reproduces the optimal regime-dependent thresholds, while generating differences of less than $1.8\%$ in stationary liquidity shortfall and less than $0.6\%$ in intervention volumes. Within our calibration, these results suggest a simple regulatory implication: conditioning the MSR intervention band on the macroeconomic regime provides little additional benefit. A parsimonious two-threshold rule, calibrated primarily to the volatility of allowance demand and to average market dynamics, can deliver virtually the same stabilization performance while preserving the transparency and predictability of the reserve mechanism.
	
The remainder of the paper is organized as follows. Section~\ref{sec:model} introduces the allowance-balance dynamics, formulates the stationary mean-field control problem, and states the main characterization of the optimal MSR policy. Section~\ref{sec:economic} provides the numerical analysis, including the baseline calibration, comparative statics, and the comparison between the regime-dependent policy and its one-regime approximation. Section~\ref{sec:mathematical_justification} develops the mathematical analysis underlying the main results: the Lagrangian reduction to an auxiliary stationary mean-field game, the solution of the frozen singular control problem, the stationary-consistency fixed point, and the verification of optimality for the original planner's problem. The paper concludes with a brief discussion, while the appendices collect the detailed proofs and the numerical implementation.

	\section{Model and main results}\label{sec:model}
	\subsection{Allowance balance, economic regimes, and MSR interventions}\label{ssec:model-specs}
	
	We consider a regulator who manages the quantity of allowances available to firms covered by an emissions trading system. The state variable $X_t$ represents the \emph{effective allowance balance} at time $t$, namely a reduced-form measure of the stock of allowances available for future compliance after accounting for current surrender needs. This interpretation is close to the allowance bank, or to the total number of allowances in circulation, which is the quantity used by the European Union to activate the Market Stability Reserve (MSR). Since we employ a diffusion approximation on the whole real line, negative values of $X$ should be interpreted as a scarcity position relative to the normalized zero-balance benchmark, rather than as a negative physical quantity of allowances. In the policy-relevant calibrations considered below, the optimally controlled state remains in a positive interval and can therefore be compared directly with observed MSR thresholds.
	
	The economy alternates between two regimes. Regime $1$ represents an expansionary state, in which emissions and compliance demand are relatively high, whereas regime $2$ represents a contractionary state, in which allowance depletion is slower. Formally, let $(\Omega,\cF,\bbF=(\cF_t)_{t\geq 0},\P)$ be a complete filtered probability space supporting a real-valued $\bbF$-Brownian motion $W$ and an independent two-state continuous-time Markov chain $\varepsilon=(\varepsilon_t)_{t\geq 0}$ with state space $\cE=\{1,2\}$ and generator $Q=(q_{ij})_{i,j\in\cE}$ given by
	\[
	Q=
	\begin{bmatrix}
		-\gamma_1 & \gamma_1\\
		\gamma_2 & -\gamma_2
	\end{bmatrix},
	\qquad \gamma_1,\gamma_2>0.
	\]
	The stationary distribution $\pi=(\pi_i)_{i=1,2}$ of $\varepsilon$ is
	\begin{equation}\label{mkv_chain:stationary_distribution}
		\pi_1\coloneqq\frac{\gamma_2}{\gamma_1+\gamma_2},
		\qquad
		\pi_2\coloneqq\frac{\gamma_1}{\gamma_1+\gamma_2}.
	\end{equation}
	
	In the absence of an MSR policy, the allowance balance follows
	\begin{equation}\label{eq:dynamics:uncontrolled}
		\dd X^0_t = -\delta_{\varepsilon_t}X^0_t \dd t + \sigma \dd W_t,
		\qquad X^0_0=x,
	\end{equation}
	where $\sigma>0$ and $\delta_i>0$, $i\in\cE$. The drift in \eqref{eq:dynamics:uncontrolled} captures the systematic depletion of the allowance bank through compliance and the gradual rebalancing of an allowance surplus toward the normalized zero-balance benchmark. The Brownian term provides a parsimonious approximation of unexpected shocks to net allowance demand arising, for example, from output, energy prices, weather, or abatement opportunities; see, e.g.,~\cite{Kollenberg16,taschini2019ETS}. The regime-dependent coefficient $\delta_{\varepsilon_t}$ allows this depletion to be faster in an expansionary economy and slower in a contractionary economy. Hence, we assume the following.
	
	\begin{assumption}[Allowance-depletion rates]\label{assumption:delta}
		It holds $\delta_1>\delta_2>0$.
	\end{assumption}
	
	The regulator can alter the allowance balance through an $\bbF$-adapted singular control $A=(A_t)_{t\geq 0}$. We write its Jordan decomposition as
	\[
	A_t=A^+_t-A^-_t,
	\]
	where $A^+$ and $A^-$ are non-decreasing, c\`adl\`ag, have disjoint supports, satisfy $A^\pm_{0-}=0$, and have integrable finite variation on every bounded time interval. The process $A^+_t$ records cumulative releases of allowances into circulation up to time $t$, whereas $A^-_t$ records cumulative placements of allowances into the reserve, or equivalently cumulative reductions in the quantity made available to the market. We abstract from an explicit inventory constraint for the reserve and interpret $A^+$ and $A^-$ as net adjustments to the supply of allowances available for compliance.
	
	For any $A=(A^+,A^-)$ singular control, the controlled allowance balance satisfies
	\begin{equation}\label{application:dynamics}
		\dd X^A_t = -\delta_{\varepsilon_t} X^A_t \dd t + \sigma \dd W_t + \dd A^+_t - \dd A^-_t,
		\qquad X^A_{0-}=x.
	\end{equation}
	The formulation permits both gradual and instantaneous interventions. In particular, we shall see that after a change of regime the regulator may immediately adjust the allowance balance if it lies outside the intervention band associated with the new regime. 
	
	For an initial condition $(x,i)\in\R\times\cE$, we write $\P_{x,i}$ for the corresponding probability measure and $\E_{x,i}$ for expectation under $\P_{x,i}$.
	We denote by $\cB$ the set of admissible singular controls, i.e. $\bbF$-adapted singular controls that satisfy the transversality condition
	\begin{equation}\label{eq:transversality_condition}
		\varlimsup_{T \to \infty} \frac{1}{T}\E_{x,i}[ \vert X^A_T \vert ] = 0.
	\end{equation}
	The transversality condition \eqref{eq:transversality_condition} will be used in the verification theorem (see Proposition \ref{prop:hjb_frozen}).
	
	\subsection{Stationary regulatory objective}\label{ssec:model-objective}
	
The MSR is intended to operate as a permanent rule rather than as a policy designed for a fixed terminal date. We therefore formulate the regulator's problem directly in steady state. Set $\cO\coloneqq\R\times\cE$ and denote by $\cP(\cO)$ the set of probability measures on $\cO$.
	
\begin{definition}\label{def:ergodic_strategies}
We say that a strategy $A\in\cB$ is admissible for the stationary mean-field control problem if the following conditions hold:
\begin{enumerate}[label=(\roman*)]
\item the joint law of $(X^A_t,\varepsilon_t)$ converges to a unique limiting stationary distribution $p^A_\infty\in\cP(\cO)$ satisfying
\[
\sum_{i=1}^2\int_{\R}|x|\,p^A_\infty(\dd x,i)<\infty;
\]
\item the long-run average allowance balance agrees with the mean of the limiting stationary distribution, namely
\begin{equation}\label{eq:stationary_occupation_property}
\lim_{T\to\infty}\frac{1}{T}\E_{x,i}\Bigg[\int_0^T X^A_t\,\dd t\Bigg] =\sum_{j=1}^2\int_{\R}x\,p^A_\infty(\dd x,j).
\end{equation}
\end{enumerate}
We denote the set of admissible strategies by $\cB_{{\rm mfc}}$ and define the stationary mean generated by $A$ as
\begin{equation}\label{eq:stationary_mean_control}
\theta^A\coloneqq\sum_{i=1}^2\int_{\R}x\,p^A_\infty(\dd x,i).
\end{equation}
\end{definition}
	
For $x,\theta\in\R$, set
\begin{equation}\label{eq:instant_cost}
l(x,\theta)\coloneqq\rho x+\phi(x-\theta)^2+\psi(\theta-\bar\theta)^2,
\end{equation}
where $\rho\in\R$, $\phi,\psi>0$, and $\bar\theta\in\R$ is the regulator's preferred long-run allowance balance. For $A\in\cB_{{\rm mfc}}$, the regulator's long-run average cost is
\begin{equation}\label{eq:cost:ergodic}
\bfJ^{{\rm mfc}}(A)
\coloneqq
\varlimsup_{T\to\infty}\frac{1}{T}\E_{x,i}\Big[ \int_0^T l(X^A_t,\theta^A) \dd t+K_+A^+_T+K_-A^-_T
\Big],
\end{equation}
where $K_+,K_->0$.
	
	The running cost in \eqref{eq:cost:ergodic} separates three regulatory concerns. The linear term $\rho X^A_t$ captures a directional preference over the size of the allowance balance. In particular, $\rho>0$ represents the cost of maintaining an excessive allowance surplus and the associated downward pressure on the carbon price, whereas $\rho<0$ assigns a benefit to market liquidity. The term $\phi(X^A_t-\theta^A)^2$ penalizes dispersion of the stationary allowance balance around its endogenous mean. It therefore discourages a stationary distribution that places substantial mass on either severe scarcity or excessive surplus; its effect on carbon-price volatility is indirect, through the stabilization of allowance availability. Finally, $\psi(\theta^A-\bar\theta)^2$ penalizes deviations of the long-run mean from the regulator's target operating level. The quantities $K_+$ and $K_-$ are the marginal costs of releasing allowances and placing allowances into the reserve, respectively.
	
Thus, the regulator trades off the average allowance balance, fluctuations of the allowance balance around its endogenous stationary mean, deviations from the policy target, and the frequency and magnitude of interventions. Note that no considerations on price control or social surplus are present in the criterion of the regulator, as reflected in the European Commission proposal described in the Introduction.
	
The stationary mean-field control problem is
\begin{equation}\label{eq:stationary_MFC_problem}
\bfJ^{{\rm mfc}}(A^\star)=\inf_{A\in\cB_{{\rm mfc}}}\bfJ^{{\rm mfc}}(A).
\end{equation}
A policy $A^\star$ satisfying \eqref{eq:stationary_MFC_problem} is called an optimal stationary MSR policy.
	
\subsection{Optimal stationary MSR rule}\label{sec:model-MSR}
	
We denote by $\cL_{(X^0,\varepsilon)}$ the infinitesimal generator of $(X^0,\varepsilon)$. For any $f:\cO\to\R$ with $f(\cdot,i)\in\dC^2(\R)$, $i=1,2$, it is given by
\begin{equation}\label{eq:infinitesimal_generator}
\cL_{(X^0,\varepsilon)}f(x,i) = -\delta_i x f_x(x,i)+\frac{\sigma^2}{2}f_{xx}(x,i) + q_{ij}\big(f(x,j)-f(x,i)\big), \qquad (x,i)\in\cO,\quad j\neq i.
\end{equation}
	
The main result shows that the regulator's optimization problem admits a regime-dependent optimal band policy, so that an optimal policy can be characterized in terms of four intervention thresholds rather than by optimizing directly over the full class of singular controls. The construction below uniquely determines these thresholds through a scalar stationary-consistency condition.
	
\begin{theorem}[Optimal stationary MSR rule]\label{thm:main_result}
Let Assumption~\ref{assumption:delta} hold. There exists an optimal policy $A^\star$ for the stationary mean-field control problem \eqref{eq:stationary_MFC_problem}. Moreover, there exist four thresholds $a^\star_+(i),a^\star_-(i)$, $i\in\cE$, with $a^\star_+(i)<a^\star_-(i)$, such that the pair $(X^{A^\star},A^\star)$ solves
\begin{equation}\label{eq:skorohod_reflection_optimal}
\left\{
\begin{aligned}
&\dd X^{A^\star}_t = -\delta_{\varepsilon_t}X^{A^\star}_t \dd t + \sigma \dd W_t + \dd A^{\star,+}_t - \dd A^{\star,-}_t,
				\qquad X^{A^\star}_{0-}=x,\\
&a^\star_+(\varepsilon_t)\leq X^{A^\star}_t\leq a^\star_-(\varepsilon_t), \qquad \P_{x,i}\text{-a.s. for a.e. }t\geq 0,\\
&\int_0^t\ind_{\{X^{A^\star}_{s-}>a^\star_+(\varepsilon_s)\}}\dd A^{\star,+}_s
= \int_0^t\ind_{\{X^{A^\star}_{s-}<a^\star_-(\varepsilon_s)\}} \dd A^{\star,-}_s =0,
\qquad \P_{x,i}\text{-a.s. for all }t\geq 0.
\end{aligned}
\right.
\end{equation}
Hence, in regime $i$, the regulator leaves the allowance balance unchanged while it lies in the inaction interval $[a^\star_+(i),a^\star_-(i)]$, releases allowances only at the lower boundary, and places allowances into the reserve only at the upper boundary. If the initial balance, or the balance immediately after a regime switch, lies outside the relevant interval, the regulator makes the smallest instantaneous adjustment that returns it to that interval.
\end{theorem}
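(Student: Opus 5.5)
We will follow the Lagrangian reduction announced in the introduction. For a frozen scalar $\theta\in\R$ and a multiplier $\lambda\in\R$, we consider the ergodic singular control problem with running cost
\[
f_\theta^\lambda(x)\coloneqq \rho x+\phi(x-\theta)^2+\lambda x,
\]
where the $\psi(\theta-\bar\theta)^2$ term is a constant. The multiplier $\lambda$ accounts for the derivative of the cost with respect to the mean. Since $\theta^A$ enters $l$ only through $\phi(x-\theta)^2$ and $\psi(\theta-\bar\theta)^2$, differentiating $\theta\mapsto \int l(x,\theta)\,p_\infty(\dd x)$ at $\theta=\int x\,p_\infty$ gives $2\psi(\theta-\bar\theta)$; the $\phi$-part contributes zero at the mean. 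Hence the natural choice is $\lambda=2\psi(\theta-\bar\theta)$. The key observation we plan to exploit is that, for any $A\in\cB_{\rm mfc}$ and any $\theta$,
\[
\bfJ^{\rm mfc}(A)=\varlimsup_T\frac1T\E\Big[\int_0^T\big(\rho X+\phi(X-\theta)^2\big)\dd t+K_+A^+_T+K_-A^-_T\Big]-\phi(\theta-\theta^A)^2+\psi(\theta^A-\bar\theta)^2 .
\]
This identity uses \eqref{eq:stationary_occupation_property} and convergence of second moments, the latter being checked or truncated as needed. Convexity of $\theta\mapsto\psi(\theta-\bar\theta)^2$ then yields the inequality $\psi(\theta^A-\bar\theta)^2\ge \psi(\theta-\bar\theta)^2+\lambda(\theta^A-\theta)$. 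Consequently $\bfJ^{\rm mfc}(A)\ge \bfJ_\theta^\lambda(A)+c(\theta)$, where $\bfJ_\theta^\lambda$ is the ergodic criterion with running cost $f_\theta^\lambda$, $c(\theta)$ is a constant, and we have used $\phi(\theta-\theta^A)^2\ge0$ with the correct sign, since the $\phi$-term is evaluated at the true mean. Equality holds when $\theta^A=\theta$.

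Step 1, the frozen problem. For fixed $(\theta,\lambda)$ we solve the regime-switching ergodic two-sided singular control problem. We introduce the value derivative $v(\cdot,i)=V_x(\cdot,i)$ and characterize it via the associated Dynkin game, or equivalently via the coupled variational inequality
\[
\max\Big\{\min\big\{\cL_{(X^0,\varepsilon)}V+f_\theta^\lambda-\beta,\;V_x+K_+\big\},\;V_x-K_-\Big\}=0 .
\]
The unknowns are the ergodic constant $\beta$ and a $\dC^{2,1}$ solution $V$ with $-K_+\le V_x\le K_-$, inaction region $(a_+(i),a_-(i))$ and smooth fit. Existence follows from a vanishing-discount argument. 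We would take the discounted problems with discount rate $r$, use convexity of $f$ in $x$ to get convexity of the value and hence monotone free boundaries, bound the boundaries uniformly in $r$ using the coercivity of $\phi x^2$, and pass to the limit $r\downarrow0$. A verification theorem (Proposition~\ref{prop:hjb_frozen}), using the transversality condition \eqref{eq:transversality_condition} and the boundedness of $V_x$, then shows that the Skorokhod reflection at $a_\pm(\varepsilon_t)$ is optimal in $\cB$. That the reflected process is positive recurrent with a unique invariant law $p_\infty^{\theta}$ on the compact set $\bigcup_i[a_+(i),a_-(i)]\times\{i\}$ follows from the compact state space and the irreducibility supplied by $\sigma>0$. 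Condition (ii) follows from the ergodic theorem for the reflected regime-switching diffusion, so this control lies in $\cB_{\rm mfc}$.

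Step 2, the fixed point. Set $\lambda=2\psi(\theta-\bar\theta)$ and define $F(\theta)\coloneqq\sum_i\int x\,p_\infty^\theta(\dd x,i)$. We need a solution of $F(\theta)=\theta$. The plan is to show that $F$ is continuous, using stability of the free boundaries in $\theta$ and continuity of invariant laws under continuous barriers. We would then show, by comparison for the variational inequality, that $F$ is nonincreasing in $\theta$. The reason is that increasing $\theta$ both shifts the quadratic centre up and increases the linear penalty $\lambda$, and these effects must be weighed. A cleaner route is to show that the optimal reflected mean for cost $\phi(x-\theta)^2+(\rho+2\psi(\theta-\bar\theta))x$ satisfies $F(\theta)-\theta$ strictly decreasing, going to $\mp\infty$ as $\theta\to\pm\infty$; this holds because the linear tilt $2\psi\theta$ dominates for large $|\theta|$ and pushes the barriers towards $\bar\theta$ relative to $\theta$. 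An intermediate value argument then gives existence, and strict monotonicity gives uniqueness.

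Step 3, conclusion. At the fixed point $\theta^\star$ the band policy $A^\star$ attains equality in the Lagrangian bound. For any other $A\in\cB_{\rm mfc}$ we have $\bfJ^{\rm mfc}(A)\ge\bfJ^\lambda_{\theta^\star}(A)+c\ge\beta+c=\bfJ^{\rm mfc}(A^\star)$, which proves optimality and \eqref{eq:skorohod_reflection_optimal}. The strict ordering $a^\star_+(i)<a^\star_-(i)$ follows from $K_++K_->0$. We expect the main obstacle to be Step 1: the regime coupling term $q_{ij}(V(x,j)-V(x,i))$ prevents explicit solutions, and the unknown relative order of the four barriers must be handled without case splitting. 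The monotonicity of $F(\theta)-\theta$ in Step 2 is the second delicate point.
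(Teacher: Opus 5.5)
\textbf{Step 3 has a sign error, and it is a genuine gap.} Your Lagrangian lower bound does not follow from your own (correct) identity. The term you discard is $-\phi(\theta-\theta^A)^2\le 0$, and dropping a nonpositive term gives an \emph{upper} bound on $\bfJ^{\rm mfc}(A)$, not a lower one. Carry the computation through exactly: the term $\lambda X$ has a true long-run average $\lambda\theta^A$ by \eqref{eq:stationary_occupation_property}, and the tangent-line step is exact up to $\psi(\theta^A-\theta)^2$. This gives
\[
\bfJ^{\rm mfc}(A)=\bfJ^\lambda_\theta(A)+c(\theta)+(\psi-\phi)(\theta^A-\theta)^2,\qquad c(\theta)=\psi(\theta-\bar\theta)^2-\lambda\theta .
\]
So $\bfJ^{\rm mfc}(A)\ge\bfJ^\lambda_\theta(A)+c(\theta)$ holds only when $\psi\ge\phi$. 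When $\phi>\psi$ it fails for every $A$ with $\theta^A\neq\theta$, and the theorem allows arbitrary $\phi,\psi>0$.

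This is structural rather than a slip. As a function of the stationary law, the running cost is $\phi\int x^2\,\dd p+(\psi-\phi)m^2$ plus terms linear in the mean $m$. That is convex in the law only if $\psi\ge\phi$, so frozen optimality of $A^\star$ alone cannot absorb the extra term.

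What is missing is a pathwise comparison that uses the affinity of $A\mapsto X^A$ for fixed $(W,\varepsilon)$. Take $A^\epsilon=(1-\epsilon)A^\star+\epsilon A$ and use:
\begin{itemize}
\item the exact quadratic expansion of the frozen cost;
\item the bound $A^{\epsilon,\pm}\le(1-\epsilon)A^{\star,\pm}+\epsilon A^\pm$;
\item the fact that the frozen cost of $A^\star$ is a true limit (Proposition~\ref{prop:hjb_frozen}-\ref{prop:hjb_frozen:limsup}).
\end{itemize}
Then, with Jensen and \eqref{eq:stationary_occupation_property} for the last step,
\[
\bfJ^\lambda_{\theta^\star}(A)-\bfJ^\lambda_{\theta^\star}(A^\star)\ge\phi\,\varliminf_{T\to\infty}\frac1T\E\Big[\int_0^T|X^A_t-X^{A^\star}_t|^2\dd t\Big]\ge\phi(\theta^A-\theta^\star)^2 .
\]
This absorbs the bad term and yields $\bfJ^{\rm mfc}(A)-\bfJ^{\rm mfc}(A^\star)\ge\psi(\theta^A-\theta^\star)^2\ge0$. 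The paper reaches optimality by a different pathwise device: joint convexity of $(x,\theta)\mapsto l(x,\theta)$ along $(X^{A^\star}_t,X^A_t)$, combined with the first-order condition of Proposition~\ref{prop:first_order}.

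\textbf{Step 2 is only a plan.} The ``linear tilt dominates'' heuristic proves neither continuity of $F$ nor strict monotonicity of $F(\theta)-\theta$. Your first comparison claim, that $F$ is nonincreasing, is false when $\phi>\psi$. Along $\lambda=2\psi(\theta-\bar\theta)$ the frozen marginal cost is $2\phi(x-z(\theta))$, with $z(\theta)=(1-\psi/\phi)\theta+\frac{\psi}{\phi}\bar\theta-\frac{\rho}{2\phi}$. The barriers depend on $\theta$ only through $z(\theta)$, so by Lemma~\ref{lemma:theta:boundary_regularity} they increase with $\theta$, and $F$ increases with them.

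The paper gets existence from two facts. First, $\Phi$ is continuous (Lipschitz barriers plus stability of the invariant laws). Second, $\Phi(\theta)-\theta$ changes sign, by the bound $\sum_i\int(x-z(\theta))^2p^\theta_\infty(\dd x,i)\le C(1+|z(\theta)|)$, which is obtained by comparison with reflection in $[z(\theta)-1,z(\theta)+1]$.

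Your strict-monotonicity route can be made rigorous with the strong-convexity estimate above. Apply it to the frozen optima at two arbitrary $\theta_1,\theta_2$ and add the two inequalities; the intervention costs cancel. This gives $\phi(\Phi(\theta_1)-\Phi(\theta_2))^2\le(\phi-\psi)(\theta_1-\theta_2)(\Phi(\theta_1)-\Phi(\theta_2))$. Hence $\Phi$ is Lipschitz, and $\Phi(\theta)-\theta$ decreases with slope at most $-\min\{1,\psi/\phi\}$, which yields existence and uniqueness at once.

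Your Step 1, using vanishing discount instead of the Dynkin game for $U^{\theta,\lambda}$, is a legitimate alternative route, though only sketched.
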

	
The fixed-point construction stated below  uniquely determines the stationary mean $\theta^\star$ and the four reflecting thresholds of this optimal policy.
	
\begin{proposition}[Fixed-point characterization of the optimal policy]\label{prop:main_fixed_point_characterization}
For each $\theta\in\R$, define
\begin{equation}\label{eq:lagrange_multiplier}
\lambda(\theta)\coloneqq 2\psi(\bar\theta-\theta).
\end{equation}
Let $U^\theta:\R\times\cE\to\R$ denote the marginal value of an additional allowance in the auxiliary ergodic control problem associated with the candidate stationary mean $\theta$ and the multiplier $\lambda(\theta)$. For $i\in\cE$, define
\begin{equation}\label{eq:obstacle_operator_section2}
\cH_i^\theta(U)(x)
\coloneqq
-\delta_iU(x,i) +\cL_{(X^0,\varepsilon)}U(x,i) + l_x(x,\theta)-\lambda(\theta).
\end{equation}
The marginal value is characterized by the coupled double-obstacle system
\begin{equation}\label{eq:main_thm:free_boundary_problem}
\max\Big\{ \min\big\{\cH_i^\theta(U^\theta)(x),K_--U^\theta(x,i)\big\}, -K_+ - U^\theta(x,i)\Big\}=0,
\qquad x\in\R,\quad i\in\cE.
\end{equation}
Moreover, $U^\theta(\cdot,i)$ is non-decreasing and belongs to $\dC^1(\R)\cap\dC^2(\R\setminus\{a_+(i,\theta),a_-(i,\theta)\})$, where the finite thresholds
\begin{equation}\label{eq:thresholds_from_contact_sets}
a_+(i,\theta)
\coloneqq\sup\{x\in\R:U^\theta(x,i)=-K_+\},
\qquad
a_-(i,\theta)
\coloneqq\inf\{x\in\R:U^\theta(x,i)=K_-\}
\end{equation}
satisfy $a_+(i,\theta)<a_-(i,\theta)$. Equivalently,
\[
U^\theta(x,i)=-K_+\quad\text{for }x\leq a_+(i,\theta), \qquad
U^\theta(x,i)=K_-\quad\text{for }x\geq a_-(i,\theta),
\]
and $\cH_i^\theta(U^\theta)(x)=0$ in the continuation interval $(a_+(i,\theta),a_-(i,\theta))$.
		
Let $p^\theta_\infty$ be the stationary distribution of the process reflected at these thresholds and define the regime-wise cumulative distribution functions
\begin{equation}\label{eq:def_regime_wise_CDF}
\mu^\theta(x,i)
\coloneqq p^\theta_\infty\big(({-\infty},x]\times\{i\}\big),
\qquad x\in\R,\quad i\in\cE.
\end{equation}
The pair $\mu^\theta=(\mu^\theta(\cdot,1),\mu^\theta(\cdot,2))$ is the unique solution in the class of continuous, componentwise non-decreasing functions such that $\mu^\theta(\cdot,i)\in\dC^2((a_+(i,\theta),a_-(i,\theta)) \setminus \{ a_+(j,\theta),a_-(j,\theta) \} )$, $i\in\cE$, $j\neq i$, of
\begin{equation}\label{eq:regime_wise_CDF}
\left\{
\begin{aligned}
\mu^\theta(x,i)&=0,
&&x\leq a_+(i,\theta),\\
\frac{1}{2}\sigma^2\mu^\theta_{xx}(x,i) +\delta_i x\mu^\theta_x(x,i) +\sum_{j=1}^2q_{ji}\mu^\theta(x,j)&=0,
&&a_+(i,\theta)<x<a_-(i,\theta),\\
\mu^\theta(x,i)&=\pi_i,
&&x\geq a_-(i,\theta).
\end{aligned}
\right.
\end{equation}
Define the stationary-mean map
\begin{equation}\label{eq:fixed_point_map}
\Phi(\theta)
\coloneqq\sum_{i\in\cE}\int_{\R}x\,p^\theta_\infty(\dd x,i)
=\sum_{i=1}^2\left( \pi_i a_-(i,\theta) - \int_{a_+(i,\theta)}^{a_-(i,\theta)}\mu^\theta(x,i)\dd x \right).
\end{equation}
Then $\Phi$ admits a unique fixed point $\theta^\star$, and the optimal thresholds in Theorem~\ref{thm:main_result} are
\[
a^\star_\pm(i)=a_\pm(i,\theta^\star), \qquad i\in\cE.
\]
\end{proposition}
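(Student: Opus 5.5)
We follow the Lagrangian reduction announced in the introduction. We freeze $\theta\in\R$ and consider the auxiliary ergodic singular control problem with running cost $l(x,\theta)-\lambda(\theta)x$. Here $\lambda(\theta)=2\psi(\bar\theta-\theta)$ is the derivative of $\theta\mapsto\psi(\theta-\bar\theta)^2$ with a sign flip, so that the first-order condition in $\theta$ of the Lagrangian is enforced at a fixed point.

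\textbf{Step 1: the frozen problem.} For fixed $\theta$ the running cost $x\mapsto \rho x+\phi(x-\theta)^2-\lambda(\theta)x$ is strictly convex, and the dynamics are linear in $x$. By the standard connection between singular control and optimal stopping, the derivative of the (relative) value function solves a Dynkin game with discount rate $\delta_i$ (the $-\delta_iU$ term in $\cH_i^\theta$ comes from differentiating the drift $-\delta_i x$). The obstacles $-K_+$ and $K_-$ are constant, and the running payoff $l_x-\lambda$ is strictly increasing in $x$.

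\textbf{Step 2: solve the obstacle system \eqref{eq:main_thm:free_boundary_problem}.} We construct $U^\theta$ as the value of this regime-switching Dynkin game. Its value is well defined because $\delta_i>0$ makes the game discounted. Monotonicity of $l_x$ and the comparison principle for the linear flow give that $U^\theta(\cdot,i)$ is non-decreasing. Finiteness of $a_\pm(i,\theta)$ follows because $l_x\to\pm\infty$ as $x\to\pm\infty$: far left, stopping at $-K_+$ is optimal, and far right, stopping at $K_-$ is optimal. The strict inequality $a_+<a_-$ holds since $-K_+<K_-$. The $\dC^1$ smooth fit is obtained by the usual regularity argument for one-dimensional Dynkin games with a nondegenerate diffusion; the coupling through $q_{ij}$ only adds a bounded continuous source. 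This regularity step, for a coupled system whose inaction regions have unknown relative order, is where we expect the main technical difficulty. We would try to handle it through a penalization or policy-iteration argument that does not presuppose an ordering. We then integrate $U^\theta$ in $x$ to obtain a $\dC^2$ solution of the ergodic HJB equation. The verification theorem (Proposition~\ref{prop:hjb_frozen}), together with the transversality condition \eqref{eq:transversality_condition}, then shows that reflection at $a_\pm(i,\theta)$ is optimal for the frozen problem.

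\textbf{Step 3: the stationary law and the CDF system.} For the reflected regime-switching diffusion we verify positive recurrence on the bounded set $\bigcup_i[a_+(i),a_-(i)]$, which yields a unique stationary law $p^\theta_\infty$. Writing the adjoint (Fokker--Planck) equation and integrating once in $x$ gives \eqref{eq:regime_wise_CDF}, with boundary values $0$ and $\pi_i$, where $\pi_i$ is the marginal of $\varepsilon$. Uniqueness of \eqref{eq:regime_wise_CDF} within the stated class follows by a maximum principle on differences of two solutions, using that $Q^\top$ conserves mass. The formula for $\Phi$ in \eqref{eq:fixed_point_map} is integration by parts: $\int x\,\dd\mu=x\mu\big|_{a_+}^{a_-}-\int\mu\,\dd x$.

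\textbf{Step 4: the fixed point.} We show that $\theta\mapsto a_\pm(i,\theta)$ are continuous and non-decreasing. This follows because increasing $\theta$ lowers $l_x-\lambda = \rho+2\phi(x-\theta)+2\psi(\theta-\bar\theta)$ by $2(\phi-\psi)$ per unit of $\theta$. The sign of $\phi-\psi$ matters here, so we track the shift exactly: the thresholds are those of a problem with the $x$-variable shifted. As a result $\Phi(\theta)-\theta$ is continuous and strictly monotone, with a slope bounded away from zero, because the thresholds move with slope strictly less than one in the appropriate sense. This gives existence and uniqueness of $\theta^\star$.

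\textbf{Step 5: optimality for the MFC.} For any $A\in\cB_{\rm mfc}$, stationarity \eqref{eq:stationary_occupation_property} lets us rewrite $\psi(\theta^A-\bar\theta)^2$ using convexity:
\[
\psi(\theta^A-\bar\theta)^2\ge \psi(\theta^\star-\bar\theta)^2-\lambda(\theta^\star)(\theta^A-\theta^\star).
\]
Similarly, $\phi(x-\theta^A)^2$ averages to $\phi\,\mathrm{Var}$, which is at least as large as the corresponding frozen cost evaluated at $\theta^\star$ minus the linear correction. Hence $\bfJ^{\rm mfc}(A)$ dominates the frozen cost at $\theta^\star$ with multiplier $\lambda(\theta^\star)$. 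Equality holds for the reflected policy, because $\theta^{A^\star}=\Phi(\theta^\star)=\theta^\star$. Admissibility of $A^\star$ follows from Step 3.
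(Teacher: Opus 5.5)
\paragraph{Gap in Step 4.} Existence and uniqueness of $\theta^\star$ both rest on your claim that $\Phi(\theta)-\theta$ has slope bounded away from zero ``because the thresholds move with slope strictly less than one''. None of the ingredients holds as stated.
\begin{itemize}
\item Along $\lambda=\lambda(\theta)$ the source is $l_x(x,\theta)-\lambda(\theta)=\rho+2\phi x-2(\phi-\psi)\theta-2\psi\bar\theta$. So for $\phi<\psi$ the thresholds are non-increasing in $\theta$, not non-decreasing.
\item A change of $\theta$ is not an exact $x$-shift. Moving the initial point by $r$ moves $X^0_t$ only by $e^{-\int_0^t\delta_{\varepsilon_s}\dd s}r$, so the drift $-\delta_i x$ breaks translation invariance. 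Comparison of Dynkin payoffs gives only two-sided bounds: slopes in $[1,2\delta_1/\delta_2]$ at fixed $\lambda$, hence in $[1-\psi/\phi,(2\delta_1/\delta_2)(1-\psi/\phi)]$ along the curve when $\phi>\psi$. These bounds permit slopes above one.
\item Even with threshold slopes in hand, you give no mechanism that turns four barriers moving at different rates into a slope bound for the stationary mean of a mean-reverting, reflected, regime-switching process.
\end{itemize}
The paper avoids threshold kinematics. Existence comes from continuity of $\Phi$ (continuity of $p^\theta_\infty$ in the barriers) plus a sign change of $\Phi(\theta)-\theta$. The sign change is obtained by writing the frozen cost as $\phi(x-z(\theta))^2+\Gamma(\theta)$ with $z$ affine and comparing with reflection in $[z(\theta)-1,z(\theta)+1]$, which yields $\sum_i\int(x-z(\theta))^2p^\theta_\infty(\dd x,i)\le C(1+|z(\theta)|)$. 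Uniqueness comes from a Lasry--Lions-type argument: perturb each reflected optimizer toward the other, then $\phi$-strong convexity and Jensen give $\phi(\theta_1-\theta_2)^2\le(\phi-\psi)(\theta_1-\theta_2)^2$. Running that same argument for two arbitrary candidates gives
\[
\phi\bigl(\Phi(\theta_1)-\Phi(\theta_2)\bigr)^2\le(\phi-\psi)(\theta_1-\theta_2)\bigl(\Phi(\theta_1)-\Phi(\theta_2)\bigr).
\]
So the difference quotients of $\Phi$ lie between $0$ and $1-\psi/\phi$, and your monotonicity conclusion is in fact true. Its source, however, is convexity of the frozen criterion, not the movement of the thresholds.

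\paragraph{Gap in Step 5.} This step fails when $\phi>\psi$. For $A\in\cB_{\rm mfc}$, the difference $l(x,\theta^A)-l(x,\theta^\star)-\lambda^\star(\theta^\star-x)$ is affine in $x$, so \eqref{eq:stationary_occupation_property} gives exactly
\[
\bfJ^{\rm mfc}(A)=\bfJ(A,\theta^\star,\lambda^\star)+(\psi-\phi)(\theta^A-\theta^\star)^2 .
\]
The multiplier $\lambda^\star=2\psi(\bar\theta-\theta^\star)$ absorbs only the $\psi$-part of $l_\theta$. The $\phi$-part, averaged under the law of $X^A$ rather than that of $X^{A^\star}$, leaves $-\phi(\theta^A-\theta^\star)^2$. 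Hence the claimed domination is false whenever $\phi>\psi$ and $\theta^A\neq\theta^\star$. You can close this in either of two ways.
\begin{itemize}
\item Use quadratic growth of the frozen criterion at its minimizer, $\bfJ(A,\theta^\star,\lambda^\star)\ge\bfJ(A^\star,\theta^\star,\lambda^\star)+\phi(\theta^A-\theta^\star)^2$, which follows from the same convex-combination argument.
\item Follow the paper: linearize $l$ jointly at $(X^{A^\star},\theta^\star)$, so that the $\theta$-correction is averaged under $p^{\theta^\star}_\infty$, where it equals $-\lambda^\star$ by consistency. Then invoke the first-order variational inequality of the frozen problem along $A^\star+\epsilon(A-A^\star)$.
\end{itemize}
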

	
The construction in Proposition~\ref{prop:main_fixed_point_characterization} has a direct computational interpretation. For a candidate stationary mean $\theta$, one first solves the regime-switching obstacle problem \eqref{eq:main_thm:free_boundary_problem} and obtains a candidate intervention band in each regime. Then, one computes the invariant distribution generated by the corresponding reflected dynamics through \eqref{eq:regime_wise_CDF}. The map $\Phi$ returns the stationary mean induced by those bands, and the consistency condition $\theta=\Phi(\theta)$ selects the regulator's endogenous long-run operating level.
	
	The proof uses an auxiliary stationary mean-field game with the two scalar parameters $\theta$ and $\lambda$. This auxiliary game is solely a mathematical device for solving the regulator's control problem. For fixed $(\theta,\lambda)$, the associated ergodic singular control problem is solved by exploiting its connection with a Dynkin game. The consistency condition for $\lambda$ yields the explicit relation \eqref{eq:lagrange_multiplier}, so that the construction reduces to the scalar fixed-point equation in \eqref{eq:fixed_point_map}. Section~\ref{sec:mathematical_justification} provides the mathematical justification of this procedure and verifies that the resulting reflected policy solves the original stationary mean-field control problem.   
	
	
	\section{Economic analysis and policy implications}\label{sec:economic}
	
	\subsection{Baseline simulation}\label{ssec:economic-simulation}
	
	We use the following baseline calibration. The stock \(X\) is measured in Gt of permits in circulation and the running objective in billions of~\euro~(Bd\euro). We take \(\rho=1\)~\euro/per year, so that the objective remains in Bd\euro. For the uncontrolled stock dynamics, we set \(\delta=0.04\), corresponding to a characteristic adjustment horizon of about \(25\) years, and \(\sigma=0.5\) Gt/\(\sqrt{\mathrm{y}}\), chosen deliberately large in order to generate sufficient fluctuations in the state variable. We fix the long-run target at \(\bar\theta=0.5\) Gt, corresponding to the consumption per year of carbon emissions by the power sector of the ETS. Power utilities are the players with the highest frequency of trading and for this reason, their need in terms of liquidity is the dominant factor. For the quadratic penalties, we take \(\phi=100\) and \(\psi=100\), meaning and equal objective of reducing the variance of the allowances in circulation and the stabilization around the anticipated target. Finally, we set \(K_+=K_- = 10\) Bd\euro/Gt, i.e.\ about \(10\)~\euro/t, so that interventions are costly but not prohibitive, avoiding constant interventions.
	
	The values set above for the parameters of our model do not have the pretention to be finely calibrated on detailed data of the EU~ETS. They rely on preceeding studies ot the ETS regulation provided in \cite{Aid23}, \cite{Kollenberg16}, \cite{taschini2019ETS} and more recently \cite{Aid26}. They are set in such a way that orders of magnitude are consistent with ETS market global dynamics.
	
	With these parameters values, the Figure~\ref{fig:illustration} shows the value functions $U_i$, the densities $\mu_i$ and an example of a trajectory of the processes of allowances in circulation $X_t$, the regime state variable $\epsilon_t$, the interventions processes $A^\pm_t$. Furthermore, for these values, the corresponding thresholds in Gt are given by 
	\[
	a_+(1) = 0.230, \quad a_-(1) = 0.765, \quad a_+(2)= 0.230, \quad a_-(2)= 0.760.
	\]
	First we observe that  $a_+(1)$ and $a_+(2)$ are numerically close (in fact, up to a precision higher than $10^{-6}$). This proximity makes an algorithm based on fixing {\em ex ante} the order in which finding the fixed point more difficult to implement and justifies the numerical strategy of directly solving the variational inequality using a grid discretisation of space and an iteration policy algorithm. Second, we observe that the range of low and high values are of the order of magnitude of the true band of no-intervention, namely $0.4$ and $0.8$~Gt.

These results should not be interpreted as the claim that our model {\em does} reproduce the observed thresholds of the MSR, but as the fact that this reduced model {\em can} generate reasonable values for the intervention band of the MSR design.
	
\begin{figure}[!hbt]
\begin{center}
\begin{tabular}{ccc}
(a) & (b)  & (c) \\
\includegraphics[width=0.3\textwidth]{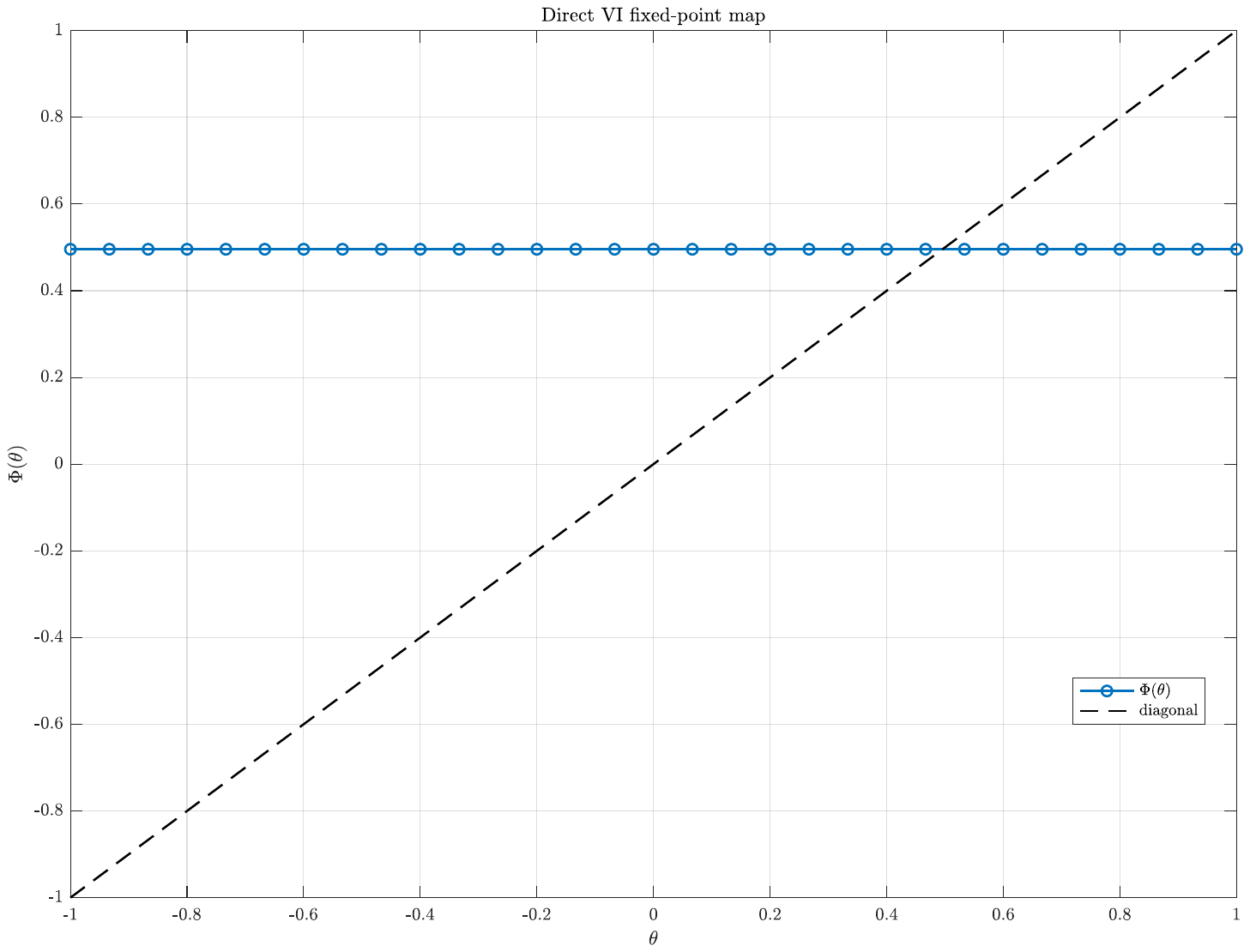} &
\includegraphics[width=0.3\textwidth]{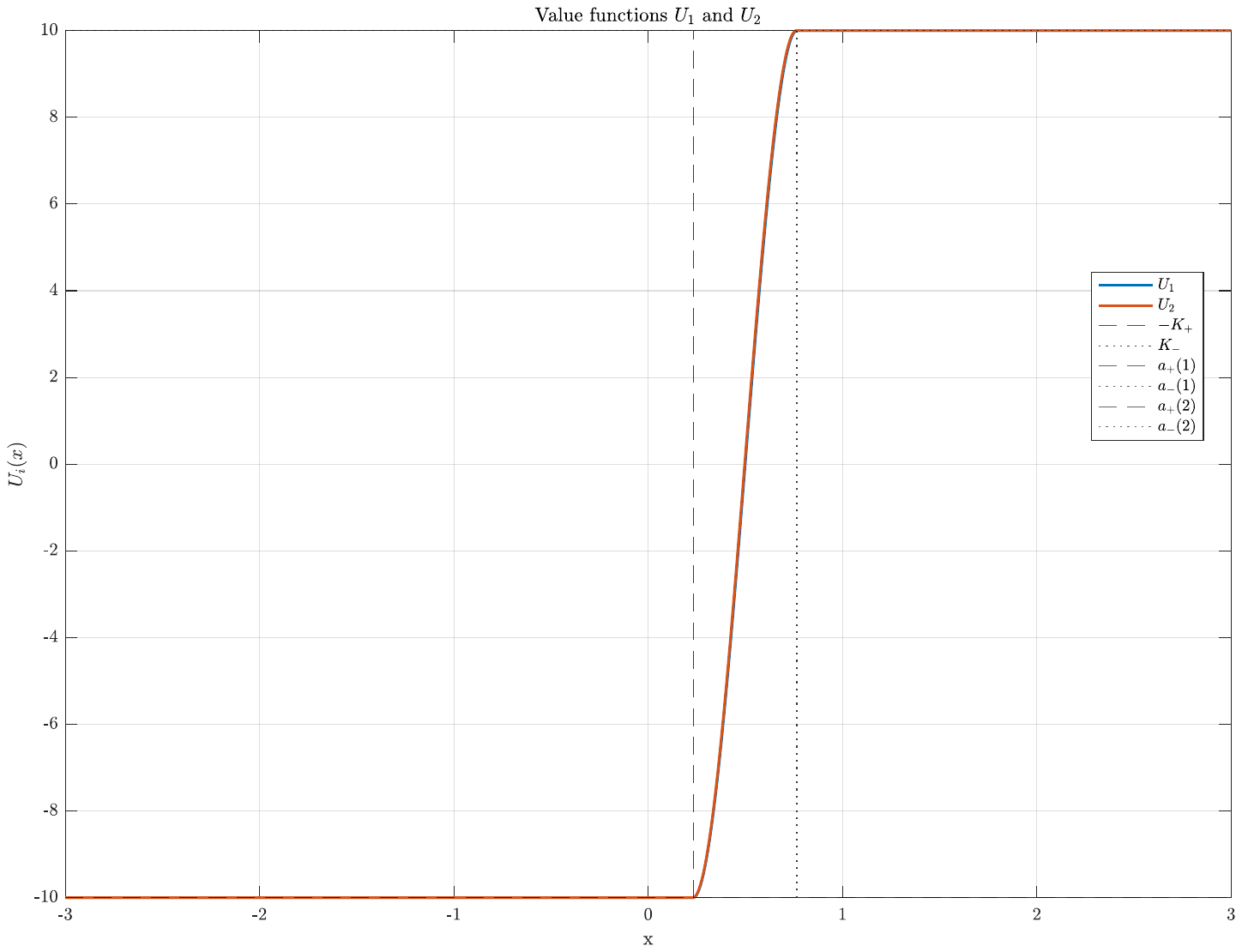}&
\includegraphics[width=0.3\textwidth]{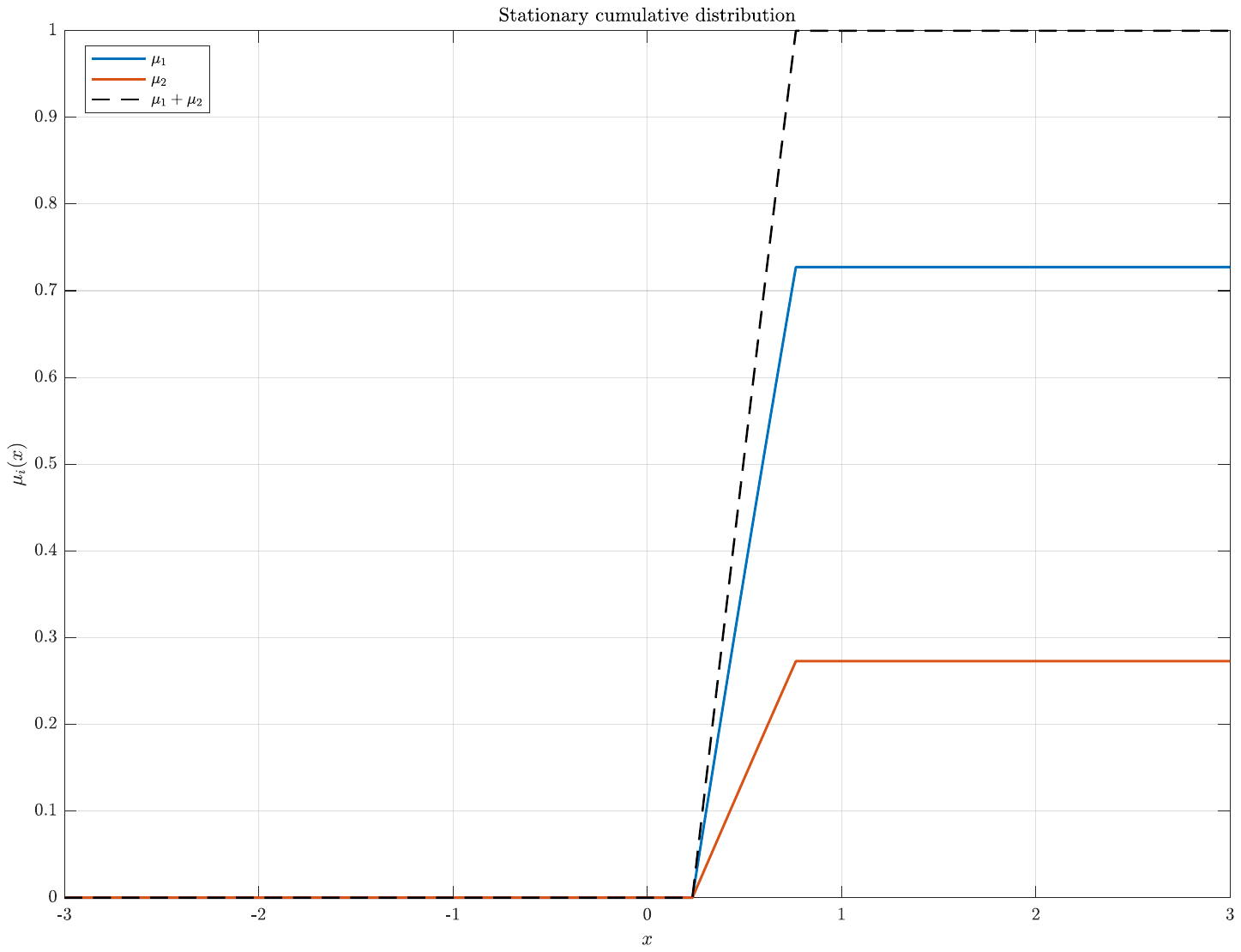}\\
(d) & (e) &   \\
\includegraphics[width=0.3\textwidth]{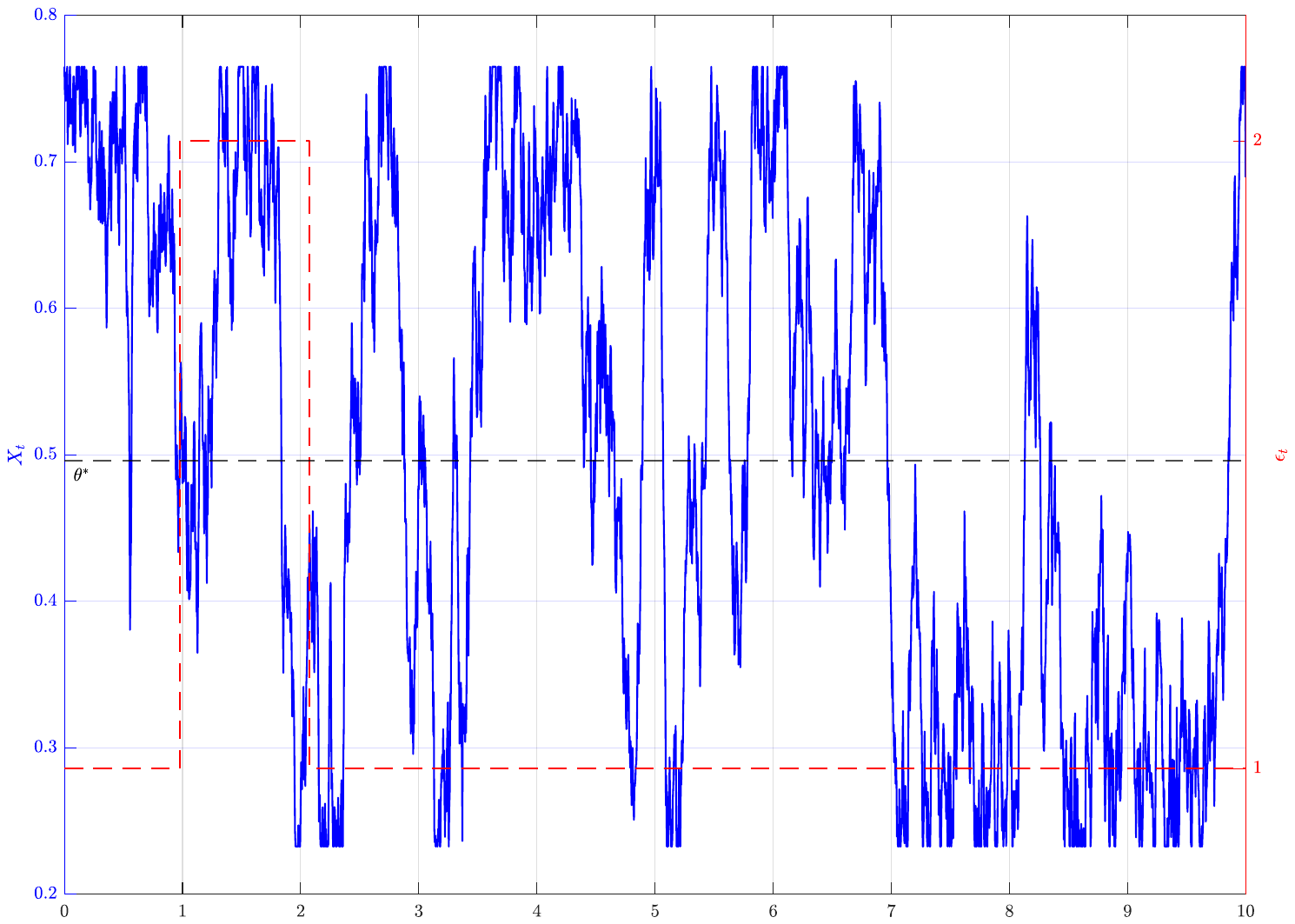} &
\includegraphics[width=0.3\textwidth]{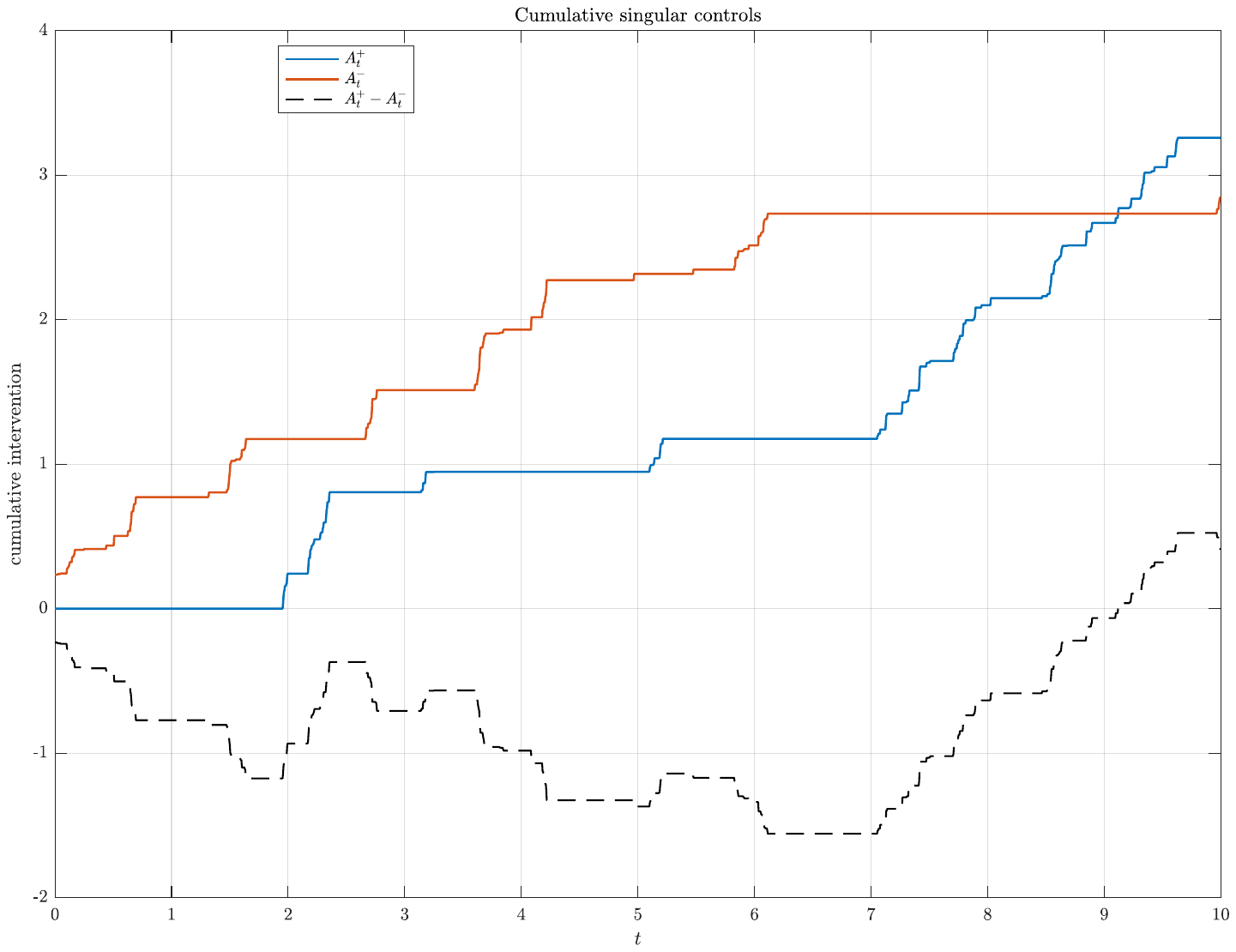}  \\
\end{tabular}
\end{center}
\caption{(a) The function $\Phi(\theta)$ (b) the value functions $U_1$ and $U_2$ (c) the functions $\mu_1$ and $\mu_2$ (d) a trajectory of allowances in circulation $X_t$ together with the regime state $\epsilon_t$ (e) the corresponding trajectory of the processes $A^+_t$ and $A^-_t$. Parameters values: $\delta_1=0.04$, $\delta_2=0.01$, $\gamma_1=1/8$, $\gamma_2=1/3$, $\sigma=0.5$, $\rho=1$ $\phi=100$, $\psi=100$, $\bar\theta=0.5$, $K_+ = K_- =10$.}
\label{fig:illustration}
\end{figure}
	
	\subsection{Comparative statics}\label{ssec:economic-comparison}
	
	Figure~\ref{fig:thresholds} reports comparative statics for the four intervention thresholds. The purpose of this exercise is not to provide a full calibration of the EU ETS, but rather to identify which parameters of the model are most relevant for the size and position of the optimal intervention band. The baseline parameters are $\delta_1=0.04$, $\delta_2=0.01$, $\gamma_1=0.50$, $\gamma_2=0.10$, $\sigma=0.5$, $\rho=1$, $\phi=\psi=100$, $\bar\theta=0.5$, and $K_+=K_-=10$. We consider four perturbations around this benchmark: the volatility $\sigma$, the frequency of crisis episodes through $\gamma_1$, their average duration through $\gamma_2^{-1}$, and the intervention costs $K_+$ and $K_-$. 
	
	The main conclusion is that the volatility of the allowance stock is the dominant parameter for the sizing of the intervention band. Higher volatility substantially widens the inaction region, while changes in the frequency or duration of non-anticipated crisis regimes have a much smaller effect on the location of the thresholds. The intervention costs have the expected directional effect, with larger costs making the regulator less willing to intervene on the corresponding side of the band, but their quantitative impact remains moderate in the range considered. Finally, across all experiments, the relative distance between regime-dependent thresholds is remarkably stable. In particular, these comparative statics mainly affect the common level or width of the intervention bands, rather than generating large differences between the thresholds associated with the two regimes. This result sustains a MSR design without considering distinct ranges of operations depending on the macro-economic regime of the economy.
	
	\begin{figure}[!hbt]
		\begin{center}
			\begin{tabular}{ccc}
				(a) & (b) & (c)  \\
				\includegraphics[width=0.3\textwidth]{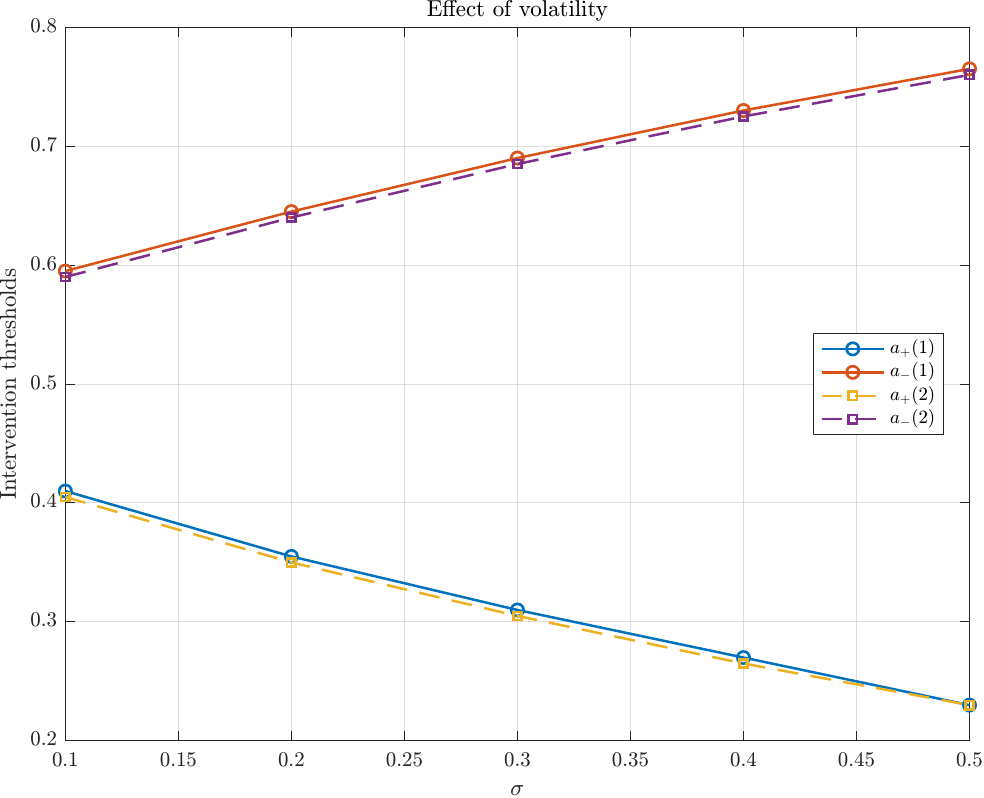} &
				\includegraphics[width=0.3\textwidth]{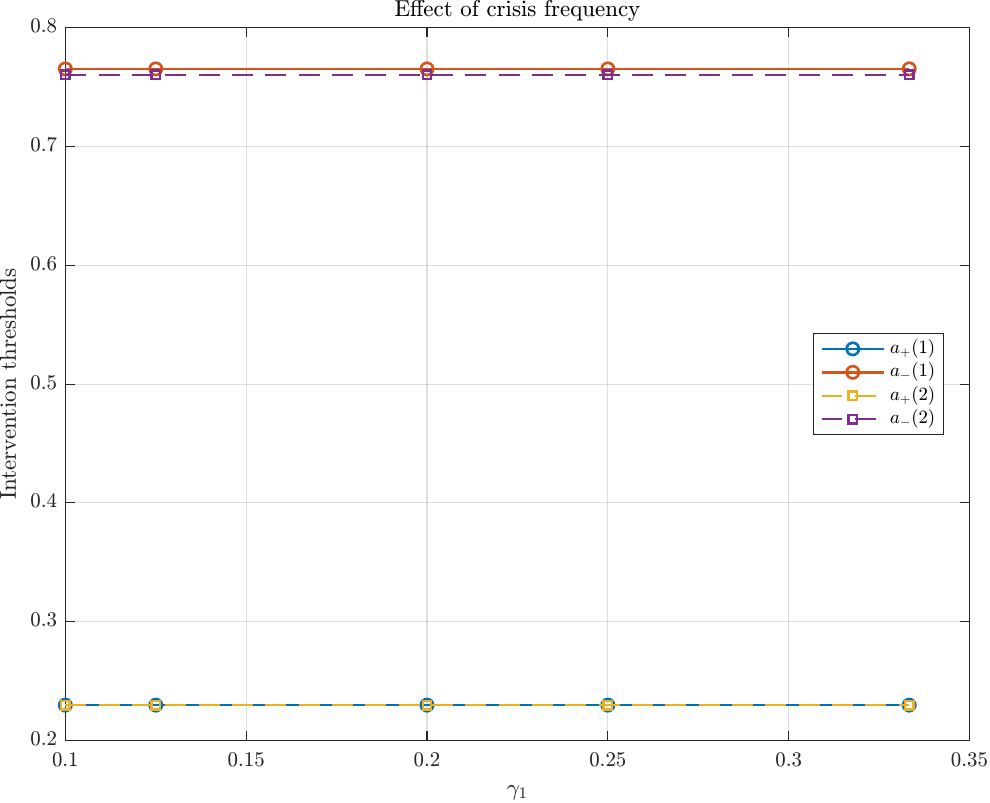}&
				\includegraphics[width=0.3\textwidth]{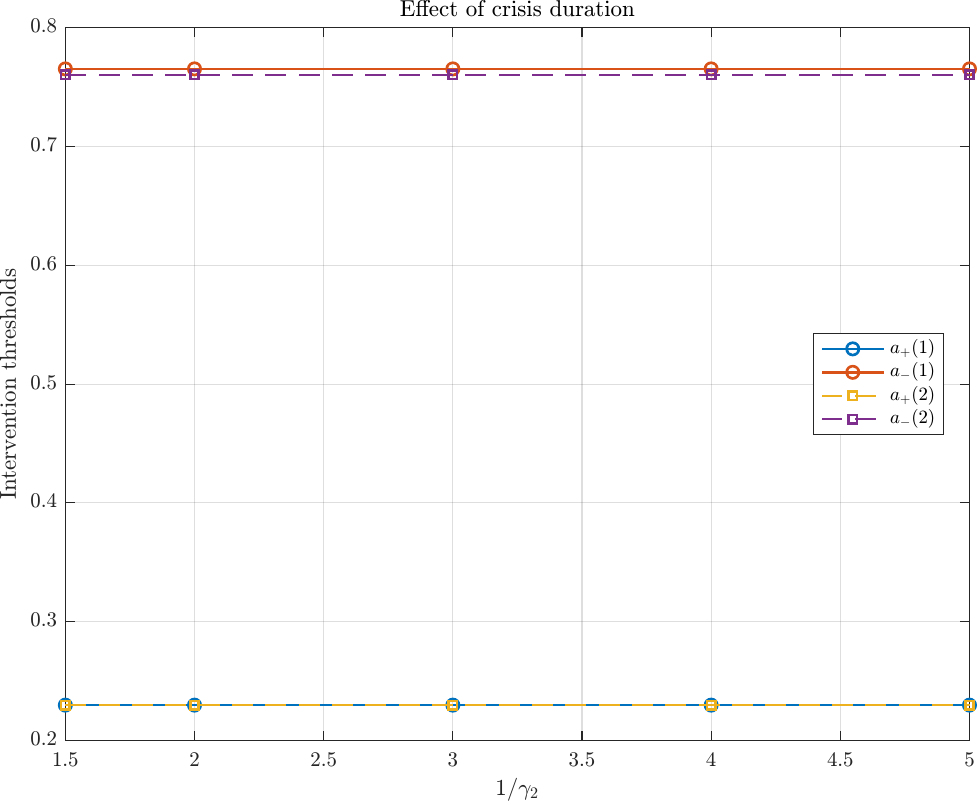}\\
				(d) & (e) &   \\
				\includegraphics[width=0.3\textwidth]{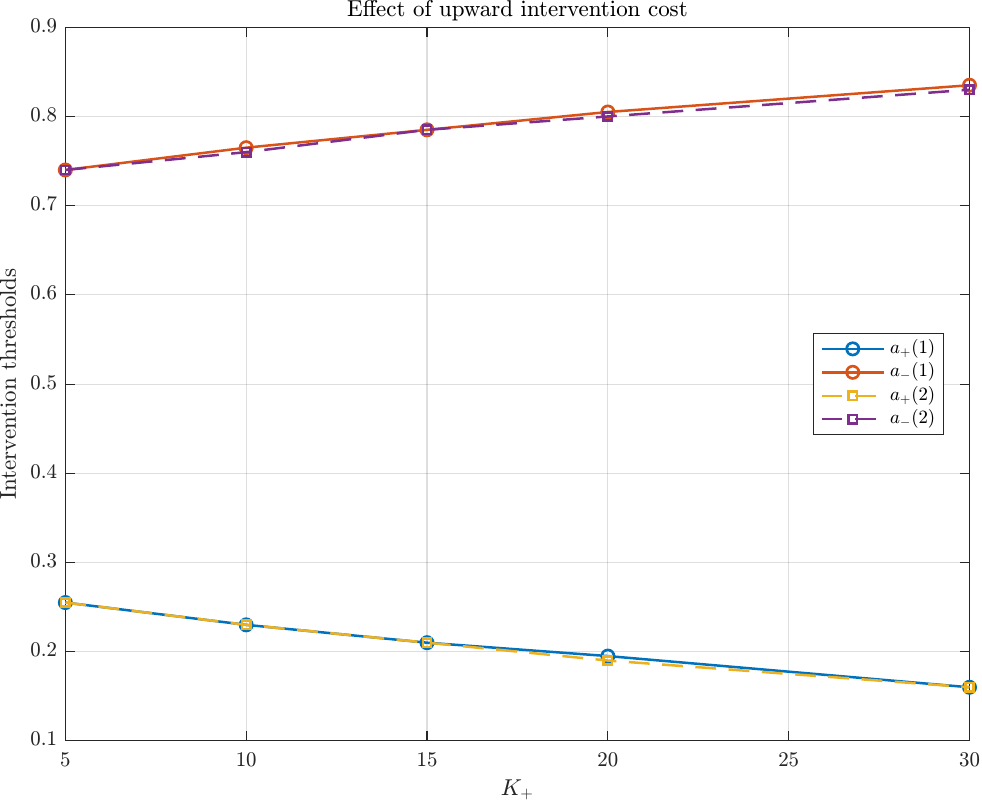}&
				\includegraphics[width=0.3\textwidth]{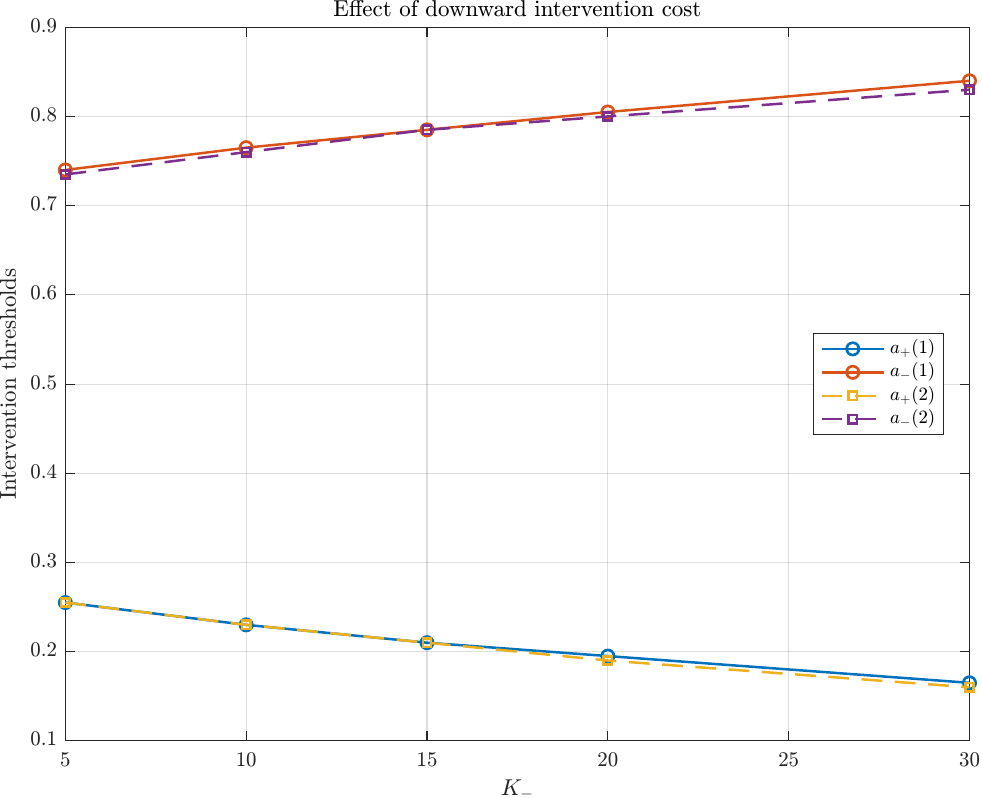} & \\
			\end{tabular}
		\end{center}
		\caption{Intervention thresholds comparative statics (a) Effect of the volatility $\sigma$ (b) Effect of the crisis frequency (c) Effect of crisis duration (d) Effect of intervention cost $K_+$ (e) Effect of intervention cost $K_-$}
		\label{fig:thresholds}
	\end{figure}
	\begin{table}[!hbt]
		\centering
		\small
		\begin{tabular}{cccccccc}
			\hline
			$\delta_2$ & $\pi_2$ & $a_+(1)$ & $a_+(2)$ & $b_+$ & $a_-(1)$ & $a_-(2)$ & $b_-$ \\
			\hline
			0.000 & 0.010 &  0.235 &  0.230 &  0.235 &  0.770 &  0.760 &  0.770 \\
			0.020 & 0.010 &  0.235 &  0.230 &  0.235 &  0.770 &  0.765 &  0.770 \\
			0.040 & 0.010 &  0.235 &  0.230 &  0.235 &  0.770 &  0.765 &  0.770 \\
			0.060 & 0.010 &  0.235 &  0.235 &  0.235 &  0.770 &  0.770 &  0.770 \\
			\hline
			0.000 & 0.094 &  0.235 &  0.230 &  0.235 &  0.770 &  0.760 &  0.765 \\
			0.020 & 0.094 &  0.235 &  0.230 &  0.235 &  0.770 &  0.765 &  0.770 \\
			0.040 & 0.094 &  0.235 &  0.230 &  0.235 &  0.770 &  0.765 &  0.770 \\
			0.060 & 0.094 &  0.235 &  0.235 &  0.235 &  0.770 &  0.770 &  0.770 \\
			\hline
			0.000 & 0.200 &  0.235 &  0.230 &  0.230 &  0.770 &  0.760 &  0.765 \\
			0.020 & 0.200 &  0.235 &  0.230 &  0.235 &  0.770 &  0.765 &  0.765 \\
			0.040 & 0.200 &  0.235 &  0.230 &  0.235 &  0.770 &  0.765 &  0.770 \\
			0.060 & 0.200 &  0.235 &  0.235 &  0.235 &  0.770 &  0.770 &  0.770 \\
			\hline
		\end{tabular}
		\caption{Intervention thresholds in the two-regime model and in the averaged one-regime model. The two-regime thresholds are denoted by $a_\pm(1)$ and $a_\pm(2)$. The averaged one-regime thresholds are denoted by $b_\pm$, with $\bar\delta=(1-\pi_2)\delta_1+\pi_2\delta_2$.}
		\label{tab:average_regime_thresholds}
	\end{table}
	
	We finally compare the two-regime model with a reduced one-regime approximation in which the regime-dependent mean-reversion rates are replaced by their stationary average. We fix $\delta_1=0.08$ and consider selected values of $\delta_2$ below $\delta_1$, namely $\delta_2 \in \{0,0.02,0.04,0.06\}.$ For the transition intensities, rather than reporting $\gamma_1$ and $\gamma_2$ directly, we use the stationary probability of regime $2$, $\pi_2= \gamma_1/(\gamma_1+\gamma_2),$ and consider the values $\pi_2\in\{0.010,0.094,0.200\}.$ For each pair $(\delta_2,\pi_2)$, the averaged one-regime model is obtained by setting
	$\bar\delta(\delta_2,\pi_2):=(1-\pi_2)\delta_1+\pi_2\delta_2.$ Table~\ref{tab:average_regime_thresholds} reports the four thresholds of the two-regime model, denoted by $a_\pm(1)$ and $a_\pm(2)$, together with the thresholds $b_\pm$ obtained in the averaged one-regime approximation.

	The comparison shows that the two-regime thresholds are very close to those obtained in the averaged one-regime model. Across the reported configurations, the lower intervention thresholds remain essentially unchanged, with $a_+(1)$ and $b_+$ equal to either $0.230$ or $0.235$~Gt, while $a_+(2)$ differs by at most one grid step. The same conclusion holds for the upper thresholds: the values remain concentrated between $0.760$ and $0.770$~Gt, and the averaged threshold $b_-$ lies within the range spanned by $a_-(1)$ and $a_-(2)$. Thus, for this calibration, non-anticipated regime-switching shocks have only a limited quantitative effect on the optimal injection and withdrawal thresholds. Their main effect is already captured by the stationary average decay rate $\bar\delta$. This suggests that the two-regime specification is useful for representing crisis dynamics and regime-dependent paths, but it does not substantially alter the optimal size of the intervention band relative to the averaged one-regime benchmark.
	
	\medskip
	
	To assess whether the proximity of the intervention thresholds also translates into similar market outcomes, we compare the two policies in terms of liquidity and intervention intensity. Both policies are evaluated under the true two-regime dynamics. The first policy is the optimal regime-dependent policy $A^*$, while the second, denoted by $A^{\mathrm{av}}$, uses the common band obtained from the averaged model with $\bar\delta=(1-\pi_2)\delta_1+\pi_2\delta_2.$ We measure liquidity shortage by $L(A):=\E[(\bar\theta-X)^+],$ that is, the stationary expected shortfall of allowances relative to the target operating level $\bar\theta$. The left panel of Table~\ref{tab:policy_comparison} shows that the two policies generate almost identical liquidity outcomes. Across all configurations, the difference between $L^{\mathrm{av}}$ and $L^*$ is below $1.8\%$, and is generally much smaller.
	
	We also compare the long-run intervention intensities
	\[
	I_\pm(A):=\lim_{T\to\infty}\frac{1}{T}\E[A_T^\pm],
	\]
	where $I_+$ measures annual releases of allowances and $I_-$ annual placements into the reserve. The right panel of Table~\ref{tab:policy_comparison} again shows very small differences: the intervention intensities generated by the averaged policy differ from those of the optimal regime-dependent policy by less than $0.6\%$ over all configurations.
	
	\begin{table}[!hbt]
		\centering
		
		\begin{minipage}[t]{0.36\textwidth}
			\vspace{0pt}
			\centering
			\textbf{(a) Liquidity shortfall}
			\smallskip
			
			\scriptsize
			\setlength{\tabcolsep}{2.5pt}
			\begin{tabular}{ccccc}
				\hline
				$\delta_2$ & $\pi_2$ & $L^*$ & $L^{\mathrm{av}}$ & $\Delta L/L^*$ (\%) \\
				\hline
				0.000 & 0.010 & 69 & 69 & -0.05 \\
				0.020 & 0.010 & 69 & 69 & -0.04 \\
				0.040 & 0.010 & 69 & 69 & -0.04 \\
				0.060 & 0.010 & 69 & 69 &  0.00 \\
				\hline
				0.000 & 0.094 & 69 & 69 & -0.43 \\
				0.020 & 0.094 & 69 & 69 & -0.34 \\
				0.040 & 0.094 & 69 & 69 & -0.34 \\
				0.060 & 0.094 & 69 & 69 &  0.00 \\
				\hline
				0.000 & 0.200 & 69 & 68 & -1.76 \\
				0.020 & 0.200 & 69 & 68 & -1.57 \\
				0.040 & 0.200 & 69 & 68 & -1.57 \\
				0.060 & 0.200 & 69 & 68 & -0.84 \\
				\hline
			\end{tabular}
		\end{minipage}
		\hfill
		\begin{minipage}[t]{0.61\textwidth}
			\vspace{0pt}
			\centering
			\textbf{(b) Intervention intensities}
			\smallskip
			
			\scriptsize
			\setlength{\tabcolsep}{2.2pt}
			\begin{tabular}{cccccccc}
				\hline
				$\delta_2$ & $\pi_2$ & $I^*_+$ & $I^{\mathrm{av}}_+$ & $\Delta I_+/I^*_+$ (\%)
				& $I^*_-$ & $I^{\mathrm{av}}_-$ & $\Delta I_-/I^*_-$ (\%) \\
				\hline
				0.000 & 0.010 & 250 & 250 &  0.06 & 212 & 211 & -0.24 \\
				0.020 & 0.010 & 250 & 250 &  0.07 & 212 & 211 & -0.23 \\
				0.040 & 0.010 & 250 & 250 &  0.07 & 211 & 211 & -0.23 \\
				0.060 & 0.010 & 250 & 250 &  0.07 & 211 & 211 & -0.23 \\
				\hline
				0.000 & 0.094 & 249 & 249 & -0.02 & 214 & 213 & -0.31 \\
				0.020 & 0.094 & 249 & 249 &  0.07 & 213 & 212 & -0.22 \\
				0.040 & 0.094 & 249 & 250 &  0.07 & 212 & 212 & -0.23 \\
				0.060 & 0.094 & 250 & 250 &  0.07 & 212 & 211 & -0.23 \\
				\hline
				0.000 & 0.200 & 247 & 247 & -0.17 & 216 & 215 & -0.51 \\
				0.020 & 0.200 & 248 & 248 &  0.02 & 215 & 214 & -0.33 \\
				0.040 & 0.200 & 249 & 249 &  0.02 & 214 & 213 & -0.36 \\
				0.060 & 0.200 & 249 & 249 &  0.01 & 212 & 212 & -0.37 \\
				\hline
			\end{tabular}
		\end{minipage}
		
		\caption{Comparison between the optimal regime-dependent policy and the averaged one-regime policy, both evaluated under the true two-regime dynamics. Left: stationary liquidity shortfall $L(A)=\E[(\bar\theta-X)^+]$, in Mt. Right: long-run intervention intensities $I_\pm(A)$, in Mt/year. Relative differences are measured with respect to the optimal regime-dependent policy.}
		\label{tab:policy_comparison}
	\end{table}
	
	Hence, for the EU ETS calibration considered here, accounting explicitly for the current economic regime has  little effect on the operation of the reserve. A two-threshold rule based on the stationary average depletion rate reproduces not only the optimal intervention band rather accuratly, but also the resulting liquidity provision and intervention volumes.
	
	
	\section{Analysis of the optimal reserve policy}\label{sec:mathematical_justification}
	
	This section provides the mathematical justification of the fixed-point construction in Proposition~\ref{prop:main_fixed_point_characterization}. The argument is based on a Lagrangian decomposition of the regulator's stationary mean-field control problem. We first freeze a candidate stationary mean and solve the resulting regime-switching singular control problem. We then impose stationary consistency and finally verify that the policy selected by the fixed point is optimal for the original planner's problem. The complete proofs are collected in Appendix~\ref{app:ergodic_singular_ctrl} for the frozen singular control problem, Appendix~\ref{app:fixed_point} for the fixed-point analysis, and Appendix~\ref{app:stationary_MFC_problem} for the verification of the stationary planner's problem.
	
	\subsection{Lagrangian reduction of the planner's problem}\label{subsec:lagrangian_reduction}
	
	The running cost in \eqref{eq:cost:ergodic} depends on the control both directly, through the state process, and indirectly, through the stationary mean generated by the control. To separate these two effects, fix two scalars $\theta,\lambda\in\R$. For $A\in\cB$, consider the auxiliary criterion
	\begin{equation}\label{eq:mfg:payoff}
		\bfJ(A,\theta,\lambda)
		\coloneqq
		\varlimsup_{T\to\infty}\frac{1}{T}\E_{x,i}\left[
		\int_0^T\Big(l(X^A_t,\theta)+\lambda(\theta-X^A_t)\Big) \dd t
		+K_+A^+_T+K_-A^-_T
		\right].
	\end{equation}
	Here, $\theta$ is a candidate value for the stationary mean of the allowance balance, while $\lambda$ is the shadow price associated with the requirement that the mean generated by the controlled dynamics coincide with $\theta$. If this consistency condition holds, the multiplier term has zero stationary average. Nevertheless, it changes the marginal incentives in the frozen control problem and thereby internalizes the dependence of the planner's objective on the endogenous stationary mean.
	
	The auxiliary problem has the structure of a stationary mean-field game. It should not be interpreted as a different economic model: it is introduced solely as a device for decentralizing and solving the regulator's optimization problem.
	
	\begin{definition}\label{def:mfg:solution}
		A triple $(A^\star,\theta^\star,\lambda^\star)\in\cB\times\R^2$ is an equilibrium of the auxiliary stationary MFG if the following conditions hold:
		\begin{enumerate}[label=\roman*)]
			\item $A^\star$ minimizes the frozen criterion, namely
			\[
			\bfJ(A^\star,\theta^\star,\lambda^\star)
			\leq \bfJ(A,\theta^\star,\lambda^\star),
			\qquad A\in\cB;
			\]
			\item the optimally controlled process $(X^{A^\star},\varepsilon)$ admits a unique limiting stationary distribution $p^{\theta^\star,\lambda^\star}_\infty\in\cP(\cO)$ satisfying
			\begin{align}
				\theta^\star
				&=\sum_{i=1}^2\int_{\R}x\,p^{\theta^\star,\lambda^\star}_\infty(\dd x,i),
				\label{mfg:cons:theta}\\
				\lambda^\star
				&=-\sum_{i=1}^2\int_{\R}l_\theta(x,\theta^\star)\,
				p^{\theta^\star,\lambda^\star}_\infty(\dd x,i).
				\label{mfg:cons:lambda}
			\end{align}
		\end{enumerate}
	\end{definition}
	
	The first consistency condition requires the frozen mean to agree with the stationary mean generated by the optimal frozen policy. The second is the first-order condition associated with the effect of the endogenous mean on the planner's running cost. Together, these conditions select the auxiliary problem whose optimal policy also solves the original stationary MFC problem.
	
	The remainder of the section follows three steps. First, we solve \eqref{eq:mfg:payoff} for fixed $(\theta,\lambda)$. Second, we show that the two consistency conditions reduce to a scalar fixed-point equation. Third, we use convexity and the multiplier condition to verify optimality for the original planner's problem.
	
	\subsection{The optimal policy for fixed parameters}
	\label{subsec:frozen_problem}
	
	Fix $(\theta,\lambda)\in\R^2$. The frozen problem is an ergodic singular control problem with regime switching. Its solution is obtained from a Dynkin game for the marginal value of the allowance balance, following the singular-control--Dynkin-game connection developed in \cite{dianetti2026ergodic}.
	
	For any $(x,i)\in\cO$, define
	\begin{equation}\label{eq:dynkin}
		\begin{aligned}
			U^{\theta,\lambda}(x,i) \coloneqq \inf_{\tau\geq0}\sup_{\eta\geq0}\E_{x,i}\Bigg[ \int_0^{\tau\wedge\eta} e^{-\int_0^t\delta_{\varepsilon_s} \dd s} \big(l_x(X^0_t,\theta)-\lambda\big) \dd t +K_-e^{-\int_0^\tau\delta_{\varepsilon_s}\dd s}\ind_{\{\tau<\eta\}} -K_+e^{-\int_0^\eta\delta_{\varepsilon_s} \dd s}\ind_{\{\eta<\tau\}} \Bigg],
		\end{aligned}
	\end{equation}
	where $\tau$ and $\eta$ range over all $\bbF$-stopping times. The factor
	$e^{-\int_0^t\delta_{\varepsilon_s}ds}$ is not an economic discount factor. It is the derivative of the uncontrolled flow with respect to its initial condition:
	\[
	\frac{\partial X^{0,x}_t}{\partial x}
	=e^{-\int_0^t\delta_{\varepsilon_s} \dd s}.
	\]
	Thus, $U^{\theta,\lambda}$ is the marginal value of an additional unit of allowance balance in the frozen problem.
	
	Introduce the stopping and continuation regions
	\begin{multline}\label{eq:stopping_region}
		\cS^{\theta,\lambda}_+
		=\{(x,i)\in\cO:U^{\theta,\lambda}(x,i)\leq-K_+\},
		\qquad
		\cS^{\theta,\lambda}_-
		=\{(x,i)\in\cO:U^{\theta,\lambda}(x,i)\geq K_-\},\\
		\cC^{\theta,\lambda}
		=\{(x,i)\in\cO:-K_+<U^{\theta,\lambda}(x,i)<K_-\}.
	\end{multline}
	At the lower obstacle $-K_+$, the marginal benefit of increasing the allowance balance equals the marginal cost of releasing an allowance. At the upper obstacle $K_-$, the marginal benefit of reducing the balance equals the marginal cost of placing an allowance into the reserve. Between the two obstacles, no intervention is optimal.
	
	\begin{proposition}[Marginal value and intervention thresholds]
		\label{prop:dynkin_game}
		The function $U^{\theta,\lambda}$ in \eqref{eq:dynkin} has the following properties.
		\begin{enumerate}
			\item\label{prop:dynkin_game:value}
			The Dynkin game admits a value and a saddle point.
			
			\item\label{prop:dynkin_game:thresholds}
			For each $i\in\cE$, there exist finite thresholds
			$a_+(i,\theta,\lambda)<a_-(i,\theta,\lambda)$ such that
			\begin{multline}\label{eq:filtered:regions_U}
				\cS^{\theta,\lambda}_+
				=\{(x,i)\in\cO:x\leq a_+(i,\theta,\lambda)\},
				\qquad
				\cS^{\theta,\lambda}_-
				=\{(x,i)\in\cO:x\geq a_-(i,\theta,\lambda)\},\\
				\cC^{\theta,\lambda}
				=\{(x,i)\in\cO:a_+(i,\theta,\lambda)<x<a_-(i,\theta,\lambda)\}.
			\end{multline}
			
			\item\label{prop:dynkin_game:free_boundary_pde}
			For each $i\in\cE$,
			$U^{\theta,\lambda}(\cdot,i)\in\dC^1(\R)\cap
			\dC^2(\R\setminus\{a_+(i,\theta,\lambda),a_-(i,\theta,\lambda)\})$ and
			\begin{equation}\label{eq:dynkin:pde}
				\left\{
				\begin{aligned}
					-\delta_iU^{\theta,\lambda}(x,i)
					+\cL_{(X^0,\varepsilon)}U^{\theta,\lambda}(x,i)
					+l_x(x,\theta)-\lambda&=0,
					&&a_+(i,\theta,\lambda)<x<a_-(i,\theta,\lambda),\\
					U^{\theta,\lambda}(x,i)&=-K_+,
					&&x\leq a_+(i,\theta,\lambda),\\
					U^{\theta,\lambda}(x,i)&=K_-,
					&&x\geq a_-(i,\theta,\lambda).
				\end{aligned}
				\right.
			\end{equation}
			Moreover,
			\begin{equation}\label{eq:dynkin:inequalities}
				\left\{
				\begin{aligned}
					-\delta_iU^{\theta,\lambda}
					+\cL_{(X^0,\varepsilon)}U^{\theta,\lambda}
					+l_x(\cdot,\theta)-\lambda&\geq0,
					&&x\geq a_+(i,\theta,\lambda),\\
					-\delta_iU^{\theta,\lambda}
					+\cL_{(X^0,\varepsilon)}U^{\theta,\lambda}
					+l_x(\cdot,\theta)-\lambda&\leq0,
					&&x\leq a_-(i,\theta,\lambda).
				\end{aligned}
				\right.
			\end{equation}
		\end{enumerate}
	\end{proposition}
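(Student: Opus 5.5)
We plan to treat the Dynkin game \eqref{eq:dynkin} as an optimal stopping game for the Markov process $(X^0,\varepsilon)$, killed at the state-dependent rate $\delta_{\varepsilon_t}$. We then derive each item of Proposition~\ref{prop:dynkin_game} from general Dynkin-game theory combined with one-dimensional ODE arguments, regime by regime.

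\textbf{Step 1: value and saddle point.} The killing rate satisfies $\delta_{\varepsilon_t}\geq\delta_2>0$ by Assumption~\ref{assumption:delta}. The integrand $l_x(x,\theta)-\lambda=\rho+2\phi(x-\theta)-\lambda$ has linear growth, and $\E_{x,i}|X^0_t|\leq |x|+C\sqrt{t}$ since $X^0$ is an Ornstein--Uhlenbeck-type process with regime-dependent speed. Hence the reward
\[
\int_0^{t} e^{-\int_0^s\delta_{\varepsilon_u}\dd u}\big(l_x(X^0_s,\theta)-\lambda\big)\dd s
\]
is uniformly integrable, with a supremum over $t$ that is integrable, and the discounted payoffs vanish as $t\to\infty$. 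The obstacles are ordered, $-K_+<K_-$. The classical results for Markovian Dynkin games with continuous strong Markov processes and ordered obstacles (Ekström--Peskir type) then give a value that is continuous in $x$. They also give the saddle point
\[
\tau^\star=\inf\{t:U(X^0_t,\varepsilon_t)\geq K_-\},\qquad \eta^\star=\inf\{t:U(X^0_t,\varepsilon_t)\leq -K_+\}.
\]
In particular $-K_+\leq U\leq K_-$.

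\textbf{Step 2: monotonicity and threshold structure.} Run the flows $X^{0,x}$ and $X^{0,y}$ with $x<y$ using the same $(W,\varepsilon)$. Then $X^{0,x}_t\leq X^{0,y}_t$ pathwise, while the discount factor $e^{-\int\delta_{\varepsilon}}$ does not depend on $x$. Since $l_x(\cdot,\theta)$ is increasing, fixing any pair of stopping times shows that $x\mapsto U(x,i)$ is non-decreasing. Define $a_+(i)\coloneqq\sup\{x:U(x,i)=-K_+\}$ and $a_-(i)\coloneqq\inf\{x:U(x,i)=K_-\}$; monotonicity then yields the interval structure \eqref{eq:filtered:regions_U}. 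We prove finiteness by contradiction. If $a_-(i)=+\infty$, the minimizer never stops for large $x$, and the maximizer may take $\eta=\infty$. The resulting lower bound on $U(x,i)$ is of the order of $2\phi x/(\delta_1+\cdots)$ minus constants, which tends to $+\infty$ and contradicts $U\leq K_-$. The symmetric argument gives $a_+(i)>-\infty$. Continuity of $U(\cdot,i)$ together with $-K_+<K_-$ gives $a_+(i)<a_-(i)$.

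\textbf{Step 3: ODE, smooth fit, and the inequalities \eqref{eq:dynkin:inequalities}.} On $(a_+(i),a_-(i))$ neither player stops. A standard localization argument (the process started there stays in the region up to an exit time) shows that $U(\cdot,i)$ is a continuous viscosity solution of
\[
\tfrac{\sigma^2}{2}U_{xx}-\delta_ixU_x-\delta_iU+q_{ij}\big(U(\cdot,j)-U(\cdot,i)\big)+l_x-\lambda=0 .
\]
Here $U(\cdot,j)$ is continuous and enters only at zeroth order. Since the equation is uniformly elliptic, this gives $\dC^2$ regularity inside the interval, whatever the relative position of the thresholds of regime $j$. Outside the interval, $U$ is constant, so $\dC^2(\R\setminus\{a_\pm(i)\})$ holds. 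For the $\dC^1$ (smooth-fit) property at $a_\pm(i)$ we argue by contradiction with the regularity of the boundary points for a nondegenerate diffusion. If, say, $U_x(a_-(i)-,i)<0$ were possible, a small excursion of $X^0$ from $a_-(i)$ would give the maximizer a first-order gain; combined with monotonicity this contradicts optimality, so $U_x(a_-(i)-,i)=0$. This is the usual argument applied to each player separately.

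For the inequalities, take $x>a_-(i)$, where $U(\cdot,i)\equiv K_-$ near $x$. The operator reduces to
\[
-\delta_iK_-+q_{ij}\big(U(x,j)-K_-\big)+l_x(x,\theta)-\lambda .
\]
If this quantity were negative, the minimizer would strictly improve on immediate stopping by waiting a short time $h$. Dynkin's formula on $[0,h\wedge\eta^\star]$, with the maximizer's best response, shows that the expected payoff is $K_-$ plus $h$ times this negative quantity plus $o(h)$, contradicting $U(x,i)=K_-$. Hence the operator is $\geq0$ on $x\geq a_-(i)$. Together with equality on the continuation interval, and using the $\dC^1$ fit to handle the boundary point, this gives $\geq0$ on all of $x\geq a_+(i)$. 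The symmetric argument, with the maximizer waiting on $x<a_+(i)$, gives $\leq0$ on $x\leq a_-(i)$.

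\textbf{Main obstacle.} We expect the smooth fit in the presence of regime coupling to be the hardest step. The thresholds of regime $j$ may lie inside, or coincide with, the continuation interval of regime $i$. The approach is to work through the generator and treat $q_{ij}U(\cdot,j)$ only as a continuous source term, so that no ordering of the four thresholds is ever needed.
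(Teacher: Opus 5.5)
\textbf{Where your plan matches the paper.} Your treatment of items 1 and 2 follows the paper closely: a Peskir-type value with a hitting-time saddle point, pathwise comparison of the flows for monotonicity, and a linear-growth contradiction for finiteness. Your derivation of \eqref{eq:dynkin:inequalities}, letting the stopping player wait a short time and expanding, is the sub/supermartingale argument the paper invokes. Your proof of $a_+(i)<a_-(i)$ from continuity, monotonicity and $-K_+<K_-$ is shorter than the paper's route through $a_+(i)\le x_+(i,\theta,\lambda)<x_-(i,\theta,\lambda)\le a_-(i)$. However, it relies on continuity of $U(\cdot,i)$, which you only attribute to general theory. The paper proves Lipschitz continuity directly from the coupling identity $X^{x,i}_t-X^{y,i}_t=e^{-\int_0^t\delta_{\varepsilon_s}\dd s}(x-y)$, which you already use for monotonicity. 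You should do the same, since continuity is also what makes the stopping sets closed and the hitting times of Step~1 a saddle point.

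\textbf{The gap: smooth fit.} You exclude $U_x(a_-(i)-,i)<0$, but that case is already impossible. $U(\cdot,i)$ is non-decreasing and $U\le K_-$ with equality on $[a_-(i),\infty)$, so the left derivative is automatically $\ge0$. What must be excluded is $U_x(a_-(i)-,i)=s>0$, a concave corner at the boundary of the \emph{minimizer's} stopping set. The player who profits from such a corner is the minimizer, not the maximizer. A correct argument runs as follows.
\begin{enumerate}
\item The left derivative exists. On $(a_+(i),a_-(i))$, $U(\cdot,i)$ solves a linear ODE whose coefficients and source (including $q_{ij}U(\cdot,j)$) are continuous on the closed interval. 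Hence $U(\cdot,i)$ is $\dC^2$ up to $a_-(i)$ from the left.
\item Start at $(a_-(i),i)$ and use the submartingale property of \eqref{eq:lemma_regularity:subm} at time $h$.
\item Estimate the error terms. The event $\{\eta^\star\le h\}$ has probability $O(h)$, since it requires a regime switch or an excursion of size $a_-(i)-a_+(i)$, and $U$ is bounded on it. The running term, the discount factor and a switch of regime by time $h$ each contribute $O(h)$.
\item Use the local expansion $U(y,i)\le K_--s(a_-(i)-y)^++o(|y-a_-(i)|)$.
\item Combining these gives $K_-\le K_--s\,\E[(a_-(i)-X_h)^+]+o(\sqrt h)=K_--\frac{s\sigma}{\sqrt{2\pi}}\sqrt h+o(\sqrt h)$, a contradiction for small $h$.
\end{enumerate}
At $a_+(i)$ the argument is symmetric, using the supermartingale \eqref{eq:lemma_regularity:supm} and the maximizer.

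\textbf{A smaller point in the finiteness step.} If $a_-(i)=+\infty$ but $a_-(j)<+\infty$, the minimizer does stop after a switch to regime $j$, so ``the minimizer never stops'' is inaccurate. The repair is as follows.
\begin{enumerate}
\item Use the saddle-point inequality $U(x,i)\ge M^{\theta,\lambda}_{x,i}(\tau^\star,\infty)$.
\item Note that $\tau^\star\ge\sigma_1$, the first switching time.
\item Bound the payoff after $\sigma_1$ from below, uniformly for $x\ge0$, by comparison with the flow started at $0$.
\end{enumerate}
This gives $U(x,i)\ge 2\phi x/(2\delta_i+\gamma_i)-C$, which contradicts $U\le K_-$ as $x\to\infty$. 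The paper's own argument only covers the case where the stopping set is empty in both regimes.
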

	
	\begin{proof}
		The complete proof is given in Appendix~\ref{app:ergodic_singular_ctrl}. More precisely, item~\ref{prop:dynkin_game:value} is established there under the heading \emph{Proof of Proposition~\ref{prop:dynkin_game}--\ref{prop:dynkin_game:value}}, item~\ref{prop:dynkin_game:thresholds} under \emph{Proof of Proposition~\ref{prop:dynkin_game}--\ref{prop:dynkin_game:thresholds}}, and item~\ref{prop:dynkin_game:free_boundary_pde} under \emph{Proof of Proposition~\ref{prop:dynkin_game}--\ref{prop:dynkin_game:free_boundary_pde}}.
	\end{proof}
	
	For completeness, we state the verification result that connects these marginal thresholds to the frozen singular control problem. Define
	\begin{equation}
		\label{eq:VR}
		V^{\theta,\lambda}(x,i)
		\coloneqq
		\int_{a_+(i,\theta,\lambda)}^xU^{\theta,\lambda}(y,i)\dd y
		-R^{\theta,\lambda}(i),\qquad 
		R^{\theta,\lambda}(i)
		\coloneqq
		\int_{a_+(i,\theta,\lambda)}^{a_-(i,\theta,\lambda)}
		U^{\theta,\lambda}(y,i)\dd y, 
	\end{equation}
	and, for each $i\in\cE$,
	\begin{multline}
		\label{eq:kappa}
		\kappa^{\theta,\lambda}(i)
		\coloneqq
		\sum_{j\neq i}q_{ij}\left(
		\int_{a_+(j,\theta,\lambda)}^{a_+(i,\theta,\lambda)}
		U^{\theta,\lambda}(y,j)\dd y
		-R^{\theta,\lambda}(j)+R^{\theta,\lambda}(i)
		\right)\\
		+\delta_i a_+(i,\theta,\lambda)K_+
		+l(a_+(i,\theta,\lambda),\theta)
		+\lambda\big(\theta-a_+(i,\theta,\lambda)\big).
	\end{multline}
	
	\begin{proposition}[Optimal reflected policy for the frozen problem]\label{prop:hjb_frozen}
		The following statements hold.
		\begin{enumerate}
			\item\label{prop:hjb_frozen:hjb}
			The pair $(V^{\theta,\lambda},\kappa^{\theta,\lambda})$ solves
			\begin{equation}\label{eq:HJB}
				\min\left\{
				\cL_{(X^0,\varepsilon)}V^{\theta,\lambda}(x,i)
				+l(x,\theta)+\lambda(\theta-x)-\kappa^{\theta,\lambda}(i),
				-V_x^{\theta,\lambda}(x,i)+K_-,
				V_x^{\theta,\lambda}(x,i)+K_+
				\right\}=0.
			\end{equation}
			
			\item\label{prop:hjb_frozen:optimal_ctrl}
			Let $(X^{A^\star(\theta,\lambda)},A^\star(\theta,\lambda))$ solve
			\begin{equation}\label{eq:skorohod_reflection:theta_lambda}
				\left\{
				\begin{aligned}
					&\dd X^{A^\star(\theta,\lambda)}_t
					=-\delta_{\varepsilon_t}X^{A^\star(\theta,\lambda)}_t \dd t
					+\sigma \dd W_t + \dd A^{\star,+}_t(\theta,\lambda)-dA^{\star,-}_t(\theta,\lambda),
					\qquad X^{A^\star}_{0-}=x,\\
					&a_+(\varepsilon_t,\theta,\lambda)
					\leq X^{A^\star(\theta,\lambda)}_t
					\leq a_-(\varepsilon_t,\theta,\lambda),
					\qquad \P_{x,i}\text{-a.s. for a.e. }t\geq0,\\
					&\int_0^t\ind_{\{X^{A^\star(\theta,\lambda)}_{s-}>a_+(\varepsilon_s,\theta,\lambda)\}}
					\dd A^{\star,+}_s(\theta,\lambda)
					=\int_0^t\ind_{\{X^{A^\star(\theta,\lambda)}_{s-}<a_-(\varepsilon_s,\theta,\lambda)\}}
					\dd A^{\star,-}_s(\theta,\lambda)=0.
				\end{aligned}
				\right.
			\end{equation}
			Then $A^\star(\theta,\lambda)$ is optimal for the frozen problem \eqref{eq:mfg:payoff}.
			
			\item\label{prop:hjb_frozen:limsup}
			The limsup in $\bfJ(A^\star(\theta,\lambda),\theta,\lambda)$ is an ordinary limit and
			\begin{multline}\label{eq:cost:limit_exists}
				\lim_{T\to\infty}\frac{1}{T}\E_{x,i}\left[
				\int_0^T\Big(l(X^{A^\star(\theta,\lambda)}_t,\theta)
				+\lambda(\theta-X^{A^\star(\theta,\lambda)}_t)\Big)\dd t
				+K_+A^{\star,+}_T(\theta,\lambda)+K_-A^{\star,-}_T(\theta,\lambda)
				\right]\\
				=\sum_{i\in\cE}\kappa^{\theta,\lambda}(i)\pi_i.
			\end{multline}
		\end{enumerate}
	\end{proposition}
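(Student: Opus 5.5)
The proof is a verification argument built on Proposition~\ref{prop:dynkin_game}. Throughout we suppress the dependence on $(\theta,\lambda)$.

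\textbf{Step 1: the HJB system.} Since $V(\cdot,i)=\int_{a_+(i)}^\cdot U(y,i)\,\dd y - R(i)$ and $U(\cdot,i)\in\dC^1(\R)$, we have $V(\cdot,i)\in\dC^2(\R)$ with $V_x=U$ and $V_{xx}=U_x$. The gradient constraints $-K_+\le V_x\le K_-$ follow at once from item~\ref{prop:dynkin_game:thresholds} of Proposition~\ref{prop:dynkin_game}, with equality exactly on $\cS_\pm$.

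For the first component we set
\[
F(x,i)\coloneqq\cL_{(X^0,\varepsilon)}V(x,i)+l(x,\theta)+\lambda(\theta-x).
\]
We differentiate $F$ in $x$. The term $-\delta_i x V_x$ contributes $-\delta_i U-\delta_i x U_x$, and the switching term contributes $q_{ij}(U(x,j)-U(x,i))$. This gives
\[
F_x=-\delta_i U+\cL_{(X^0,\varepsilon)}U+l_x-\lambda=\cH_i(U).
\]
By \eqref{eq:dynkin:pde}, $F_x=0$ on $(a_+(i),a_-(i))$, so $F(\cdot,i)$ is constant there. We evaluate it at $x=a_+(i)^+$, using $V(a_+(i),i)=-R(i)$, $V_x=-K_+$, $V_{xx}(a_+(i)^+)=U_x(a_+(i),i)=0$ (smooth fit from $U\in\dC^1$), and $V(a_+(i),j)-V(a_+(i),i)=\int_{a_+(j)}^{a_+(i)}U(y,j)\,\dd y-R(j)+R(i)$. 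The resulting constant is exactly $\kappa(i)$ in \eqref{eq:kappa}.

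Outside the continuation interval we use \eqref{eq:dynkin:inequalities}. On $x\le a_+(i)$ we have $F_x=\cH_i(U)\le0$, so $F(\cdot,i)\ge\kappa(i)$ to the left. On $x\ge a_-(i)$ we have $F_x\ge0$, so again $F(\cdot,i)\ge\kappa(i)$. Hence the minimum in \eqref{eq:HJB} is zero everywhere. One must check that $F(\cdot,i)$ is continuous across the other regime's thresholds; this holds because $V(\cdot,j)$ is $\dC^2$.

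\textbf{Step 2: lower bound for an arbitrary $A\in\cB$.} Apply It\^o--Meyer's formula for regime-switching semimartingales to $V(X^A_t,\varepsilon_t)$. The jumps of $A$ are handled by the mean value theorem,
\[
V(x+\Delta,i)-V(x,i)=\int_0^1 V_x\,\Delta \;\ge\; -K_+\Delta^+ - K_-\Delta^-,
\]
and the continuous parts of $A^\pm$ are bounded in the same way. Combined with $F\ge\kappa$, this gives
\[
\E\big[V(X^A_T,\varepsilon_T)\big]-V(x,i)\;\ge\;\E\Big[\int_0^T\big(\kappa(\varepsilon_t)-l-\lambda(\theta-X^A_t)\big)\dd t - K_+A^+_T-K_-A^-_T\Big].
\]
We divide by $T$ and let $T\to\infty$. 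Since $|V_x|\le\max(K_+,K_-)$, $V$ has linear growth, so $\E|V(X^A_T,\varepsilon_T)|/T\to0$ by the transversality condition \eqref{eq:transversality_condition}. By ergodicity of $\varepsilon$, $\frac1T\E\int_0^T\kappa(\varepsilon_t)\,\dd t\to\sum_i\pi_i\kappa(i)$. The martingale term needs localization; $V_x$ is bounded, so the stochastic integral is a true martingale. This yields $\bfJ(A)\ge\sum_i\pi_i\kappa(i)$.

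\textbf{Step 3: the reflected policy attains the bound.} Existence of a solution of the Skorokhod problem \eqref{eq:skorohod_reflection:theta_lambda} with regime-dependent intervals follows piecewise between switching times: the chain has finitely many jumps on bounded intervals, so we can use the classical two-sided reflection map on each inter-switch interval and an instantaneous jump to the new interval at switching times.

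For this policy all inequalities in Step 2 become equalities:
\begin{itemize}
\item $F=\kappa$ on the inaction band where the state lives;
\item $V_x=-K_+$ wherever $A^{\star,+}$ increases, including jumps after switches, since the path $x\to a_+(j)$ lies in $\{V_x=-K_+\}$;
\item symmetrically, $V_x=K_-$ wherever $A^{\star,-}$ increases.
\end{itemize}
After time $0$ the state is bounded, so the transversality condition holds and $V(X_T,\varepsilon_T)$ is bounded. The $\frac1T$ limit therefore exists as an ordinary limit and equals $\sum_i\pi_i\kappa(i)$, which proves items~\ref{prop:hjb_frozen:optimal_ctrl} and~\ref{prop:hjb_frozen:limsup}.

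\textbf{Main obstacle.} The delicate part is Step 1. Identifying the constant with $\kappa(i)$ needs care with the cross-regime term, and establishing $F\ge\kappa$ outside the band requires that $F(\cdot,i)$ be absolutely continuous with $F_x=\cH_i(U)$ a.e. I would try to get this directly from $U(\cdot,i)\in\dC^1$ together with $\dC^2$ away from the two thresholds, so that $F$ is piecewise $\dC^1$ and the sign conditions \eqref{eq:dynkin:inequalities} integrate.
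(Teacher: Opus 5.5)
Your proposal is correct and follows the same route as the paper. The paper obtains items (1) and (2) by invoking the construction and verification argument of \cite{dianetti2026ergodic} (Propositions 3.4--3.5 and Theorem 3.2 there), which you carry out explicitly, and it proves item (3) exactly as in your Step 3, using boundedness of $V(X_T,\varepsilon_T)$ along the reflected process and ergodicity of $\varepsilon$; the one detail you leave implicit is that the compensated regime-switching term in It\^o's formula is also a true martingale, which holds because $V(\cdot,j)-V(\cdot,i)$ is bounded (both $U(\cdot,1)$ and $U(\cdot,2)$ equal $-K_+$ far to the left and $K_-$ far to the right).
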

	
	\begin{proof}
		The complete proof is given in Appendix~\ref{app:ergodic_singular_ctrl}, under the heading \emph{Proof of Proposition~\ref{prop:hjb_frozen}}.
	\end{proof}
	
	Propositions~\ref{prop:dynkin_game} and \ref{prop:hjb_frozen} establish the main structural result for the frozen problem: for every pair $(\theta,\lambda)$, the optimal policy is the minimal reflection of the allowance balance inside a regime-dependent band.
	
	\subsection{Stationary consistency and the scalar fixed point}
	\label{subsec:fixed_point_auxiliary}
	
For fixed $(\theta,\lambda)$, let $p^{\theta,\lambda}_\infty$ denote the unique limiting stationary distribution of the process controlled by $A^\star(\theta,\lambda)$, which exists by \cite[Proposition 4.2]{dianetti2026ergodic}. We now impose the two consistency conditions in Definition~\ref{def:mfg:solution}.
	
Since $l_\theta(x,\theta)=2\phi(\theta-x)+2\psi(\theta-\bar\theta),$ condition \eqref{mfg:cons:lambda} yields
\[
\lambda = -2\phi\left(\theta-\sum_{i\in\cE}\int_{\R}x\,p^{\theta,\lambda}_\infty(\dd x,i)\right) - 2\psi(\theta-\bar\theta).
\]
Whenever the stationary-mean consistency condition \eqref{mfg:cons:theta} holds, this reduces to
	\begin{equation}\label{eq:lambda_theta_definition_fixed_point}
		\lambda=\lambda(\theta)\coloneqq2\psi(\bar\theta-\theta).
	\end{equation}
	Conversely, \eqref{mfg:cons:theta} together with \eqref{eq:lambda_theta_definition_fixed_point} implies \eqref{mfg:cons:lambda}. Hence, the two consistency conditions are equivalent to a single stationary-mean condition along the curve $\lambda=\lambda(\theta)$.
	
Equation \eqref{eq:lambda_theta_definition_fixed_point} also clarifies the role of the multiplier. The term $\lambda(\theta-X^A_t)$ acts on deviations of the allowance balance from the candidate stationary mean $\theta$. If $\theta<\bar\theta$, then $\lambda(\theta)>0$, so that states with $X^A_t<\theta$ increase the running cost, whereas states with $X^A_t>\theta$ reduce it. Hence, in the frozen problem, the multiplier creates an incentive to raise the allowance balance relative to the candidate mean. Conversely, if $\theta>\bar\theta$, then $\lambda(\theta)<0$, and the multiplier creates an incentive to lower the allowance balance relative to $\theta$. At stationary consistency, the long-run average of this term vanishes because the stationary mean of the controlled process is precisely $\theta$. Its role is therefore not to add a direct equilibrium cost, but to modify the marginal incentives in the frozen control problem so as to internalize the dependence of the regulator's objective on the endogenous stationary mean. Finally, the coefficient $\phi$ disappears from the equilibrium multiplier because
\[
\sum_{i\in\cE}\int_{\R}(\theta-x)\, p^{\theta,\lambda}_\infty(\dd x,i)=0
\]
under the mean-consistency condition, whereas $\phi$ continues to affect the location and width of the optimal intervention bands.
	
Set
\[
a_\pm(i,\theta)\coloneqq a_\pm(i,\theta,\lambda(\theta)), \qquad A^\theta\coloneqq A^\star(\theta,\lambda(\theta)),
\qquad p^\theta_\infty\coloneqq p^{\theta,\lambda(\theta)}_\infty.
\]
The stationary-mean map is
\begin{equation}\label{eq:Phi_definition_fixed_point_section}
\Phi(\theta) \coloneqq \sum_{i\in\cE}\int_{\R}x\,p^\theta_\infty(\dd x,i).
\end{equation}
The equilibrium condition is therefore the scalar equation
\[
\theta=\Phi(\theta).
\]
By \cite[Theorem 4.1, Claim 1]{dianetti2026ergodic}, the stationary distribution $p^\theta_\infty$, and hence $\Phi(\theta)$, can be evaluated without simulation through the regime-wise CDF system in \eqref{eq:regime_wise_CDF}. Thus, \eqref{eq:Phi_definition_fixed_point_section} agrees with the computational representation in \eqref{eq:fixed_point_map}.
	
	\begin{theorem}[Existence of a stationary-consistency fixed point]
		\label{thm:fixed_point_existence}
		The map $\Phi$ admits at least one fixed point $\theta^\star\in\R$. Consequently, the auxiliary stationary MFG admits a reflected equilibrium.
	\end{theorem}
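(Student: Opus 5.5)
The plan is to apply the intermediate value theorem to the continuous scalar function $\theta\mapsto\Phi(\theta)-\theta$. This needs two ingredients. The first is continuity of $\Phi$ on $\R$. The second is a sublinear bound of the form $|\Phi(\theta)|\le C_0+c|\theta|$ with $c<1$, or simply boundedness of $\Phi$. Together they give $\Phi(\theta)-\theta\to\mp\infty$ as $\theta\to\pm\infty$, and hence a zero $\theta^\star$. The equilibrium is then the triple $(A^{\theta^\star},\theta^\star,\lambda(\theta^\star))$. Item~i) of Definition~\ref{def:mfg:solution} follows from Proposition~\ref{prop:hjb_frozen}, and item~ii) follows from the equivalence of \eqref{mfg:cons:theta}--\eqref{mfg:cons:lambda} with $\theta=\Phi(\theta)$ along $\lambda=\lambda(\theta)$, shown just before the statement.

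\textbf{Step 1: localisation of the thresholds.} Along the curve $\lambda=\lambda(\theta)$, the source term of the Dynkin game \eqref{eq:dynkin} is
\[
l_x(x,\theta)-\lambda(\theta)=\rho+2\phi(x-\theta)-2\psi(\bar\theta-\theta)=\rho-2\psi\bar\theta+2\phi x-2(\phi-\psi)\theta .
\]
This is an affine function of $x$ shifted by the constant $c(\theta):=\rho-2\psi\bar\theta-2(\phi-\psi)\theta$. Because the drift is linear, $U^{\theta}$ can be compared with explicit bounds. I would use the sub/supersolutions built from $\E_{x,i}\int_0^\infty e^{-\int_0^t\delta}\,(2\phi X^0_t+c)\,\dd t$, which is affine in $x$ with regime-dependent coefficients, together with the monotonicity of $U^\theta$ in $x$ and in the source term. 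This yields
\[
a_\pm(i,\theta)=\frac{\phi-\psi}{\phi}\,\theta+O(1),
\]
where the $O(1)$ is uniform in $\theta$. Intuitively, the band sits near the zero of the source, $x\approx \theta(\phi-\psi)/\phi+\text{const}$, and its width is bounded above and below uniformly, since the source term has slope $2\phi$ independent of $\theta$.

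\textbf{Step 2: growth of $\Phi$.} The stationary law is supported in $[\min_i a_+(i,\theta),\max_i a_-(i,\theta)]$, so Step~1 gives
\[
\Phi(\theta)=\frac{\phi-\psi}{\phi}\,\theta+O(1).
\]
Since $\psi>0$, the slope satisfies $(\phi-\psi)/\phi<1$. Hence $\Phi(\theta)-\theta=-\tfrac{\psi}{\phi}\theta+O(1)$ changes sign between large negative and large positive $\theta$. This holds regardless of the sign of $\phi-\psi$.

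\textbf{Step 3: continuity of $\Phi$.} This is the main obstacle. I would proceed in three stages.
\begin{itemize}
\item[(a)] Show that $\theta\mapsto U^\theta$ is continuous locally uniformly. The payoff in \eqref{eq:dynkin} is Lipschitz in the parameter, with discount factor at most $e^{-\delta_2 t}$, so the Dynkin values differ by at most $C|\Delta\theta|$.
\item[(b)] Deduce continuity of the thresholds. For this I would use strict monotonicity of $U^\theta(\cdot,i)$ in the continuation region, which comes from comparison since the source is strictly increasing in $x$. Strict monotonicity prevents the contact sets from jumping.
\item[(c)] Pass continuity through the invariant measure. With the thresholds continuous, the reflected stationary distributions $p^\theta_\infty$ are supported in a common compact set for $\theta$ in a compact interval. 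Tightness then gives a limit point $p$ of $p^{\theta_n}_\infty$. Stability of the CDF system \eqref{eq:regime_wise_CDF} under perturbation of its free endpoints, combined with the uniqueness of its solution (Proposition~\ref{prop:main_fixed_point_characterization}), identifies $p=p^{\theta}_\infty$.
\end{itemize}
With the compact support, convergence of the means follows, so $\Phi$ is continuous.

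If the strict-monotonicity argument in (b) proves delicate, a fallback is to work directly with the explicit formula \eqref{eq:fixed_point_map}. There one needs only continuity of $\mu^\theta$ in $L^1$ and of the endpoints, and this can be obtained from the ODE system once the thresholds are known to be continuous.
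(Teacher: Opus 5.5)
\textbf{Gap in Step~1.} Your overall structure matches the paper's: the intermediate value theorem for $\Phi(\theta)-\theta$, continuity of $\Phi$, and growth of $\Phi$ with slope $1-\psi/\phi<1$. The problem is that Step~1 is false, and Step~2 depends on it. Along $\lambda=\lambda(\theta)$ the source is $2\phi(x-z(\theta))$, with $z(\theta)=(1-\frac{\psi}{\phi})\theta+\frac{\psi}{\phi}\bar\theta-\frac{\rho}{2\phi}$. However, the drift $-\delta_i x$ is not translation invariant. Near the band it has size $\delta_i|z(\theta)|$, so when $\phi\neq\psi$ the band width is \emph{not} bounded uniformly in $\theta$.

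Look at the one-regime analogue ($\delta_1=\delta_2=\delta$). In the continuation region,
\[
U=\frac{\phi}{\delta}(x-2z)+c_1e^{\delta x^2/\sigma^2}+c_2e^{\delta x^2/\sigma^2}\int_x^\infty e^{-\delta y^2/\sigma^2}\,\dd y .
\]
As $z\to+\infty$, smooth fit gives $a_+\approx z-\frac{\delta K_+}{2\phi}$ and $U(x)\approx -K_++\frac{\phi}{\delta}\frac{(x-a_+)^2}{x}$, hence $a_--a_+\approx\sqrt{\delta(K_++K_-)z/\phi}$. The reason is that above $a_+$ the state is swept down at speed about $\delta z$. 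The marginal running cost it accrues before merging at $a_+$ is therefore only $O((x-a_+)^2/z)$. So a support argument gives at best $\Phi(\theta)=z(\theta)+O(\sqrt{|\theta|})$, not $O(1)$.

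The affine comparison you propose does not even give a sublinear error. The one-sided value $\inf_\tau M^{\theta,\lambda}_{x,i}(\tau,\infty)$ (notation of \eqref{eq:M_tau_eta}) is dominated by the no-stopping payoff $\frac{\phi}{\delta}(x-2z)$. Its stopping threshold is therefore at least $2z+\frac{\delta K_-}{\phi}$, and you only get $a_-\lesssim 2z(\theta)$. Inserted into Step~2, this produces a sign change only under extra restrictions such as $\psi>\phi/2$. So the growth estimate is the real difficulty here, not continuity, and it is not established.

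\textbf{The missing idea.} The paper bounds the stationary mean through the cost rather than through the thresholds. Along $\lambda(\theta)$ one has $l(x,\theta)+\lambda(\theta)(\theta-x)=\phi(x-z(\theta))^2+\Gamma(\theta)$, with $\Gamma$ independent of $x$. Hence $A^\theta$ also minimizes
\[
\varlimsup_{T\to\infty}\frac1T\E\Big[\int_0^T\phi(X^A_t-z(\theta))^2\,\dd t+K_+A^+_T+K_-A^-_T\Big].
\]
Compare with the reflection in $[z(\theta)-1,z(\theta)+1]$. By It\^o's formula for $(X-z(\theta))^2$, its intervention rate is at most $\delta_1(|z(\theta)|+1)+\sigma^2/2$. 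This linear cost of fighting the drift is exactly what widens the optimal band. It follows that $\sum_i\int(x-z(\theta))^2\,p^\theta_\infty(\dd x,i)\leq C(1+|z(\theta)|)$, and Jensen gives $|\Phi(\theta)-z(\theta)|\leq C\sqrt{1+|\theta|}$. With this in place of Step~1, your Step~2 goes through with $O(1)$ replaced by $O(\sqrt{1+|\theta|})$.

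\textbf{A smaller point in Step~3(b).} Locally uniform convergence $U^{\theta_n}\to U^\theta$ yields only $\limsup_n a_+(i,\theta_n)\leq a_+(i,\theta)$ and $\liminf_n a_-(i,\theta_n)\geq a_-(i,\theta)$. The reason is that $U^{\theta_n}(x)\to -K_+$ does not force contact, and strict monotonicity of $U$ does not help with this. The reverse inequalities need an extra argument. One option is to pass to the limit in $\cH^{\theta_n}_i(U^{\theta_n})=0$ on a would-be lost contact interval, and use that $\delta_iK_++\gamma_i(U^\theta(\cdot,j)+K_+)+l_x(\cdot,\theta)-\lambda(\theta)$ is strictly increasing. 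The paper instead gets Lipschitz continuity of the thresholds directly from the shift comparison in Lemma~\ref{lemma:theta:boundary_regularity}.
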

	
	\begin{proof}
		The complete proof is given in Appendix~\ref{app:fixed_point}, under the heading \emph{Proof of Theorem~\ref{thm:fixed_point_existence}}.
	\end{proof}
	
	\begin{theorem}[Uniqueness among reflected equilibria]
		\label{thm:fixed_point_uniqueness}
		The auxiliary stationary MFG admits at most one equilibrium in the class of reflected policies. Consequently, the fixed point of $\Phi$ is unique.
	\end{theorem}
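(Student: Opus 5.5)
We prove uniqueness by a Lasry--Lions-type monotonicity argument combined with the strict convexity of the planner's cost in the mean. Suppose $(A^1,\theta_1,\lambda(\theta_1))$ and $(A^2,\theta_2,\lambda(\theta_2))$ are two reflected equilibria, so that $A^k=A^{\theta_k}$ and $\Phi(\theta_k)=\theta_k$, $k=1,2$. By Proposition~\ref{prop:hjb_frozen}, each $A^k$ is optimal for the frozen criterion $\bfJ(\cdot,\theta_k,\lambda(\theta_k))$, and the corresponding limsup is an ordinary limit. Since $A^{\theta_1}$ and $A^{\theta_2}$ are reflected policies, the stationarity results of \cite{dianetti2026ergodic} make them admissible in $\cB_{\rm mfc}$. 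Each policy's long-run averages are therefore expressed through its stationary law $p^k\coloneqq p^{\theta_k}_\infty$, and the time-average of $X$ converges to $\theta_k$ by \eqref{eq:stationary_occupation_property}.

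\textbf{Step 1: cross-optimality inequalities.} We write
\[
\bfJ(A^1,\theta_1,\lambda_1)\le \bfJ(A^2,\theta_1,\lambda_1),\qquad \bfJ(A^2,\theta_2,\lambda_2)\le \bfJ(A^1,\theta_2,\lambda_2),
\]
with $\lambda_k=\lambda(\theta_k)$. We then expand each side. The intervention costs and the $\rho x$ term do not depend on the frozen parameters, so they cancel when the two inequalities are added. The only parameter-dependent part of the running cost is
\[
\phi(x-\theta)^2+\psi(\theta-\bar\theta)^2+\lambda(\theta-x).
\]
Using the stationary means $\theta_1,\theta_2$, the long-run average of this term under $A^m$ with parameter $\theta_k$ equals
\[
\phi\,\mathrm{Var}_m+\phi(\theta_m-\theta_k)^2+\psi(\theta_k-\bar\theta)^2+\lambda_k(\theta_k-\theta_m),
\]
where $\mathrm{Var}_m$ is the stationary variance under $A^m$. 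Adding the two inequalities, the variances and the $\psi$ terms cancel. What remains is
\[
0\le 2\phi(\theta_1-\theta_2)^2+(\lambda_1-\lambda_2)(\theta_1-\theta_2).
\]
Substituting $\lambda_1-\lambda_2=-2\psi(\theta_1-\theta_2)$ gives $0\le 2(\phi-\psi)(\theta_1-\theta_2)^2$.

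This estimate is inconclusive whenever $\phi\ge\psi$, because the $\phi$ term is concave in the frozen parameter and works against uniqueness. It is the main obstacle, and the naive MFG monotonicity argument does not close it.

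\textbf{Step 2: exploit the MFC structure instead.} We use the planner's actual cost $\bfJ^{\rm mfc}(A)=F(A)+\psi(\theta^A-\bar\theta)^2$, where
\[
F(A)=\lim_{T\to\infty}\frac1T\E\Big[\int_0^T\big(\rho X+\phi(X-\theta^A)^2\big)\dd t+K_+A^+_T+K_-A^-_T\Big].
\]
Applying $\bfJ(A^1,\theta_1,\lambda_1)\le\bfJ(A^2,\theta_1,\lambda_1)$ and rewriting via the stationary variance, we obtain
\[
F(A^1)\le F(A^2)+\phi(\theta_2-\theta_1)^2+\lambda_1(\theta_1-\theta_2).
\]
The extra $\phi(\theta_2-\theta_1)^2$ is what breaks the argument. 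To remove it, we interpolate the two policies. Let $A^s\coloneqq(1-s)A^1+sA^2$, driven by the same noise. By linearity of the dynamics, $X^{A^s}=(1-s)X^{A^1}+sX^{A^2}$ pathwise. The map $A\mapsto F(A)+\psi(\theta^A-\bar\theta)^2$ is convex along this segment, because $\E[(X-\E X)^2]$ is convex under pathwise mixing, $\theta^{A^s}$ is linear in $s$, and the total variation is subadditive.

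The one-sided derivative at $s=0$ is bounded by the frozen-problem optimality of $A^1$, and the variance term's derivative is exactly captured by the multiplier $\lambda_1$. This gives $\bfJ^{\rm mfc}(A^1)\le\bfJ^{\rm mfc}(A^2)$, and symmetrically $\bfJ^{\rm mfc}(A^2)\le\bfJ^{\rm mfc}(A^1)$. Strict convexity of $\psi(\theta-\bar\theta)^2$ in $\theta$ along the segment then forces $\theta_1=\theta_2$. Otherwise the midpoint policy $A^{1/2}$ would give a cost strictly below the common value, contradicting optimality. Proving that optimality is global, not merely a first-order condition, is exactly the verification step used for Theorem~\ref{thm:main_result}.

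\textbf{Step 3: conclusion.} Once $\theta_1=\theta_2$, we have $\lambda_1=\lambda_2$. The thresholds $a_\pm(i,\theta)$ are uniquely determined by $U^{\theta}$ through Proposition~\ref{prop:dynkin_game}, and the Skorokhod problem \eqref{eq:skorohod_reflection:theta_lambda} has a unique solution, so the reflected policies coincide. In particular $\Phi$ has at most one fixed point, and with Theorem~\ref{thm:fixed_point_existence} exactly one.

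The delicate points are two. The first is justifying the derivative computation along $A^s$ for ergodic criteria: the mixed policy is not reflected, so we need stationarity of the pair $(X^{A^1},X^{A^2})$, or else we work directly with limits of $\frac1T\E$. The second is controlling the limsup by finite-horizon convexity before passing to the limit.
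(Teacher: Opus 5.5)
Your argument is correct in substance, but it takes a different route from the paper. The paper never leaves the frozen criterion. Along $A^\epsilon=(1-\epsilon)A^1+\epsilon A^2$ it uses that $h_1(x)=l(x,\theta_1)+\lambda_1(\theta_1-x)$ is quadratic in the state with coefficient $\phi$. This upgrades your Step~1 cross inequality to $\bfJ(A^2,\theta_1,\lambda_1)-\bfJ(A^1,\theta_1,\lambda_1)\geq\phi Q$, where $Q=\varliminf_{T\to\infty}\frac1T\E[\int_0^T|X^1_t-X^2_t|^2\dd t]$. Adding the symmetric inequality gives $2\phi Q\leq 2(\phi-\psi)(\theta_1-\theta_2)^2$. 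Jensen's bound $Q\geq(\theta_1-\theta_2)^2$ then absorbs exactly the $2\phi(\theta_1-\theta_2)^2$ obstruction you identified, leaving $2\psi(\theta_1-\theta_2)^2\leq 0$.

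You instead evaluate the planner's cost along the same segment. There the dispersion term is centred at the moving mean, so the obstruction never appears and strict convexity comes from $\psi$ alone. Your version shows that uniqueness is strict convexity of the MFC problem in the mean, and it needs no Jensen step. The paper's version stays inside $\bfJ(\cdot,\theta_k,\lambda_k)$, which is defined on all of $\cB$, so admissibility of the non-reflected mixture never becomes an issue.

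That admissibility issue is where your write-up needs repair. Your contradiction uses global optimality of $A^1$ over $\cB_{{\rm mfc}}$ against $A^{1/2}$. Nothing you cite shows $A^{1/2}\in\cB_{{\rm mfc}}$: the mixture is not a reflected process, so convergence in law of $(X^{A^{1/2}}_t,\varepsilon_t)$ is not available. The verification theorem is therefore both unnecessary and insufficient here.

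Take the second option you mention, working directly with time averages. Set $m(s)\coloneqq(1-s)\theta_1+s\theta_2$; this is always the limit of $\frac1T\E[\int_0^T X^{A^s}_t\dd t]$. Let $g(s)$ be the planner's cost of $A^s$ with $\theta^{A^s}$ replaced by $m(s)$. Because all correction terms converge, you get the exact identity $g(s)=\bfJ(A^s,\theta_1,\lambda_1)+(\psi-\phi)s^2(\theta_1-\theta_2)^2$.

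Two bounds follow:
\begin{itemize}
\item Frozen optimality of $A^1$ gives $g(s)\geq g(0)+(\psi-\phi)s^2(\theta_1-\theta_2)^2$.
\item Pathwise convexity, together with the sublinearity of the Jordan parts, gives $g(s)\leq(1-s)g(0)+s\,g(1)-\psi s(1-s)(\theta_1-\theta_2)^2$.
\end{itemize}
Hence $g(1)-g(0)\geq(\psi-\phi s)(\theta_1-\theta_2)^2$. Let $s\downarrow 0$, repeat from the other endpoint using optimality of $A^2$ at $(\theta_2,\lambda_2)$, and add. This yields $2\psi(\theta_1-\theta_2)^2\leq 0$, with no limit interchange and no stationarity of the mixture required.

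One small correction: the multiplier $\lambda_1$ matches the derivative of the $\psi$-term, not of the variance term.
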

	
	\begin{proof}
		The complete proof is given in Appendix~\ref{app:fixed_point}, under the heading \emph{Proof of Theorem~\ref{thm:fixed_point_uniqueness}}.
	\end{proof}
	
	Let $\theta^\star$ denote the unique fixed point, set
	$\lambda^\star=\lambda(\theta^\star)$, and let
	$A^\star=A^{\theta^\star}$. The corresponding thresholds are
	\begin{equation}\label{eq:optimal_thresholds_auxiliary}
		a^\star_\pm(i)
		=a_\pm(i,\theta^\star,\lambda^\star)
		=a_\pm(i,\theta^\star),
		\qquad i\in\cE.
	\end{equation}
	Thus, the stationary mean and the regime-dependent reflecting bands are uniquely determined by the scalar consistency equation.
	
	\subsection{Verification for the stationary planner's problem}
	\label{subsec:mfc_verification}
	
	It remains to verify that the reflected equilibrium policy of the auxiliary problem is optimal for the original stationary MFC problem.
	
	\begin{theorem}[Verification for the stationary MFC problem]
		\label{thm:mfg_to_mfc}
		Let $(A^\star,\theta^\star,\lambda^\star)$ be the reflected equilibrium constructed above. Then $A^\star$ is optimal for the stationary mean-field control problem \eqref{eq:stationary_MFC_problem}.
	\end{theorem}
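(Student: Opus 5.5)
We argue by a convexity/Lagrangian comparison. The first step is to show that $A^\star$ is admissible, i.e.\ $A^\star\in\cB_{{\rm mfc}}$. For this we use \cite[Proposition 4.2]{dianetti2026ergodic}: the process reflected in the bounded bands $[a^\star_+(i),a^\star_-(i)]$ has a unique limiting law $p^{\theta^\star}_\infty$ with compact support. Boundedness of $X^{A^\star}$ after time $0$ gives the transversality condition \eqref{eq:transversality_condition}. It also gives the occupation identity \eqref{eq:stationary_occupation_property}, by ergodicity and bounded convergence. By the fixed-point property, $\theta^{A^\star}=\theta^\star$.

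Next we compute $\bfJ^{{\rm mfc}}(A^\star)$. Since $\theta^{A^\star}=\theta^\star$, the multiplier term $\lambda^\star(\theta^\star-X^{A^\star}_t)$ has vanishing long-run average, again by \eqref{eq:stationary_occupation_property}. Using item~\ref{prop:hjb_frozen:limsup} of Proposition~\ref{prop:hjb_frozen}, this yields
\[
\bfJ^{{\rm mfc}}(A^\star)=\bfJ(A^\star,\theta^\star,\lambda^\star)=\sum_i\kappa^{\theta^\star,\lambda^\star}(i)\pi_i .
\]

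Now fix an arbitrary $A\in\cB_{{\rm mfc}}$ with stationary mean $\theta^A$. The key pointwise inequality comes from convexity of $l$ in $\theta$, since $l_{\theta\theta}=2\phi+2\psi>0$:
\[
l(x,\theta^A)\ge l(x,\theta^\star)+l_\theta(x,\theta^\star)(\theta^A-\theta^\star).
\]
Average this along $X^A$ over $[0,T]$ and let $T\to\infty$. The function $l_\theta(x,\theta^\star)=2\phi(\theta^\star-x)+2\psi(\theta^\star-\bar\theta)$ is affine in $x$, so by \eqref{eq:stationary_occupation_property} its long-run average equals $2\phi(\theta^\star-\theta^A)+2\psi(\theta^\star-\bar\theta)$, which is an honest limit. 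The correction term therefore becomes
\[
\bigl(2\phi(\theta^\star-\theta^A)-\lambda^\star\bigr)(\theta^A-\theta^\star)=-2\phi(\theta^A-\theta^\star)^2-\lambda^\star(\theta^A-\theta^\star).
\]
This is where the care goes: the $\phi$-part has the wrong sign. So we should not linearize the $\phi$ term. Instead we use the exact identity
\[
(x-\theta^A)^2=(x-\theta^\star)^2-2(x-\theta^\star)(\theta^A-\theta^\star)+(\theta^A-\theta^\star)^2 .
\]
Its long-run average is exactly $\overline{(x-\theta^\star)^2}-(\theta^A-\theta^\star)^2$, where the overline denotes the long-run average. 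For the $\psi$ term, convexity gives
\[
\psi(\theta^A-\bar\theta)^2\ge\psi(\theta^\star-\bar\theta)^2-\lambda^\star(\theta^A-\theta^\star).
\]
Combining these, and using that $\theta^A$ is the long-run mean, we get
\[
\bfJ^{{\rm mfc}}(A)\ge \bfJ(A,\theta^\star,\lambda^\star)-\phi(\theta^A-\theta^\star)^2 .
\]
A nonpositive remainder like $-\phi(\theta^A-\theta^\star)^2$ cannot be absorbed by convexity alone. This is the main obstacle. We resolve it by noting that the planner cost is jointly convex when written through the variance: $\phi\,\mathrm{Var}$ is convex in the law of $X$, and linear functionals of occupation measures preserve convexity. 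A cleaner route is to rewrite the dispersion term as
\[
\phi\,\overline{(X-\theta^A)^2}=\phi\,\overline{X^2}-\phi(\theta^A)^2 .
\]
The planner cost then becomes $\overline{\rho X+\phi X^2}-\phi(\theta^A)^2+\psi(\theta^A-\bar\theta)^2+$ control costs. In this form the concave piece $-\phi\theta^2$ must be handled by a Lagrangian in which the MFG frozen criterion already uses $\theta^\star$ in $(x-\theta^\star)^2$. The linear term $-2\phi\theta^\star x$ then matches the first-order expansion of $-\phi(\theta^A)^2$. So we linearize $g(\theta)=-\phi\theta^2+\psi(\theta-\bar\theta)^2$ around $\theta^\star$. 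If $\psi<\phi$ this $g$ is not convex, and we must instead exploit optimality over $A$ jointly. We would try to show that the frozen problem's optimality for every $\theta$, combined with the uniqueness of the fixed point in Theorem~\ref{thm:fixed_point_uniqueness}, gives the inequality. Alternatively, we check whether the appendix's form of the multiplier already includes the $\phi$ correction, namely $\lambda=-\overline{l_\theta}$ evaluated at the law of $A$ rather than at $\theta^\star$. That would be the step to verify against the appendix.

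Granting this inequality, the last step is
\[
\bfJ(A,\theta^\star,\lambda^\star)\ge\bfJ(A^\star,\theta^\star,\lambda^\star),
\]
which follows from item~\ref{prop:hjb_frozen:optimal_ctrl} of Proposition~\ref{prop:hjb_frozen}, since $A\in\cB$. Chaining the inequalities gives $\bfJ^{{\rm mfc}}(A)\ge\bfJ^{{\rm mfc}}(A^\star)$, which proves the theorem.
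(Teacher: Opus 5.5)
There is a genuine gap, and it is exactly the inequality you grant. The integrands of $\bfJ^{{\rm mfc}}(A)$ and $\bfJ(A,\theta^\star,\lambda^\star)$ differ by a function affine in $X^A_t$, and $\theta^A$ is the long-run mean of $X^A$. Your own computation therefore gives the exact identity
\[
\bfJ^{{\rm mfc}}(A)=\bfJ(A,\theta^\star,\lambda^\star)+(\psi-\phi)(\theta^A-\theta^\star)^2,\qquad A\in\cB_{{\rm mfc}}.
\]
So $\bfJ^{{\rm mfc}}(A)\ge\bfJ(A,\theta^\star,\lambda^\star)$ holds only when $\psi\ge\phi$, as in the baseline $\phi=\psi$. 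It is false for $\psi<\phi$ whenever $\theta^A\neq\theta^\star$. Value-optimality, $\bfJ(A,\theta^\star,\lambda^\star)\ge\bfJ(A^\star,\theta^\star,\lambda^\star)$, cannot compensate for this. Your fallback ideas do not help either. The multiplier $\lambda^\star=2\psi(\bar\theta-\theta^\star)$ is a fixed number with no $\phi$-correction. Uniqueness of the fixed point says nothing about arbitrary competitors $A$. The root cause is the linearization point. You expand $l$ in $\theta$ alone at $(X^A_t,\theta^\star)$, where the long-run average of $l_\theta$ is $2\phi(\theta^\star-\theta^A)-\lambda^\star$. The consistency condition \eqref{mfg:cons:lambda} only controls $l_\theta$ along the optimal path $X^{A^\star}$.

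The missing idea is a first-order (variational) optimality condition for the frozen problem, which is strictly stronger than value-optimality. Perturb $A^\epsilon=(1-\epsilon)A^\star+\epsilon A$ and use four facts: linearity of $A\mapsto X^A$; the bound $A^{\epsilon,\pm}\le(1-\epsilon)A^{\star,\pm}+\epsilon A^{\pm}$ from minimality of the Jordan decomposition; the exact quadratic expansion of $h=l(\cdot,\theta^\star)+\lambda^\star(\theta^\star-\cdot)$; and the fact that $\bfJ(A^\star,\theta^\star,\lambda^\star)$ is a genuine limit. Then for every $A$ with $\bfJ(A,\theta^\star,\lambda^\star)<\infty$ (which follows from $\bfJ^{{\rm mfc}}(A)<\infty$) one gets
\[
\varliminf_{T\to\infty}\frac1T\E\Big[\int_0^T\big(l_x(X^{A^\star}_t,\theta^\star)-\lambda^\star\big)(X^{A^\star}_t-X^A_t)\,\dd t+K_+(A^{\star,+}_T-A^+_T)+K_-(A^{\star,-}_T-A^-_T)\Big]\le0 .
\]
The paper then uses \emph{joint} convexity of $(x,\theta)\mapsto l(x,\theta)$, linearized at $(X^{A^\star}_t,\theta^\star)$. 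The $l_\theta$-term averages to $-\lambda^\star(\theta^\star-\theta^A)=-\lambda^\star\lim_T\frac1T\E\int_0^T(X^{A^\star}_t-X^A_t)\,\dd t$ and merges with the $l_x$-term. Hence $\bfJ^{{\rm mfc}}(A^\star)-\bfJ^{{\rm mfc}}(A)$ is bounded by the left-hand side above, which is $\le 0$.

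Alternatively, you can keep your decomposition. The same condition plus the quadratic remainder and Jensen give $\bfJ(A,\theta^\star,\lambda^\star)-\bfJ(A^\star,\theta^\star,\lambda^\star)\ge\phi\varliminf_T\frac1T\E\int_0^T|X^A_t-X^{A^\star}_t|^2\,\dd t\ge\phi(\theta^A-\theta^\star)^2$. This absorbs the $(\psi-\phi)$-term and even yields $\bfJ^{{\rm mfc}}(A)-\bfJ^{{\rm mfc}}(A^\star)\ge\psi(\theta^A-\theta^\star)^2$.

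Your admissibility check for $A^\star$ and the identification $\bfJ^{{\rm mfc}}(A^\star)=\bfJ(A^\star,\theta^\star,\lambda^\star)$ are fine.
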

	
\begin{proof}
The complete proof is given in Appendix~\ref{app:stationary_MFC_problem}, under the heading \emph{Proof of Theorem~\ref{thm:mfg_to_mfc}}.
\end{proof}


\section{Conclusion}

In this paper, we provided a framework for the optimal design of a Market Stability Reserve in a stochastic and stationary environment. We formulate the regulator's problem as a stationary mean-field control problem with two-sided singular controls and regime switching. The optimal policy is shown to take the form of regime-dependent reflecting bands. Allowances are released when the TNAC reaches a lower threshold and withdrawn when it reaches an upper threshold. The stationary mean and the four intervention thresholds are jointly determined by a scalar consistency condition. 

The numerical analysis of our model is based on a direct resolution of the variational inequality without imposing an ex ante ordering of the intervention boundaries. The comparative statics show that the volatility of the allowance balance is the main determinant of the width of the optimal intervention band, whereas the frequency and duration of adverse economic regimes have a much smaller quantitative effect. Moreover, replacing the regime-dependent policy by a single pair of thresholds based on average economic conditions produces almost identical liquidity outcomes and intervention volumes. Hence, within our calibration, explicitly conditioning the MSR on the current macroeconomic regime brings little additional benefit. This result provides support for a parsimonious regulatory rule based on a stable intervention band, whose calibration should primarily reflect the magnitude of fluctuations in allowance demand rather than attempt to react to each change in the state of the economy.

This conclusion also complements the perspective developed in \cite{Aid23}. In that paper, allowance regulation is studied through its effect on the equilibrium carbon price and, under a quadratic specification, the optimal dynamic regulation eliminates equilibrium price volatility. The present analysis approaches the same regulatory problem from the quantity side. In the authors model, rather than controlling the equilibrium price directly, the regulator optimally stabilizes the stock of allowances available to the market through a reserve mechanism. Taken together, the two results point towards a broader view of carbon-market regulation in which the supply of allowances is used not only to ensure the long-run scarcity required by the emissions target, but also to absorb short-run shocks that would otherwise translate into excessive fluctuations in market conditions. These results should be put into perspective with the seminal work of \cite{Weitzman74} on the optimal instruments to achieve a polution reduction at the least cost in a static setting. The introduction of dynamics in polution but also in the capacity of firms to reduce their emissions enriches the problem and its potential optimal solutions depending on social objectives. Deeper research on these dynamic aspects of emissions reduction may lead to a non-binary choice to be made between instruments based on quantities (allowances) and instruments based on prices (taxes). This paper should be read as a contribution to this field of on-going research.

	
	\appendix
	\section{Appendices}
	
	\subsection{The ergodic singular control problem}\label{app:ergodic_singular_ctrl}
	\textit{Proof of Proposition \ref{prop:dynkin_game}-\ref{prop:dynkin_game:value}}.
	By a straightforward application of \cite[Theorem 2.1]{peskir2008optimal_stopping_games}, as in \cite[Theorem 3.1]{dianetti2026ergodic}, we have that
	\begin{equation*}
		U^{\theta,\lambda}(x,i) \coloneqq \sup_ {\eta \geq 0} \inf_{\tau \geq 0} \E_{x,i}\left[\int_0^{\tau \land \eta} e^{ -\int_0^t \delta_{\varepsilon_s} \, \dd s } \big( l_x(X_t,\theta) - \lambda \big) \, dt + K_- e^{- \int_0^\tau \delta_{\varepsilon_s} \, \dd s } \ind_{\tau < \eta} - K_+ e^{-\int_0^\eta \delta_{\varepsilon_s} \, \dd s } \ind_{\eta < \tau} \right]
	\end{equation*}
	so that $U^{\theta,\lambda}(x,i)$ is the value of the Dynkin game. Moreover, a saddle point for the Dynkin game exists and it is given by
	\begin{equation}\label{eq:dynkin:saddle_point}
		\tau^\star(x,i,\theta,\lambda) \coloneqq \inf\{ t \geq 0: \, (X_t,\varepsilon_t) \in \cS^{\theta,\lambda}_-\}, \quad \eta^\star(x,i,\theta,\lambda) \coloneqq \inf\{ t \geq 0: \, (X_t,\varepsilon_t) \in \cS^{\theta,\lambda}_+\},
	\end{equation}
	with $\cS^{\theta,\lambda}_\pm$ and $\cC^{\theta,\lambda}$ given by \eqref{eq:stopping_region}.
	
	\vspace{0.25cm}
	\textit{Proof of Proposition \ref{prop:dynkin_game}-\ref{prop:dynkin_game:thresholds}}.
	The proof is organized into the following steps:
	
	\textbf{Step 1.} We show that the map $x \mapsto U^{\theta,\lambda}(x,i)$ is non-decreasing for any $(i,\theta,\lambda)$ fixed.
	To this extent, denote by $(X^{x,i},\varepsilon^{i})$ the solution of \eqref{eq:dynamics:uncontrolled} starting from $(x,i)$ at time $0$, to stress the dependence on the initial data.
	We observe that $X^{x,i}$ has the following explicit representation:
	\begin{equation}\label{eq:explicit_representation}
		X^{x,i}_t = e^{-\int_0^t\delta_{\varepsilon^{i}_s} \dd s}\left( x + \sigma\int_0^t e^{\int_0^s \delta_{\varepsilon^{i}_u} du} \dd W_s \right).
	\end{equation}
	Moreover, for $\tau$, $\eta$ $\bbF$-stopping times, define
	\begin{equation}\label{eq:M_tau_eta}
		M^{\theta,\lambda}_{x,i}(\tau,\eta) \coloneqq \E_{x,i}\left[\int_0^{\tau \land \eta} e^{ -\int_0^t \delta_{\varepsilon_s} \, \dd s } \big( l_x(X_t,\theta) - \lambda \big) \, \dd t + K_- e^{- \int_0^\tau \delta_{\varepsilon_s} \, \dd s } \ind_{\tau < \eta} - K_+ e^{-\int_0^\eta \delta_{\varepsilon_s} \, \dd s } \ind_{\eta < \tau} \right],
	\end{equation}
	so that $U^{\theta,\lambda}(x,i) = \inf_{\tau \geq 0} \sup_ {\eta \geq 0} M^{\theta,\lambda}_{x,i}(\tau,\eta)$, with $U^{\theta,\lambda}$ defined by \eqref{eq:dynkin}.
	By relying on the explicit representation \eqref{eq:explicit_representation}, it follows that, for $x_1\leq x_2$, $X^{x_1,i}_t\leq X^{x_2,i}_t$ for every $t\geq 0$, $\P$-a.s.
	Since $x\mapsto l_x(x,\theta)-\lambda$ is increasing for every $(\theta,\lambda)\in\R^2$, the map $x\mapsto M^{\theta,\lambda}_{x,i}(\tau,\eta)$ is non-decreasing for every pair of stopping times $(\tau,\eta)$.
	Thus, $x\mapsto U^{\theta,\lambda}(x,i)$ is non-decreasing.
	
	\smallskip
	\textbf{Step 2.} $U^{\theta,\lambda}(x,i)$ is jointly Lipschitz continuous in $(x,\theta,\lambda)$, for any fixed $i \in \cE$.
	Let $(x,\theta,\lambda),(y,\theta',\lambda')\in\R^3$ and fix $\epsilon>0$.
	Since the Dynkin game \eqref{eq:dynkin} admits a value, it is possible to choose a stopping time $\bar\eta$ such that
	\[
	U^{\theta,\lambda}(x,i)=\sup_{\eta}\inf_{\tau}M^{\theta,\lambda}_{x,i}(\tau,\eta)\leq \inf_{\tau}M^{\theta,\lambda}_{x,i}(\tau,\bar\eta)+\epsilon\leq M^{\theta,\lambda}_{x,i}(\tau,\bar\eta)+\epsilon, \quad \forall \tau.
	\]
	Moreover, choose a stopping time $\bar\tau$ such that
	\[
	U^{\theta',\lambda'}(y,i)=\inf_{\tau}\sup_{\eta}M^{\theta',\lambda'}_{y,i}(\tau,\eta)\geq \sup_{\eta}M^{\theta',\lambda'}_{y,i}(\bar\tau,\eta)-\epsilon\geq M^{\theta',\lambda'}_{y,i}(\bar\tau,\eta)-\epsilon, \quad \forall \eta.
	\]
	Taking $\tau=\bar\tau$ in the first inequality and $\eta=\bar\eta$ in the second one, we obtain
	\[
	U^{\theta,\lambda}(x,i)-U^{\theta',\lambda'}(y,i)\leq M^{\theta,\lambda}_{x,i}(\bar\tau,\bar\eta)-M^{\theta',\lambda'}_{y,i}(\bar\tau,\bar\eta)+2\epsilon.
	\]
	Hence, since $l_x(x,\theta) = 2\phi(x-\theta) + \rho$, we get
	\[
	U^{\theta,\lambda}(x,i)-U^{\theta',\lambda'}(y,i)\leq \E\left[\int_0^\infty e^{-\int_0^t\delta_{\varepsilon^{i}_s}ds}\left(2\phi |X^{x,i}_t-X^{y,i}_t|+2\phi|\theta-\theta'|+|\lambda-\lambda'|\right)\dd t\right]+2\epsilon.
	\]
	Exchanging the roles of $(x,\theta,\lambda)$ and $(y,\theta',\lambda')$, we also obtain
	\[
	|U^{\theta,\lambda}(x,i)-U^{\theta',\lambda'}(y,i)|\leq \E\left[\int_0^\infty e^{-\int_0^t\delta_{\varepsilon^{i}_s}\dd s}\left(2\phi |X^{x,i}_t-X^{y,i}_t|+2\phi|\theta-\theta'|+|\lambda-\lambda'|\right)\dd t\right]+2\epsilon.
	\]
	Set $\underline{\delta}\coloneqq \min_{i\in\cE}\delta_i>0$.
	By \eqref{eq:explicit_representation},
	\[
	|X^{x,i}_t-X^{y,i}_t|=e^{-\int_0^t\delta_{\varepsilon^{i}_s}\dd s}|x-y|.
	\]
	Therefore,
	\[
	\E\left[\int_0^\infty e^{-\int_0^t\delta_{\varepsilon^{i}_s} \dd s}|X^{x,i}_t-X^{y,i}_t| \dd t\right]\leq |x-y|\int_0^\infty e^{-2\underline{\delta}t}\dd t=\frac{|x-y|}{2\underline{\delta}},
	\]
	and
	\[
	\E\left[\int_0^\infty e^{-\int_0^t\delta_{\varepsilon^{i}_s}\dd s}\dd t\right]\leq \int_0^\infty e^{-\underline{\delta}t}\dd t=\frac{1}{\underline{\delta}}.
	\]
	Letting $\epsilon\downarrow0$, we conclude that there exists $C>0$ such that
	\[
	|U^{\theta,\lambda}(x,i)-U^{\theta',\lambda'}(y,i)|\leq C\left(|x-y|+|\theta-\theta'|+|\lambda-\lambda'|\right),
	\]
	for every $(x,\theta,\lambda),(y,\theta',\lambda')\in\R^3$.
	This proves the joint Lipschitz continuity of $U^{\theta,\lambda}(\cdot,i)$ in $(x,\theta,\lambda)$.
	
	\smallskip
	\textbf{Step 3.} For any $(i,\theta,\lambda) \in \cE \times \R^2$ define the thresholds
	\begin{equation}\label{eq:barriers}
		a_-(i,\theta,\lambda) \coloneqq \inf\{ x \in \R: \; U^{\theta,\lambda}(x,i) \geq K_- \}, \quad a_+(i,\theta,\lambda) \coloneqq \sup\{ x \in \R: \; U^{\theta,\lambda}(x,i) \leq -K_+ \}.
	\end{equation}
	By continuity and monotonicity of $U^{\theta,\lambda}$ and exploiting the bounds $-K_+ \leq U^{\theta,\lambda}(x,i) \leq K_-$, the sets $\cS^{\theta,\lambda}_\pm$ and $\cC^{\theta,\lambda}$ defined by \eqref{eq:stopping_region} can be expressed in terms of the barriers $a_\pm(i,\theta,\lambda)$ as in \eqref{eq:filtered:regions_U}.
	
	\smallskip
	\textbf{Step 4.} We now show that $a_\pm(i,\theta,\lambda)$ are always finite and that it holds $a_+(i,\theta,\lambda) < a_-(i,\theta,\lambda)$.
	To this extent, define the functions $x_\pm(i,\theta,\lambda)$ by
	\begin{equation}\label{eq:implicit_functions}
		x_-(i,\theta,\lambda) = \frac{ 2\phi \theta + \lambda - \rho + K_-\delta_i }{2\phi}, \qquad x_+(i,\theta,\lambda) =  \frac{ 2\phi \theta + \lambda - \rho - K_+\delta_i }{2\phi},
	\end{equation}
	and notice that $x_+(i,\theta,\lambda) < x_-(i,\theta,\lambda)$ for any $(i,\theta,\lambda)$, and that
	\begin{equation}\label{eq:unique_zeros}
		l_x(x,\theta) - \lambda  - K_- \delta_i \; \left\{ \begin{aligned}
			& > 0, \quad \forall x > x_-(i,\theta,\lambda),  \\
			& = 0, \quad \forall x = x_-(i,\theta,\lambda), \\
			& < 0, \quad \forall x < x_-(i,\theta,\lambda),
		\end{aligned} \right. \qquad     l_x(x,\theta) - \lambda  + K_+ \delta_i \; \left\{ \begin{aligned}
			& > 0, \quad \forall x > x_+(i,\theta,\lambda),  \\
			& = 0, \quad \forall x = x_+(i,\theta,\lambda), \\
			& < 0, \quad \forall x < x_+(i,\theta,\lambda).
		\end{aligned} \right.
	\end{equation}
	Next, define the processes
	\begin{align}
		& \Big( e^{-\int_0^{t \land \eta^*} \delta_{\varepsilon_u} \dd u }U^{\theta,\lambda}(X_{t \land \eta^*},\varepsilon_{t \land \eta^*}) + \int_0^{t\land \eta^*}e^{-\int_0^s \delta_{\varepsilon_u} \dd u } \big( l_x(X_s,\theta) -\lambda \big) \dd s \Big)_{t \geq 0},  \label{eq:lemma_regularity:subm} \\
		& \Big( e^{-\int_0^{t \land \tau^*} \delta_{\varepsilon_u} \dd u}U^{\theta,\lambda}(X_{t \land \tau^*},\varepsilon_{t \land \tau^*}) + \int_0^{t\land \tau^*}e^{-\int_0^s \delta_{\varepsilon_u} \dd u} \big(l_x (X_s,\theta) - \lambda \big) \dd s \Big)_{t \geq 0} \label{eq:lemma_regularity:supm}
	\end{align}
	and notice that, by \cite[Theorem 2.1]{peskir2008optimal_stopping_games}, they are a sub-martingale and a super-martingale under $\P_{x,i}$ respectively.
	Let now $(x,i) \in \cS^{\theta,\lambda}_-$.
	By using the sub-martingale property of the process \eqref{eq:lemma_regularity:subm} and the bound $U^{\theta,\lambda}(x,i) \leq K_-$, we deduce
	\begin{align*}
		K_- & = U^{\theta,\lambda}(x,i) \leq \E_{x,i}\left[ e^{-\int_0^{t \land \eta^*} \delta_{\varepsilon_u} \dd u }U^{\theta,\lambda}(X_{t \land \eta^*},\varepsilon_{t \land \eta^*}) + \int_0^{t\land \eta^*}e^{-\int_0^s \delta_{\varepsilon_u} \dd u } \big( l_x(X_s,\theta) -\lambda \big) \dd s \right] \\
		& \leq \E_{x,i}\left[ K_- e^{-\int_0^{t \land \eta^*} \delta_{\varepsilon_u} \dd u } + \int_0^{t\land \eta^*}e^{-\int_0^s \delta_{\varepsilon_u} \dd u } \big( l_x(X_s,\theta) -\lambda \big) \dd s  \right] \\
		& \leq K_- + \E_{x,i}\left[  \int_0^{t\land \eta^*}e^{-\int_0^s \delta_{\varepsilon_u} \dd u } \big( l_x(X_s,\theta) -\lambda -K_- \delta_{\varepsilon_s} \big) \dd s  \right]
	\end{align*}
	for any $t \geq 0$.
	This implies
	\begin{equation*}
		0 \leq \lim_{t \to 0}\frac{1}{t}\E_{x,i}\left[ \int_0^{t}e^{-\int_0^s\delta_{\varepsilon_u} \dd u} \big( l_x(X_s,\theta) -\lambda -K_- \delta_{\varepsilon_s} \big) \ind_{s < \eta^*} \dd s \right] = l_x(x,\theta) -\lambda -K_- \delta_{i}, \;\; \forall (x,i) \in \cS^{\theta,\lambda}_-,
	\end{equation*}
	where we used dominated convergence theorem, by virtue of the representation \eqref{eq:explicit_representation}.
	Analogously, for any $(x,i) \in \cS^{\theta,\lambda}_+$, by relying on the super-martingale property of the process \eqref{eq:lemma_regularity:supm}, we deduce 
	\begin{equation}\label{eq:barriers:necessarily_separated}
		\left\{ \begin{aligned}
			& l_x(x,\theta) - \lambda -K_-\delta_i \geq 0, \quad \forall (x,i) \in \cS^{\theta,\lambda}_-,  \\
			& l_x(x,\theta) - \lambda +K_+\delta_i \leq 0, \quad \forall (x,i) \in \cS^{\theta,\lambda}_+.
		\end{aligned} \right.
	\end{equation}
	Equation \eqref{eq:barriers:necessarily_separated} yields
	\begin{equation}
		\left\{ \begin{aligned}
			& a_-(i,\theta,\lambda) \geq \inf\{ x \in \R: l_x(x,\theta) -\lambda -K_- \delta_{i} \geq 0 \} = x_-(i,\theta,\lambda), \\
			& a_+(i,\theta,\lambda) \leq \sup\{ x \in \R: l_x(x,\theta) -\lambda +K_+ \delta_{i} \leq 0 \} = x_+(i,\theta,\lambda),
		\end{aligned} \right.
	\end{equation}
	with $x_+(i,\theta,\lambda) < x_-(i,\theta,\lambda)$, which implies that $a_+(i,\theta,\lambda) < a_-(i,\theta,\lambda)$.
	To show that $a_\pm(i,\theta,\lambda)$ are finite, we show by contradiction that $\cS^{\theta,\lambda}_\pm$ are nonempty.
	Indeed, suppose that $\cS^{\theta,\lambda}_+ = \emptyset$.
	Then, for any $(x,i) \in \cO$, it holds $-K_+ < U^{\theta,\lambda}(x,i)$, which implies that $\eta^* = \infty$ $\P$-a.s.
	Therefore, 
	\begin{multline*}
		-K_+ < U^{\theta,\lambda}(x,i) = \inf_{\tau}\E_{x,i}\left[\int_0^{\tau} e^{-\int_0^t \delta_{\varepsilon_u} \dd u} \big( l_x(X_t,\theta) - \lambda \big) \dd t + K_- e^{-\int_0^\tau \delta_{\varepsilon_s} \dd s} \ind_{\{ \tau < \infty\}}\right] \\
		\leq \E_{x,i}\left[\int_0^\infty e^{-\int_0^t \delta_{\varepsilon_u} \dd u } \big( l_x(X_t,\theta) - \lambda \big) \dd t \right] \leq C_1 x + C^{\theta,\lambda},
	\end{multline*}
	for all $x \in \R$, where $C^{\theta,\lambda}$ is a positive constant depending on $(\theta,\lambda)$ and $C_1$ is a positive constant independent of $(\theta,\lambda)$.
	Hence, sending $x\to-\infty$, the right-hand side converges to $-\infty$, yielding a contradiction.
	The proof of $a_-(i,\theta,\lambda) < +\infty$ is analogous.
	
	\vspace{0.25cm}
	\textit{Proof of Proposition \ref{prop:dynkin_game}-\ref{prop:dynkin_game:free_boundary_pde}.}
	The regularity of $U^{\theta,\lambda}(\cdot,i)$ $i=1,2$, and the fact that it solves the free-boundary problem \eqref{eq:dynkin:pde} follow by adapting the proof of \cite[Proposition 3.3]{dianetti2026ergodic} to the minimization present setting.
	The sign properties \eqref{eq:dynkin:inequalities} follow from the sub and super-martingale properties of the processes in \eqref{eq:lemma_regularity:subm} and \eqref{eq:lemma_regularity:supm} respectively (see again \cite[Theorem 2.1]{peskir2008optimal_stopping_games}).
	
	\bigskip
	\textit{Proof of Proposition \ref{prop:hjb_frozen}.}
	Point \ref{prop:hjb_frozen:hjb} follows by construction, with the same computations as in  \cite[Proposition 3.4]{dianetti2026ergodic}. As for point \ref{prop:hjb_frozen:optimal_ctrl}, we notice that the Skorohod reflection problem admits a unique solution by \cite[Theorem 3.2]{dianetti2026ergodic}. Optimality follows by adapting the verification argument of \cite[Proposition 3.5]{dianetti2026ergodic} to the present minimization setting.
	We are left with proving point \ref{prop:hjb_frozen:limsup}.
	Indeed, by the verification argument leading to point \ref{prop:hjb_frozen:optimal_ctrl}, including the possible projection jumps of the reflected process, it holds, for every $T>0$,
	\begin{multline*}
		\E_{x,i}\left[\int_0^T\big(l(X^{A^\star(\theta,\lambda)}_t,\theta)+\lambda(\theta-X^{A^\star(\theta,\lambda)}_t)\big) \dd t + K_+A^{\star,+}_T(\theta,\lambda)+K_-A^{\star,-}_T(\theta,\lambda)\right] \\
		= \E_i\left[\int_0^T\kappa^{\theta,\lambda}(\varepsilon_t)\dd t\right]+V^{\theta,\lambda}(x,i)-\E_{x,i}\left[V^{\theta,\lambda}(X^{A^\star(\theta,\lambda)}_T,\varepsilon_T)\right].
	\end{multline*}
	Since $X^{A^\star(\theta,\lambda)}$ is reflected between finite boundaries, $V^{\theta,\lambda}(X^{A^\star(\theta,\lambda)}_T,\varepsilon_T)$ is uniformly bounded in $T$. Hence the last two terms, divided by $T$, vanish as $T\to\infty$. Since $\varepsilon$ is ergodic with stationary distribution $\pi$, \eqref{eq:cost:limit_exists} follows.
	
	\subsection{The fixed-point problem}\label{app:fixed_point}

	\begin{lemma}\label{lemma:theta:boundary_regularity}
		The maps $\theta \mapsto a_\pm(i,\theta,\lambda)$ are non-decreasing and bi-Lipschitz for any fixed $i \in \cE$ and $\lambda \in \R$, in the sense that
		\begin{equation}\label{eq:bilipschitz_theta}
			\vert \theta - \theta' \vert \leq \vert a_\pm(i,\theta,\lambda) - a_\pm(i,\theta',\lambda) \vert \leq  \frac{2\delta_1}{\delta_2} \vert \theta - \theta' \vert.
		\end{equation}
		Analogously, the maps $\lambda \mapsto a_\pm(i,\theta,\lambda)$ are non-decreasing and bi-Lipschitz for any fixed $i \in \cE$ and $\theta \in \R$, in the sense that
		\begin{equation}\label{eq:bilipschitz_lambda}
			\frac{1}{2\phi} \vert \lambda- \lambda' \vert \leq \vert a_\pm(i,\theta,\lambda) - a_\pm(i,\theta,\lambda') \vert \leq  \frac{2\delta_1}{\delta_2}\frac{1}{2\phi} \vert \lambda - \lambda' \vert.
		\end{equation}
	\end{lemma}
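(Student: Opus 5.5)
The plan is to reduce everything to how the running payoff $l_x(x,\theta)-\lambda=2\phi(x-\theta)+\rho-\lambda$ reacts to shifts, and then to compare Dynkin games through the explicit representation \eqref{eq:explicit_representation}.

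\textbf{Reduction to $\theta$.} The parameters enter \eqref{eq:dynkin} only through $2\phi\theta+\lambda$. Hence $a_\pm(i,\theta,\lambda)=a_\pm\big(i,\theta+\lambda/(2\phi),0\big)$. Once \eqref{eq:bilipschitz_theta} is proved, \eqref{eq:bilipschitz_lambda} follows by substituting $\theta-\theta'=(\lambda-\lambda')/(2\phi)$. So it suffices to treat $\theta$.

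\textbf{Monotonicity.} For $\theta\le\theta'$ we have $l_x(\cdot,\theta')-\lambda\le l_x(\cdot,\theta)-\lambda$ pointwise. Since $M^{\theta,\lambda}_{x,i}(\tau,\eta)$ in \eqref{eq:M_tau_eta} is monotone in its running integrand for every pair $(\tau,\eta)$, we get $U^{\theta',\lambda}\le U^{\theta,\lambda}$. Therefore the set $\{U\ge K_-\}$ shrinks and the set $\{U\le -K_+\}$ grows as $\theta$ increases. By \eqref{eq:barriers}, both $a_-$ and $a_+$ are non-decreasing in $\theta$.

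\textbf{Lower Lipschitz bound.} Fix $h=\theta'-\theta>0$ and couple the processes started at $x$ and at $x+h$ on the same $(W,\varepsilon)$. By \eqref{eq:explicit_representation},
\[
X^{x+h}_t=X^x_t+e^{-\int_0^t\delta_{\varepsilon_s}\dd s}\,h ,
\]
so the running integrand satisfies
\[
l_x(X^{x+h}_t,\theta')-\lambda=l_x(X^x_t,\theta)-\lambda+2\phi h\big(e^{-\int_0^t\delta_{\varepsilon_s}\dd s}-1\big)\le l_x(X^x_t,\theta)-\lambda .
\]
The terminal payoffs $\pm K_\mp e^{-\int\delta}$ do not depend on $x$, so for every $(\tau,\eta)$ we have $M^{\theta',\lambda}_{x+h,i}\le M^{\theta,\lambda}_{x,i}$, hence $U^{\theta',\lambda}(x+h,i)\le U^{\theta,\lambda}(x,i)$. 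Taking $x<a_-(i,\theta,\lambda)$ gives $U^{\theta',\lambda}(x+h,i)<K_-$, so $a_-(i,\theta',\lambda)\ge a_-(i,\theta,\lambda)+h$.

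For $a_+$ this inequality has the wrong direction, and I would argue through the free-boundary characterisation instead. Set $w(x,i)\coloneqq U^{\theta,\lambda}(x-h,i)$. Using \eqref{eq:dynkin:pde}, $w$ satisfies in its continuation set
\[
-\delta_i w+\cL_{(X^0,\varepsilon)}w+l_x(\cdot,\theta')-\lambda=\delta_i h\,w_x\ \ge 0 .
\]
The sign holds because $w_x\ge0$ by Step 1. I would then use a comparison principle for the coupled double-obstacle problem \eqref{eq:main_thm:free_boundary_problem}, proved by a Dynkin/It\^o argument with the saddle point \eqref{eq:dynkin:saddle_point}. This should give $w\ge U^{\theta',\lambda}$ near the lower obstacle, and hence $a_+(i,\theta',\lambda)\ge a_+(i,\theta,\lambda)+h$.

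\textbf{Upper bound with constant $c=2\delta_1/\delta_2$.} This is the main obstacle. I would compare $U^{\theta',\lambda}(x+ch,\cdot)$ with $U^{\theta,\lambda}(x,\cdot)$. A pointwise comparison of running integrands fails, because $e^{-\int\delta}ch-h$ changes sign for large $t$. I would therefore work at the PDE level, with $w_c(x,i)\coloneqq U^{\theta,\lambda}(x-ch,i)$. The shift now produces the residual $\delta_i ch\,w_x-2\phi(c-1)h$, and I would try to show it has the opposite sign near each boundary.

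To do so I would use a priori gradient bounds on the continuation interval. The bound $U_x\le 2\phi/\underline\delta$ comes from Step 2, with $\underline\delta=\delta_2$ by Assumption~\ref{assumption:delta}. A matching lower bound on $U_x$ near the free boundaries would come from the localisation $a_+\le x_+$, $a_-\ge x_-$ in \eqref{eq:implicit_functions}, together with the smooth-fit condition. The ratio of the largest to the smallest depletion rate, $\delta_1/\delta_2$, is where the constant $2\delta_1/\delta_2$ should emerge. The same comparison principle as above then yields $a_\pm(i,\theta',\lambda)\le a_\pm(i,\theta,\lambda)+ch$.

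If the PDE route becomes too delicate because of the regime coupling term in $\cL_{(X^0,\varepsilon)}$, I would fall back on a probabilistic version. That is, restrict the comparison of $M$ to stopping times bounded by the exit time of the continuation region, where $e^{-\int\delta}$ is bounded below in terms of $\delta_1$.
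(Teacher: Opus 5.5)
Your reduction to $\theta$ through the dependence on $2\phi\theta+\lambda$ is correct, and it saves the paper's separate argument for $\lambda$. Your monotonicity argument and the lower bound for $a_-$ are also correct. The genuine gap is the upper bound with constant $c=2\delta_1/\delta_2$, which is the substance of the lemma: it is not established, and the PDE route you sketch does not lead there. For $w_c=U^{\theta,\lambda}(\cdot-ch,\cdot)$ to compare with $U^{\theta',\lambda}$, you need the one-sided inequality $\delta_i c\,U_x\le 2\phi(c-1)$ on the whole continuation set of both regimes, not sign information near the boundaries. Smooth fit gives $U_x=0$ at $a_\pm$, so it provides no lower bound on $U_x$ there, and a lower bound would push in the wrong direction anyway. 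The global bound you cite, $U_x\le 2\phi/\delta_2$, cannot give the inequality for any $c$, since $\delta_i c\cdot 2\phi/\delta_2>2\phi(c-1)$. The fallback also fails: the exit time from the continuation region is unbounded, so $e^{-\int_0^t\delta_{\varepsilon_s}\dd s}$ is not bounded below up to it.

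The missing idea is that no pointwise comparison of integrands is needed. The comparison must hold for every pair $(\tau,\eta)$, so it suffices that the integral over each initial interval $[0,u]$, with $u=\tau\wedge\eta\in[0,\infty]$, has a sign. With $h=\theta'-\theta>0$,
\[
M^{\theta',\lambda}_{x+ch,i}(\tau,\eta)-M^{\theta,\lambda}_{x,i}(\tau,\eta)=2\phi h\,\E\Big[\int_0^{\tau\wedge\eta}\big(c\,e^{-2\int_0^t\delta_{\varepsilon_s}\dd s}-e^{-\int_0^t\delta_{\varepsilon_s}\dd s}\big)\dd t\Big].
\]
Using $\delta_2<2\delta_1$, one has pathwise
\[
\int_0^u e^{-\int_0^t\delta_{\varepsilon_s}\dd s}\dd t\le\frac{1-e^{-\delta_2u}}{\delta_2}\le\frac{1-e^{-2\delta_1u}}{\delta_2}=\frac{2\delta_1}{\delta_2}\int_0^u e^{-2\delta_1t}\dd t\le\frac{2\delta_1}{\delta_2}\int_0^u e^{-2\int_0^t\delta_{\varepsilon_s}\dd s}\dd t .
\]
Hence $U^{\theta,\lambda}(x,i)\le U^{\theta',\lambda}(x+ch,i)$, which gives $a_\pm(i,\theta',\lambda)\le a_\pm(i,\theta,\lambda)+ch$ for both thresholds at once.

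Relatedly, your claim that $U^{\theta',\lambda}(x+h,i)\le U^{\theta,\lambda}(x,i)$ ``has the wrong direction'' for $a_+$ is mistaken. At $x=a_+(i,\theta,\lambda)$ it gives $U^{\theta',\lambda}(x+h,i)\le-K_+$, i.e.\ $x+h\in\cS^{\theta',\lambda}_+$, so $a_+(i,\theta',\lambda)\ge a_+(i,\theta,\lambda)+h$ directly. Because $U$ is non-decreasing, one shifted inequality moves both level crossings in the same direction. Your PDE detour there also has the sign reversed: the residual of $w$ is $-\delta_i h\,w_x\le 0$. The comparison $w\ge U^{\theta',\lambda}$ that it would yield is just the inequality you had already proved.
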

	
	\begin{proof}
		Fix $i\in\cE$ and $\lambda\in\R$, and let $\theta_1<\theta_2$. We claim that the following estimates hold:
		\begin{equation}\label{eq:double_estimate_theta}
			U(x+\theta_2-\theta_1,i,\theta_2,\lambda)\leq U(x,i,\theta_1,\lambda)\leq U\left(x+\frac{2\delta_1}{\delta_2}(\theta_2-\theta_1),i,\theta_2,\lambda\right), \qquad x\in\R.
		\end{equation}
		We first show that \eqref{eq:bilipschitz_theta} follows from \eqref{eq:double_estimate_theta}.
		We focus on $a_+(\theta,\lambda,i)$, since the reasoning for $a_-(i,\theta,\lambda)$ is completely analogous.
		Since $U(a_+(i,\theta_1,\lambda),i,\theta_1,\lambda)=-K_+$, the first inequality in \eqref{eq:double_estimate_theta}, together with the bound $U\geq-K_+$, implies $U(a_+(i,\theta_1,\lambda)+\theta_2-\theta_1,i,\theta_2,\lambda)=-K_+$ as well. Therefore, $(a_+(i,\theta_1,\lambda) + \theta_2 - \theta_1,i) \in \cS^{\theta_2,\lambda}_+$, which implies
		\[
		a_+(i,\theta_1,\lambda)+\theta_2-\theta_1\leq a_+(i,\theta_2,\lambda).
		\]
		On the other hand, if $x>a_+(i,\theta_1,\lambda)$, then $U(x,i,\theta_1,\lambda)>-K_+$.
		The second inequality in \eqref{eq:double_estimate_theta} therefore yields $U(x+\frac{2\delta_1}{\delta_2}(\theta_2-\theta_1),i,\theta_2,\lambda)>-K_+$. Thus, $(x+\frac{2\delta_1}{\delta_2}(\theta_2-\theta_1),i) \notin \cS^{\theta_2,\lambda}_+$, which implies
		\[
		x+\frac{2\delta_1}{\delta_2}(\theta_2-\theta_1)>a_+(i,\theta_2,\lambda).
		\]
		Letting $x\downarrow a_+(i,\theta_1,\lambda)$, we obtain $a_+(i,\theta_2,\lambda)\leq a_+(i,\theta_1,\lambda)+\frac{2\delta_1}{\delta_2}(\theta_2-\theta_1)$.
		Consequently,
		\begin{equation}\label{eq:theta_bound_a_plus} \theta_2-\theta_1\leq a_+(i,\theta_2,\lambda)-a_+(i,\theta_1,\lambda)\leq\frac{2\delta_1}{\delta_2}(\theta_2-\theta_1). \end{equation}
		Equation \eqref{eq:theta_bound_a_plus} implies that the map $\theta\mapsto a_+(i,\theta,\lambda)$ is non-decreasing and satisfies \eqref{eq:bilipschitz_theta}.
		It remains to prove \eqref{eq:double_estimate_theta}. By the explicit representation \eqref{eq:explicit_representation} of $X^{x,i}$, for every $r\in\R$,
		\[
		X_t^{x+r,i}-X_t^{x,i}=e^{-\int_0^t\delta_{\varepsilon_s}\,\dd s}r, \qquad t\geq0.
		\]
		Recall the definition of $M^{\theta,\lambda}_{x,i}(\tau,\eta)$ in \eqref{eq:M_tau_eta}.
		Since $l_x(x,\theta)=\rho+2\phi(x-\theta)$, for every pair of stopping times $(\tau,\eta)$ we have
		\begin{equation}\label{eq:payoff_difference_theta} 
			M^{\theta_2,\lambda}_{x+r,i}(\tau,\eta)-M^{\theta_1,\lambda}_{x,i}(\tau,\eta) =2\phi\,\E\left[\int_0^{\tau\wedge\eta}e^{-\int_0^t\delta_{\varepsilon_s}\,\dd s}\left(e^{-\int_0^t\delta_{\varepsilon_s}\,\dd s}r-(\theta_2-\theta_1)\right)\dd t\right]. 
		\end{equation}
		Taking $r=\theta_2-\theta_1$ in \eqref{eq:payoff_difference_theta}, and using $e^{-\int_0^t\delta_{\varepsilon_s} ds} \leq1$, we obtain $M^{\theta_2,\lambda}_{x+\theta_2-\theta_1,i}(\tau,\eta) - M^{\theta_1,\lambda}_{x,i}(\tau,\eta) \leq 0$, for every pair $(\tau,\eta)$, i.e. $M^{\theta_2,\lambda}_{x+\theta_2-\theta_1,i}(\tau,\eta) \leq  M^{\theta_1,\lambda}_{x,i}(\tau,\eta)$. Taking the supremum in $\eta$ and the infimum in $\tau$ yields the first inequality in \eqref{eq:double_estimate_theta}.
		To prove the second inequality in \eqref{eq:double_estimate_theta}, take $r = \frac{2\delta_1}{\delta_2}(\theta_2 - \theta_1)$ in \eqref{eq:payoff_difference_theta}, which yields
		\[
		M^{\theta_2,\lambda}_{x+\frac{2\delta_1}{\delta_2}(\theta_2 - \theta_1),i}(\tau,\eta)-M^{\theta_1,\lambda}_{x,i}(\tau,\eta) =2\phi (\theta_2 - \theta_1) \E\left[ \frac{2\delta_1}{\delta_2}\int_0^{\tau\wedge\eta}e^{-2\int_0^t\delta_{\varepsilon_s}\,\dd s} \dd t - \int_0^{\tau\wedge\eta}e^{-\int_0^t\delta_{\varepsilon_s}\,\dd s} \dd t \right]. 
		\]
		By explicit computations, exploiting $\delta_2 < 2\delta_1$ (see Assumption \ref{assumption:delta}), we have
		\begin{equation}\label{eq:discount_integral_estimate}
			\int_0^u e^{-\int_0^t\delta_{\varepsilon_s}\,\dd s} \dd t \leq \int_0^u e^{-\delta_2 t} \dd t = \frac{1-e^{-\delta_2 u}}{\delta_2} \leq \frac{1-e^{-2\delta_1 u}}{\delta_2} = \frac{2\delta_1}{\delta_2} \int_0^u e^{-2\delta_1 t} \dd t \leq   \frac{2\delta_1}{\delta_2} \int_0^u e^{ -2\int_0^t\delta_{\varepsilon_s}\,\dd s }\dd t,
		\end{equation}
		which, by evaluating in $u = \tau \land \eta$ and taking the expectation, yields
		\[
		M^{\theta_2,\lambda}_{x+\frac{2\delta_1}{\delta_2}(\theta_2 - \theta_1),i}(\tau,\eta)-M^{\theta_1,\lambda}_{x,i}(\tau,\eta) \geq 0
		\]
		This implies, by taking again the infimum in $\tau$ and supremum in $\eta$, the second estimate in \eqref{eq:double_estimate_theta}.
		
		We now turn to the dependence on $\lambda$. Fix $i\in\cE$ and $\theta\in\R$, and let $\lambda_1<\lambda_2$. We establish
		\begin{equation}\label{eq:double_estimate_lambda}
			U\left(x+\frac{\lambda_2-\lambda_1}{2\phi},i,\theta,\lambda_2\right)\leq U(x,i,\theta,\lambda_1)\leq U\left(x+\frac{2\delta_1}{\delta_2}\frac{\lambda_2-\lambda_1}{2\phi},i,\theta,\lambda_2\right), \qquad x\in\R. 
		\end{equation}
		Arguing exactly as above with the lower and upper stopping regions, \eqref{eq:double_estimate_lambda} yields
		\begin{equation}\label{eq:lambda_bound_a}
			\frac{\lambda_2-\lambda_1}{2\phi}\leq a_\pm(i,\theta,\lambda_2)-a_\pm(i,\theta,\lambda_1)\leq\frac{2\delta_1}{\delta_2}\frac{\lambda_2-\lambda_1}{2\phi}.
		\end{equation}
		Thus, the maps $\lambda\mapsto a_\pm(i,\theta,\lambda)$ are non-decreasing and satisfy \eqref{eq:bilipschitz_lambda}.
		It remains to prove \eqref{eq:double_estimate_lambda}. For every $r\in\R$ and every pair of stopping times $(\tau,\eta)$,
		\begin{equation}\label{eq:payoff_difference_lambda}
			M^{\theta,\lambda_2}_{x+r,i}(\tau,\eta)-M^{\theta,\lambda_1}_{x,i}(\tau,\eta) =\E\left[\int_0^{\tau\wedge\eta}e^{-\int_0^t\delta_{\varepsilon_s}\,\dd s}\left(2\phi e^{-\int_0^t\delta_{\varepsilon_s}\,\dd s} r - (\lambda_2-\lambda_1)\right)\dd t\right].
		\end{equation}
		Taking $r=\frac{\lambda_2-\lambda_1}{2\phi}$, and using again $e^{-\int_0^t\delta_{\varepsilon_s}\,ds}\leq1$, we obtain
		\[
		M^{\theta,\lambda_2}_{x+\frac{\lambda_2-\lambda_1}{2\phi},i}(\tau,\eta)\leq M^{\theta,\lambda_1}_{x,i}(\tau,\eta),
		\]
		and hence the first estimate in \eqref{eq:double_estimate_lambda}.
		On the other hand, taking $r=\frac{2\delta_1}{\delta_2}\frac{\lambda_2-\lambda_1}{2\phi}$, in \eqref{eq:payoff_difference_lambda} and using \eqref{eq:discount_integral_estimate}, we obtain
		\[
		M^{\theta,\lambda_1}_{x,i}(\tau,\eta)\leq M^{\theta,\lambda_2}_{x+\frac{2\delta_1}{\delta_2}\frac{\lambda_2-\lambda_1}{2\phi},i}(\tau,\eta).
		\]
		Taking the values of the Dynkin games gives the second estimate in \eqref{eq:double_estimate_lambda}.
		This concludes the proof.
	\end{proof}
	
	\begin{proof}[Proof of Theorem \ref{thm:fixed_point_existence}]
		We first prove that $\Phi$ is continuous. To this end, we would like to apply \cite[Theorem 4.1]{dianetti2026ergodic}. This result cannot be invoked verbatim, since Assumption 4.1-(3) therein requires $\sum_j q_{ji}\leq0$, a condition that is not imposed here. However, in the proof of Claim 2 this assumption is used to verify the hypothesis $(H_\Psi)$ of the maximum principle in \cite[Theorem 1]{sirakov2009}. As observed in \cite[page 3]{sirakov2009}, $(H_\Psi)$ holds whenever there exists a constant positive vector $V$ such that $\mathcal C V\leq0$, where $\mathcal C$ denotes the coupling matrix of the system. In the present setting $\mathcal C=Q^\top$. Since the Markov chain is irreducible, its invariant distribution $\pi=(\pi_i)_{i\in\cE}^\top$ satisfies $\pi_i>0$ for every $i\in\cE$ and $Q^\top\pi=0$. Hence, taking $V:=\pi/\min_{i\in\cE}\pi_i$, condition $(H_\Psi)$ is satisfied. Therefore, the maximum-principle argument in the proof of \cite[Theorem 4.1, Claim 2]{dianetti2026ergodic} remains valid in our setting.
		By Lemma \ref{lemma:theta:boundary_regularity} and the continuity of $\lambda(\cdot)$, the maps $\theta\mapsto a_\pm(i,\theta)=a_\pm(i,\theta,\lambda(\theta))$ are continuous for every $i\in\cE$. Hence, the conclusion of \cite[Theorem 4.1, Claim 2]{dianetti2026ergodic} remains valid in our setting: the map $\R\ni\theta\mapsto p_\infty^\theta\in\cP(\cO)$ is continuous. Moreover, if $\theta_n\to\theta$, the convergence of the reflecting boundaries implies that, for all sufficiently large $n$, the measures $p_\infty^{\theta_n}$ and $p_\infty^\theta$ are supported on a common compact subset of $\R\times\cE$. Hence, weak convergence implies convergence of the first moments, and therefore $\Phi(\theta_n)\to\Phi(\theta)$. Thus, $\Phi$ is continuous.
		
		We now prove that $\theta\mapsto\Phi(\theta)-\theta$ changes sign. Set
		\begin{equation}\label{eq:z_theta}
			z(\theta):=\left(1-\frac{\psi}{\phi}\right)\theta+\frac{\psi}{\phi}\bar\theta-\frac{\rho}{2\phi},\qquad \Gamma(\theta):=\lambda(\theta)\theta+\phi\theta^2+\psi(\theta-\bar\theta)^2-\phi z(\theta)^2.
		\end{equation}
		We claim that there exists a constant $C>0$, independent of $\theta$, such that
		\begin{equation}\label{eq:fixed_point:second_moment_bound}
			\sum_{i\in\cE}\int_{\R}(x-z(\theta))^2p_\infty^\theta(\dd x,i)\leq C(1+|z(\theta)|).
		\end{equation}
		Equation \eqref{eq:fixed_point:second_moment_bound} implies the conclusion. Indeed, by \eqref{eq:fixed_point:second_moment_bound}, Jensen's inequality, and exploiting the affine structure of $z(\theta)$, 
		we deduce
		\[
		|\Phi(\theta)-z(\theta)| \leq \left( \sum_{i \in \cE} \int_{\R} (x - z(\theta))^2 p^\theta_\infty(\dd x,i) \right)^\frac{1}{2} \leq \sqrt{C (1+|z(\theta)|) }  \leq  C\sqrt{1+|\theta|},\qquad \theta\in\R.
		\]
		By exploiting the definition of $z(\theta)$ in \eqref{eq:z_theta}, it follows that, for $\theta \geq 0$, we have
		\[
		\Phi(\theta)-\theta = \big( \Phi(\theta) - z(\theta) \big) + \big( z(\theta) - \theta\big) \leq-\frac{\psi}{\phi}\theta+\frac{\psi}{\phi}\bar\theta-\frac{\rho}{2\phi}+C\sqrt{1+\theta}.
		\]
		Since $\psi>0$, the right-hand side is strictly negative for all $\theta>0$ large enough. Hence there exists $\theta_+>0$ such that $\Phi(\theta_+)-\theta_+<0$.
		Similarly, for $\theta\leq0$,
		\[
		\Phi(\theta)-\theta\geq-\frac{\psi}{\phi}\theta+\frac{\psi}{\phi}\bar\theta-\frac{\rho}{2\phi}-C\sqrt{1+|\theta|}.
		\]
		Since $\psi>0$, the right-hand side is strictly positive for all $\theta<0$ with $|\theta|$ large enough. Hence there exists $\theta_-<0$ such that $\Phi(\theta_-)-\theta_->0$.
		Since $\Phi$ is continuous, the intermediate value theorem yields $\theta^\star\in(\theta_-,\theta_+)$ such that $\Phi(\theta^\star)=\theta^\star$.
		
		It remains to prove \eqref{eq:fixed_point:second_moment_bound}.
		We start by noticing that, along the curve $\lambda=\lambda(\theta)$, the running cost for the ergodic singular control problem \eqref{eq:mfg:payoff} can be written as $l(x,\theta)+\lambda(\theta)(\theta-x)=\phi(x-z(\theta))^2+\Gamma(\theta)$.
		Since $\Gamma(\theta)$ is independent of $x$ and of the control, for fixed $\theta$ the optimal control $A^\theta$ also minimizes
		\[
		\widetilde J^\theta(A):=\varlimsup_{T\to\infty}\frac{1}{T}\E_{x,i}\left[\int_0^T\phi(X_t^A-z(\theta))^2 \dd t+K_+A_T^++K_-A_T^-\right].
		\]
		Fix $\theta\in\R$ and let $(\bar X^\theta,\bar A^\theta)$ be the unique solution of the Skorohod reflection problem
		\[
		\begin{cases}
			\dd \bar X^\theta_t = -\delta_{\varepsilon_t}\bar X^\theta_t \dd t + \sigma \dd W_t + \dd\bar A^{\theta,+}_t - \dd\bar A^{\theta,-}_t,\qquad \bar X^\theta_{0-}=x,\\
			z(\theta) - 1 \leq\bar X^\theta_t\leq z(\theta) + 1,\qquad \P_{x,i}\text{-a.s. for every }t\geq0,\\
			\int_{[0,t]}\ind_{\{\bar X^\theta_{s-}> z(\theta) - 1 \}}\dd\bar A^{\theta,+}_s = \int_{[0,t]}\ind_{\{\bar X^\theta_{s-}<z(\theta) + 1 \}} \dd\bar A^{\theta,-}_s = 0,\qquad \P_{x,i}\text{-a.s. for every }t\geq0.
		\end{cases}
		\]
		We apply It\^o's formula to $(\bar X^\theta_t-z(\theta))^2$ on $[0,T]$ and obtain
		\begin{multline*}
			(\bar X^\theta_T-z(\theta))^2-(\bar X^\theta_0-z(\theta))^2 =\int_0^T\left(-2\delta_{\varepsilon_t}\bar X^\theta_t(\bar X^\theta_t-z(\theta))+\sigma^2\right)\dd t \\
			+2\sigma\int_0^T(\bar X^\theta_t-z(\theta)) \dd W_t+2\int_{(0,T]}(\bar X^\theta_t-z(\theta)) \dd\bar A_t^{\theta,+} - 2\int_{(0,T]}(\bar X^\theta_t-z(\theta)) \dd\bar A_t^{\theta,-}.
		\end{multline*}
		By the Skorohod conditions, $d\bar A^{\theta,+}$ is supported on $\{\bar X^\theta=z(\theta) - 1\}$ and $d\bar A^{\theta,-}$ on $\{\bar X^\theta=z(\theta) + 1\}$, which implies
		\[
		\int_{(0,T]}(\bar X^\theta_t-z(\theta)) \dd\bar A_t^{\theta,+} = \int_{(0,T]}(\bar X^\theta_t-z(\theta)) \ind_{ \{ \bar X^\theta_t = z(\theta) - 1 \} } \dd\bar A_t^{\theta,+} = - \int_{(0,T]} \dd\bar A^{\theta,+} = -(\bar A_T^{\theta,+} -\bar A_0^{\theta,+}),
		\]
		and analogously
		\[
		\int_{(0,T]}(\bar X^\theta_t-z(\theta)) \dd\bar A_t^{\theta,-}=\bar A_T^{\theta,-}-\bar A_0^{\theta,-}.
		\]
		Therefore,
		\begin{multline*}
			(\bar X^\theta_T-z(\theta))^2-(\bar X^\theta_0-z(\theta))^2 \\
			= \int_0^T\left(-2\delta_{\varepsilon_t}\bar X^\theta_t(\bar X^\theta_t-z(\theta))+\sigma^2\right) \dd t + 2\sigma\int_0^T(\bar X^\theta_t-z(\theta)) \dd W_t - 2(\bar A_T^{\theta,+} - \bar A_0^{\theta,+}) - 2(\bar A_T^{\theta,-} - \bar A_0^{\theta,-}).
		\end{multline*}
		Taking expectations, using that the stochastic integral has zero expectation, $(\bar X^\theta_T-z(\theta))^2\geq0$, and $\vert \bar X^\theta_0-z(\theta) \vert^2\leq1$, we obtain
		\begin{multline*}
			\E_{x,i}\left[\bar A_T^{\theta,+} + \bar A_T^{\theta,-} \right] \leq \frac{1}{2} + \bar A_0^{\theta,+} +\bar A_0^{\theta,-} + \E_{x,i}\left[\int_0^T\left(\delta_{\varepsilon_t}|\bar X^\theta_t||\bar X^\theta_t-z(\theta)| +\frac{\sigma^2}{2}\right) \dd t\right]\\
			\leq (\bar A_0^{\theta,+}+\bar A_0^{\theta,-})+\frac{1}{2}+T\left(\delta_1(|z(\theta)|+1)+\frac{\sigma^2}{2}\right).
		\end{multline*}
		For fixed $x$ and $\theta$, the quantity $\bar A_0^{\theta,+}+\bar A_0^{\theta,-}$ is finite and independent of $T$. Hence, dividing by $T$ and letting $T\to\infty$, we conclude that
		\[
		\varlimsup_{T\to\infty}\frac{1}{T}\E_{x,i}\left[\bar A_T^{\theta,+}+\bar A_T^{\theta,-}\right]\leq\delta_1(|z(\theta)|+1)+\frac{\sigma^2}{2}\leq C(1+|z(\theta)|),
		\]
		for some constant $C>0$ independent of $\theta$.
		Since $|\bar X^\theta_t-z(\theta)|\leq1$ for every $t\geq0$, we also have
		\[
		\varlimsup_{T\to\infty}\frac{1}{T}\E_{x,i}\left[\int_0^T\phi(\bar X^\theta_t-z(\theta))^2 \dd t\right]\leq\phi.
		\]
		Consequently, for a possibly different constant $C>0$, independent of $\theta$, it holds
		\[
		\widetilde J^{\theta}(A^\theta) \leq  \widetilde J^{\theta}(\bar A^\theta) = \varlimsup_{T\to\infty}\frac{1}{T}\E_{x,i}\left[\int_0^T\phi(\bar X^\theta_t-z(\theta))^2 \dd t+K_+\bar A^{\theta,+}_T + K_-\bar A^{\theta,-}_T\right]\leq C(1+|z(\theta)|),
		\]
		where the first inequality holds by the optimality of $A^\theta$ for the reduced criterion. Since the intervention costs are nonnegative, possibly redefining $C$, we obtain
		\[
		\sum_{i \in \cE} \int_{\R} (x-z(\theta))^2 p_\infty^\theta(\dd x,i) = \varlimsup_{T\to\infty}\frac{1}{T}\E_{x,i}\left[\int_0^T(X^{A^\theta}_t-z(\theta))^2 \dd t\right]\leq C(1+|z(\theta)|),
		\]
		where the first equality follows from the ergodicity of the optimally reflected process and by the definition of its invariant distribution $p_\infty^\theta$.
		This concludes the proof.
	\end{proof}

	\begin{proof}[Proof of Theorem \ref{thm:fixed_point_uniqueness}]
		Let $(A^1,\theta_1,\lambda_1)$ and $(A^2,\theta_2,\lambda_2)$ be two stationary MFG solutions in the class of reflected controls.
		For $k=1,2$, write $X^k\coloneqq X^{A^k}$ and set $h_k(x)\coloneqq l(x,\theta_k)+\lambda_k(\theta_k-x)$.
		By Proposition \ref{prop:hjb_frozen}-\ref{prop:hjb_frozen:limsup} and equation \eqref{eq:cost:limit_exists}, for each $j=1,2$ we have
		\[
		\bfJ(A^j,\theta_j,\lambda_j) = \lim_{T\to\infty}\frac{1}{T}\E\left[\int_0^T h_j(X^j_t) \dd t+K_+A^{j,+}_T+K_-A^{j,-}_T\right].
		\]
		Moreover, for every $j,k\in\{1,2\}$, the difference $h_k-h_j$ is affine. Therefore, since $(X^j,\varepsilon)$ is ergodic and its stationary mean is $\theta_j$,
		\[
		\lim_{T\to\infty}\frac{1}{T}\E\left[\int_0^T\big(h_k(X^j_t)-h_j(X^j_t)\big)\dd t\right]
		\]
		exists. Consequently, for every $j,k\in\{1,2\}$, we have
		\begin{multline*}
			\lim_{T\to\infty}\frac{1}{T}\E\left[\int_0^T\big(h_k(X^j_t)-h_j(X^j_t)\big) \dd t\right]  + \lim_{T\to\infty}\frac{1}{T}\E\left[\int_0^T h_j(X^j_t) \dd t+K_+A^{j,+}_T+K_-A^{j,-}_T\right] \\
			= \lim_{T\to\infty}\frac{1}{T}\E\left[\int_0^T h_k(X^j_t) \dd t+K_+A^{j,+}_T+K_-A^{j,-}_T\right] = \bfJ(A^j,\theta_k,\lambda_k),
		\end{multline*}
		i.e. the upper limits in the definition of $\bfJ(A^j,\theta_k,\lambda_k)$, $j \neq k$, are ordinary limits.
		Moreover, define
		\[
		Q \coloneqq \varliminf_{T\to\infty}\frac{1}{T}\E\left[\int_0^T|X^1_t-X^2_t|^2 \dd t\right].
		\]
		Fix $\epsilon\in(0,1)$ and define $A^{\epsilon}\coloneqq(1-\epsilon)A^{1}+\epsilon A^{2}$.
		Since $\cB$ is stable under convex combinations, $A^{\epsilon}\in\cB$.
		By linearity of the state dynamics, $X^{A^{\epsilon}}_t=(1-\epsilon)X^1_t+\epsilon X^2_t$, $t\geq0$.
		Since $x\mapsto h_1(x)$ is quadratic with quadratic coefficient $\phi$, it holds
		\[
		h_1(X^{A^{\epsilon}}_t)=(1-\epsilon)h_1(X^1_t)+\epsilon h_1(X^2_t)-\phi\epsilon(1-\epsilon)|X^1_t-X^2_t|^2.
		\]
		By noticing that, by the minimality of the Jordan decomposition, $A^{\epsilon,\pm}_t \leq (1-\epsilon)A^{1,\pm}_t + \epsilon A^{2,\pm}_t$ for any $t \geq 0$ $\P$-a.s., taking the upper limit as $T\to\infty$, we obtain
		\[
		\bfJ(A^{\epsilon},\theta_1,\lambda_1)\leq(1-\epsilon)\bfJ(A^1,\theta_1,\lambda_1)+\epsilon\bfJ(A^2,\theta_1,\lambda_1) - \phi\epsilon(1-\epsilon)Q.
		\]
		Since $A^1$ is optimal for the frozen problem with parameters $(\theta_1,\lambda_1)$, $\bfJ(A^1,\theta_1,\lambda_1) \leq \bfJ(A^{\epsilon},\theta_1,\lambda_1)$.
		Consequently, $\bfJ(A^2,\theta_1,\lambda_1)-\bfJ(A^1,\theta_1,\lambda_1)\geq\phi(1-\epsilon)Q$.
		Letting $\epsilon\downarrow0$, we obtain $\bfJ(A^2,\theta_1,\lambda_1)-\bfJ(A^1,\theta_1,\lambda_1)\geq\phi Q$.
		Interchanging the roles of the two equilibria, for $\epsilon\in(0,1)$ define $\widetilde A^{\epsilon}\coloneqq(1-\epsilon)A^{2}+\epsilon A^{1}$.
		Repeating the previous argument with the frozen parameters $(\theta_2,\lambda_2)$ and letting $\epsilon\downarrow0$, we obtain $\bfJ(A^1,\theta_2,\lambda_2)-\bfJ(A^2,\theta_2,\lambda_2)\geq\phi Q$.
		Adding the two inequalities gives
		\[
		\bfJ(A^2,\theta_1,\lambda_1)-\bfJ(A^1,\theta_1,\lambda_1)+\bfJ(A^1,\theta_2,\lambda_2)-\bfJ(A^2,\theta_2,\lambda_2)\geq2\phi Q.
		\]
		Since all the upper limits in the definition of $\bfJ(A^k,\theta_j,\lambda_j)$ are limits, the intervention terms cancel in the left-hand side. Moreover, notice that the map $x\mapsto h_1(x)-h_2(x)$ is affine with slope 
		\[
		-2\phi(\theta_1-\theta_2)-(\lambda_1-\lambda_2)=-2(\phi-\psi)(\theta_1-\theta_2),
		\]
		where we have used the relation $\lambda_k = 2\psi(\bar \theta - \theta_k)$.
		Therefore, we get
		\begin{multline}\label{eq:uniqueness:inequality1}
			2\phi Q \leq \bfJ(A^2,\theta_1,\lambda_1)-\bfJ(A^1,\theta_1,\lambda_1)+\bfJ(A^1,\theta_2,\lambda_2)-\bfJ(A^2,\theta_2,\lambda_2) \\
			= \lim_{T \to \infty} \frac{1}{T} \E \left[ \int_0^T \left( (h_1(X^2_t) - h_2(X^2_t) ) - (h_1(X^1_t) - h_2(X^1_t) ) \right) \dd t \right] \\
			= -2(\phi - \psi)(\theta_1 - \theta_2) \lim_{T\to \infty} \frac{1}{T} \E\left[ \int_0^T (X^2_t - X^1_t) \dd t \right] = 2(\phi-\psi)(\theta_1-\theta_2)^2,
		\end{multline}
		where in the last line we used the consistency conditions.
		On the other hand, by Jensen's inequality for every $T>0$ and using the consistency condition, we have
		\begin{multline}\label{eq:uniqueness:inequality2}
			Q=\varliminf_{T\to\infty}\frac{1}{T}\E\left[\int_0^T|X^1_t-X^2_t|^2 \dd t\right]\\
			\geq\varliminf_{T\to\infty}\left(\frac{1}{T}\E\left[\int_0^T(X^1_t-X^2_t) \dd t\right]\right)^2
			=\left(\lim_{T\to\infty}\frac{1}{T}\E\left[\int_0^T(X^1_t-X^2_t) \dd t\right]\right)^2
			=(\theta_1-\theta_2)^2.
		\end{multline}
		By putting together \eqref{eq:uniqueness:inequality1} and \eqref{eq:uniqueness:inequality2}, it follows that
		\[
		2(\phi-\psi)(\theta_1-\theta_2)^2\geq2\phi(\theta_1-\theta_2)^2,
		\]
		and therefore $-2\psi(\theta_1-\theta_2)^2 \geq 0$.
		Since $\psi>0$, we conclude that $\theta_1=\theta_2$.
		Hence, also $\lambda_1=\lambda_2$.
		For these fixed parameters, the reflected optimal control is unique by the uniqueness of the optimal thresholds and of the associated Skorohod reflection problem.
		Therefore $A^1=A^2$, and the stationary MFG solution is unique in the class of reflected controls.
	\end{proof}
	
	\subsection{From the stationary MFG to the stationary MFC}\label{app:stationary_MFC_problem}
	
	We derive some first-order optimality conditions for the optimality of the control $A^\star(\theta,\lambda)$, inspired by \cite{cannerozzi2026cooperation} (see also \cite[Proposition 16]{cannerozzi2026stationary}), which will be used to connect the solution of the stationary MFG to the solution of the stationary MFC problem. 
	
	\begin{proposition}\label{prop:first_order}
		Let $(\theta,\lambda) \in \R^2$ and let $A^\star(\theta,\lambda)$ be the optimal control for $\bfJ(A,\theta,\lambda)$ given by Proposition \ref{prop:hjb_frozen}-(\ref{prop:hjb_frozen:optimal_ctrl}).
		Then, for any $A \in \cB$ such that $\bfJ(A,\theta,\lambda)<+\infty$, it holds
		\begin{equation}\label{eq:optimality:first_order_condition}
			\varliminf_{T \to \infty} \frac{1}{T}\E\left[ \int_0^T \big( l_x(X^{A^\star}_t,\theta) - \lambda \big) (X^{A^\star}_t - X^{A}_t) \dd t  + K_+ (A^{\star,+}_T(\theta,\lambda) - A^{+}_T) + K_- (A^{\star,-}_T(\theta,\lambda) - A^{-}_T)   \right] \leq 0.
		\end{equation}
	\end{proposition}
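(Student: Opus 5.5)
The plan is a convex perturbation argument in the spirit of the uniqueness proof of Theorem~\ref{thm:fixed_point_uniqueness}, combined with a first-order expansion of the quadratic running cost. Fix $A\in\cB$ with $\bfJ(A,\theta,\lambda)<\infty$. For $\epsilon\in(0,1)$ set $A^\epsilon\coloneqq(1-\epsilon)A^\star+\epsilon A$, where $A^\star=A^\star(\theta,\lambda)$. Then $A^\epsilon\in\cB$, and by linearity of \eqref{application:dynamics} we have $X^{A^\epsilon}=X^{A^\star}+\epsilon(X^A-X^{A^\star})$. As in the uniqueness proof, minimality of the Jordan decomposition gives $A^{\epsilon,\pm}_T\le (1-\epsilon)A^{\star,\pm}_T+\epsilon A^{\pm}_T$. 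Set $h(x)\coloneqq l(x,\theta)+\lambda(\theta-x)$. Since $h$ is quadratic with leading coefficient $\phi$, the exact expansion is
\[
h(X^{A^\epsilon}_t)=h(X^{A^\star}_t)-\epsilon\big(l_x(X^{A^\star}_t,\theta)-\lambda\big)(X^{A^\star}_t-X^{A}_t)+\phi\epsilon^2|X^{A}_t-X^{A^\star}_t|^2 .
\]

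Write $D_T$ for the expectation inside \eqref{eq:optimality:first_order_condition}, and let $F_T(B)\coloneqq \E[\int_0^T h(X^B_t)\dd t+K_+B^+_T+K_-B^-_T]$ for $B\in\cB$. Combining the expansion with the Jordan bound gives, for every $T$,
\[
F_T(A^\epsilon)\le F_T(A^\star)-\epsilon D_T+\phi\epsilon^2\E\Big[\int_0^T|X^A_t-X^{A^\star}_t|^2\dd t\Big].
\]

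Now divide by $T$ and take $\varlimsup$. By Proposition~\ref{prop:hjb_frozen}-\ref{prop:hjb_frozen:limsup}, $\frac1T F_T(A^\star)$ converges to $\bfJ(A^\star,\theta,\lambda)$, so the $\varlimsup$ of the right-hand side splits as that limit, plus $\epsilon\varlimsup(-D_T/T)=-\epsilon\varliminf(D_T/T)$, plus the quadratic term. Optimality of $A^\star$ gives $\bfJ(A^\star,\theta,\lambda)\le\bfJ(A^\epsilon,\theta,\lambda)$. Together these yield
\[
\epsilon\varliminf_{T}\tfrac1T D_T\le \phi\epsilon^2 \varlimsup_T\tfrac1T\E\Big[\int_0^T|X^A_t-X^{A^\star}_t|^2\dd t\Big].
\]
Dividing by $\epsilon$ and letting $\epsilon\downarrow0$ gives \eqref{eq:optimality:first_order_condition}, provided the $\varlimsup$ on the right is finite.

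The main obstacle is exactly this finiteness of $\varlimsup_T \frac1T\E\int_0^T|X^A_t-X^{A^\star}_t|^2\dd t$.
\begin{itemize}
\item $X^{A^\star}$ is reflected between finite thresholds, so by Proposition~\ref{prop:dynkin_game} it is bounded. It therefore suffices to control $\frac1T\E\int_0^T|X^A_t|^2\dd t$.
\item Write $h(x)=\phi(x-c)^2+\text{(affine in }x)$. Using $h\ge \frac{\phi}{2}x^2-C$, the finiteness of $\bfJ(A,\theta,\lambda)$ together with the nonnegativity of the intervention costs bounds $\varlimsup\frac1T\E\int_0^T|X^A_t|^2\dd t$ by a constant.
\item Linear terms in $x$ are absorbed by the quadratic one; the transversality condition \eqref{eq:transversality_condition} is not even needed here.
\end{itemize}

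A secondary point is to check that $\bfJ(A^\epsilon,\theta,\lambda)$ is finite and the manipulations are legitimate. This follows from the same quadratic bound, since $|X^{A^\epsilon}|^2\le 2|X^{A^\star}|^2+2|X^A|^2$.
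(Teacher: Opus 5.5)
Your proposal is correct and follows essentially the same route as the paper. Both use the convex perturbation $A^\epsilon=(1-\epsilon)A^\star+\epsilon A$, the exact quadratic expansion of $h$, the Jordan-decomposition bound, existence of the limit for $A^\star$ from Proposition~\ref{prop:hjb_frozen}-\ref{prop:hjb_frozen:limsup}, and the second-moment bound obtained from $h\geq\frac{\phi}{2}|x|^2-C$ together with boundedness of the reflected state; the only cosmetic difference is that you divide by $\epsilon$ after passing to the limit in $T$, whereas the paper works with the difference quotient $f(\epsilon,T)$ throughout.
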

	
	\begin{proof}
		To simplify the notation, we omit the dependence on $(\theta,\lambda)$ in the optimal control $A^\star$, and we set $h(x) \coloneqq l(x,\theta) +\lambda(\theta - x)$.
		For any $A \in \cB$ such that $\bfJ(A,\theta,\lambda)<+\infty$ and $\epsilon \in (0,\frac{1}{2}]$, set $A^{\epsilon} = \epsilon A + (1-\epsilon)A^{\star} = A^{\star} + \epsilon (A - A^{\star})$.
		Since $\cB$ is stable under convex combinations, $A^\epsilon\in\cB$.
		We define
		\begin{equation*}
			f(\epsilon,T) = \frac{1}{\epsilon}\bigg(  \frac{1}{T}\E\left[ \int_0^T h(X^{A^\epsilon}_t) \dd t + K_+ A^{\epsilon,+}_T + K_- A^{\epsilon,-}_T \right] - \frac{1}{T}\E\left[ \int_0^T h(X^{A^{\star}}_t) \dd t + K_+ A^{\star,+}_T + K_- A^{\star,-}_T\right]  \bigg).
		\end{equation*}
		By linearity of the state dynamics, $X^{A^\epsilon}_t=(1-\epsilon)X^{A^\star}_t+\epsilon X^A_t=X^{A^\star}_t+\epsilon(X^A_t-X^{A^\star}_t)$, $t \geq 0$.
		Moreover, since $h$ is quadratic with quadratic coefficient $\phi$, we have
		\begin{equation}\label{eq:first_order_condition:difference_h}
			h(X^{A^\epsilon}_t)-h(X^{A^\star}_t)=\epsilon h_x(X^{A^\star}_t)(X^A_t-X^{A^\star}_t)+\phi\epsilon^2|X^A_t-X^{A^\star}_t|^2.
		\end{equation}
		By minimality of the Jordan decomposition, for every $t\geq0$, $\P$-a.s., $A^{\epsilon,\pm}_t\leq \epsilon A^\pm_t+(1-\epsilon)A^{\star,\pm}_t = A^{\star,\pm}_t + \epsilon (A^\pm_t - A^{\star,\pm}_t)$. Then, this fact and identity \eqref{eq:first_order_condition:difference_h} imply that, for every $\epsilon\in(0,\frac12]$ and $T>0$, it holds
		\begin{multline}\label{eq:first_order_condition:bound_f}
			f(\epsilon,T) \leq\frac{1}{T}\E\left[\int_0^T h_x(X^{A^\star}_t)(X^A_t-X^{A^\star}_t) \dd t+K_+(A^+_T-A^{\star,+}_T)+K_-(A^-_T-A^{\star,-}_T)\right]\\
			+\phi\epsilon\frac{1}{T}\E\left[\int_0^T|X^A_t-X^{A^\star}_t|^2 \dd t\right].
		\end{multline}
		Since $h$ is quadratic with quadratic coefficient $\phi$, Young's inequality implies that, for fixed $(\theta,\lambda)$, there exists $C_{\theta,\lambda}>0$ such that $h(x)\geq\frac{\phi}{2}|x|^2-C_{\theta,\lambda}$ for any $x \in \R$.
		Therefore, since the intervention costs are nonnegative,
		\[
		\varlimsup_{T\to\infty}\frac{1}{T}\E\left[\int_0^T|X_t^A|^2 \dd t\right]\leq \frac{2}{\phi}\left( \bfJ(A,\theta,\lambda)+C_{\theta,\lambda} \right)<+\infty.
		\]
		Moreover, since $A^\star$ is a reflected policy between finite boundaries, $\varlimsup_{T\to\infty}\frac{1}{T}\E[\int_0^T|X^{A^\star}_t|^2dt]<+\infty$ as well, and therefore
		\begin{equation}\label{eq:first_order_condition:finite_limsup}
			\varlimsup_{T\to\infty}\frac{1}{T}\E\left[\int_0^T|X^A_t-X^{A^\star}_t|^2 \dd t\right]<+\infty.
		\end{equation}
		On the other hand, since the upper limit in $\bfJ(A^\star,\theta,\lambda)$ is an ordinary limit by Proposition \ref{prop:hjb_frozen}-\ref{prop:hjb_frozen:limsup},
		\[
		\varlimsup_{T\to\infty}f(\epsilon,T)=\frac{1}{\epsilon}\left(\bfJ(A^\epsilon,\theta,\lambda)-\bfJ(A^\star,\theta,\lambda)\right)\geq0,
		\]
		where the inequality follows from the optimality of $A^\star$.
		Combining this with \eqref{eq:first_order_condition:bound_f} yields
		\begin{align*}
			0 & \leq \varlimsup_{\epsilon \downarrow 0} \varlimsup_{T\to\infty} f(\epsilon,T) \\
			& \leq \varlimsup_{\epsilon \downarrow 0} \varlimsup_{T\to\infty} \bigg( \frac{1}{T}\E\bigg[\int_0^T h_x(X^{A^\star}_t)(X^A_t-X^{A^\star}_t) \dd t + K_+(A^+_T-A^{\star,+}_T)+K_-(A^-_T-A^{\star,-}_T)\bigg]\\
			& \quad +\epsilon \phi \frac{1}{T}\E\bigg[\int_0^T|X^A_t-X^{A^\star}_t|^2 \dd t\bigg] \bigg) \\
			& \leq \varlimsup_{T\to\infty}\frac{1}{T}\E\bigg[\int_0^T h_x(X^{A^\star}_t)(X^A_t-X^{A^\star}_t) \dd t + K_+(A^+_T-A^{\star,+}_T)+K_-(A^-_T-A^{\star,-}_T)\bigg],
		\end{align*}
		where the last inequality follows from the sublinearity of the upper limit and exploits \eqref{eq:first_order_condition:finite_limsup}.
		The established inequality is equivalent to \eqref{eq:optimality:first_order_condition}, and thus it concludes the proof.
	\end{proof}
	
	\begin{lemma}\label{lemma:mfc_finite_implies_aux_finite}
		Let $A\in\cB_{{\rm mfc}}$ be such that $\bfJ^{{\rm mfc}}(A)<+\infty$. Then $\bfJ(A,\theta,\lambda)<+\infty$ for every fixed $(\theta,\lambda)\in\R^2$.
	\end{lemma}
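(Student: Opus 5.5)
The plan is to compare the two running costs pointwise and show that their difference is controlled by a quantity whose long-run average is finite for any $A\in\cB_{{\rm mfc}}$ with finite cost. Write $\theta^A$ for the stationary mean of $A$. For fixed $(\theta,\lambda)$,
\[
l(x,\theta)+\lambda(\theta-x)-l(x,\theta^A)
=\phi\big[(x-\theta)^2-(x-\theta^A)^2\big]+\psi\big[(\theta-\bar\theta)^2-(\theta^A-\bar\theta)^2\big]+\lambda(\theta-x).
\]
The difference $(x-\theta)^2-(x-\theta^A)^2=2x(\theta^A-\theta)+\theta^2-(\theta^A)^2$ is affine in $x$. The whole difference is therefore an affine function $\alpha+\beta x$, with constants $\alpha,\beta$ depending only on $(\theta,\lambda,\theta^A)$.

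It follows that
\[
\bfJ(A,\theta,\lambda)\le \bfJ^{{\rm mfc}}(A)+|\alpha|+|\beta|\,\varlimsup_{T\to\infty}\frac1T\E_{x,i}\Big[\int_0^T|X^A_t|\,\dd t\Big],
\]
using subadditivity of the upper limit. This bound requires the upper limit in $\bfJ^{{\rm mfc}}(A)$ to be finite, which is assumed. It also uses that the intervention terms $K_\pm A^\pm_T$ are identical in the two criteria.

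The remaining step, which I expect to be the only real obstacle, is to show that the long-run average of $|X^A|$ is finite. Admissibility only gives convergence of $\frac1T\E\int_0^T X^A_t\,\dd t$ (signed, by \eqref{eq:stationary_occupation_property}) and a finite first moment of $p^A_\infty$. It gives no information on time averages of $|X^A|$. I would therefore not rely on that, and instead extract a second-moment bound from the finiteness of $\bfJ^{{\rm mfc}}(A)$ itself. Since $\phi>0$ and $\psi(\theta^A-\bar\theta)^2\ge0$, Young's inequality gives
\[
l(x,\theta^A)\ge \rho x+\phi(x-\theta^A)^2\ge \tfrac{\phi}{2}x^2-C
\]
for a constant $C$ depending on $(\rho,\phi,\theta^A)$. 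This is the same estimate used in the proof of Proposition~\ref{prop:first_order}.

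The costs $K_\pm A^\pm_T$ are nonnegative, so this lower bound yields
\[
\varlimsup_{T\to\infty}\frac1T\E_{x,i}\Big[\int_0^T|X^A_t|^2\,\dd t\Big]\le \frac{2}{\phi}\big(\bfJ^{{\rm mfc}}(A)+C\big)<\infty.
\]
Then $|x|\le 1+x^2$, or Cauchy--Schwarz, bounds the average of $|X^A|$, and inserting this in the display above gives $\bfJ(A,\theta,\lambda)<+\infty$. Implicitly this also needs $\bfJ(A,\theta,\lambda)>-\infty$ to be meaningful. That holds because the frozen running cost is itself bounded below by $\tfrac{\phi}{2}x^2-C_{\theta,\lambda}$, so no issue arises.
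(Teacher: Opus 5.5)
Your proof is correct and follows essentially the same route as the paper. In both, the key step is the coercive bound $l(x,\theta^A)\ge\frac{\phi}{2}x^2-C$, which together with nonnegative intervention costs gives a finite long-run second moment of $X^A$; the paper then bounds the frozen running cost directly by $C(1+|x|^2)$, whereas you write it as $l(x,\theta^A)$ plus an affine correction controlled via $|x|\le 1+x^2$, which is only a cosmetic difference.
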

	\begin{proof}
		Let $\theta^A$ be the stationary mean generated by $A$. Since $x\mapsto l(x,\theta^A)$ is quadratic with positive quadratic coefficient, it is bounded from below by $\frac{\phi}{2}|x|^2-C$ for some constant $C>0$. Hence, $\bfJ^{{\rm mfc}}(A)<+\infty$ implies
		\[
		\varlimsup_{T\to\infty}\frac{1}{T}\E\left[\int_0^T |X^A_t|^2 \dd t+K_+A^+_T+K_-A^-_T\right]<+\infty.
		\]
		For fixed $(\theta,\lambda)$, the function $x\mapsto l(x,\theta)+\lambda(\theta-x)$ is again quadratic with positive quadratic coefficient. Hence it is bounded from above by $C(1+|x|^2)$ for some constant $C>0$, and the previous estimate gives $\bfJ(A,\theta,\lambda)<+\infty$.
	\end{proof}

	\begin{proof}[Proof of Theorem \ref{thm:mfg_to_mfc}]
		Let $A \in \cB_{{\rm mfc}}$ be any other control process. If $\bfJ^{{\rm mfc}}(A)=+\infty$, there is nothing to prove. Hence, let $\bfJ^{{\rm mfc}}(A)<+\infty$. By Lemma \ref{lemma:mfc_finite_implies_aux_finite}, we have $\bfJ(A,\theta^\star,\lambda^\star)<+\infty$, so that Proposition \ref{prop:first_order} can be applied with $(\theta,\lambda)=(\theta^\star,\lambda^\star)$.
		By Proposition \ref{prop:hjb_frozen}-\ref{prop:hjb_frozen:limsup}, the auxiliary cost $\bfJ(A^\star,\theta^\star,\lambda^\star)$ admits an ordinary limit. Moreover, since $A^\star\in\cB_{{\rm mfc}}$ and $\theta^\star$ is the stationary mean generated by $A^\star$,
		\[
		\lim_{T\to\infty}\frac{1}{T}\E\left[\int_0^T\lambda^\star(\theta^\star-X_t^{A^\star}) \dd t\right]=0.
		\]
		Hence, the cost defining $\bfJ^{{\rm mfc}}(A^\star)$ admits an ordinary limit as well.
		Using $\lim_{T\to\infty}a_T-\varlimsup_{T\to\infty}b_T = \varliminf_{T\to\infty}(a_T-b_T)$, and relying on the convexity of $(x,\theta) \mapsto l(x,\theta)$, we get
		\begin{equation}\label{eq:thm:sufficient_condition:bound1}
			\begin{aligned}
				& \bfJ^{{\rm mfc}} (A^\star) - \bfJ^{{\rm mfc}}(A) \leq \varliminf_{T \to \infty} \frac{1}{T}\E\bigg[ \int_0^T \big( l_x(X^{A^\star}_t,\theta^{\star})(X^{A^\star}_t - X^{A}_t) \\
				& \;\; + l_\theta(X^{A^\star}_t,\theta^{\star})(\theta^{\star} - \theta^{A}) \big)\dd t + K_+ (A^{\star,+}_T - A^{+}_T) + K_- (A^{\star,-}_T - A^{-}_T)   \bigg] \\
				& \leq \varliminf_{T \to \infty} \frac{1}{T}\E\left[ \int_0^T l_x(X^{A^\star}_t,\theta^{\star})(X^{A^\star}_t - X^{A}_t) \dd t  + K_+ (A^{\star,+}_T - A^{+}_T) + K_- (A^{\star,-}_T - A^{-}_T)   \right] \\
				& \;\; + \varlimsup_{T \to \infty}\frac{1}{T}\E\left[\int_0^T l_\theta(X^{A^\star}_t,\theta^{\star}) \dd t \right](\theta^{\star} - \theta^{A}).
			\end{aligned}
		\end{equation}
		By exploiting \eqref{mfg:cons:lambda}, we have
		\begin{equation*}
			\lim_{T \to \infty}\frac{1}{T}\E\left[ \int_0^T l_\theta(X^{A^\star}_t,\theta^{\star}) \dd t \right] = -\lambda^\star.
		\end{equation*}
		As $A^\star$ and $A$ belong to $\cB_{{\rm mfc}}$, the processes $X^{A^\star}$ and $X^A$ are ergodic, so that we have
		\begin{equation}\label{eq:cons_lambda:implication}
			\varlimsup_{T \to \infty}\frac{1}{T}\E\left[\int_0^T l_\theta(X^{A^\star}_t,\theta^{\star}) \dd t \right](\theta^{\star} - \theta^{A}) = -\lambda^\star \lim_{T \to \infty}\frac{1}{T}\E\left[\int_0^T (X^{A^\star}_t-X^{A}_t) \dd t \right].
		\end{equation}
		By plugging \eqref{eq:cons_lambda:implication} in the last line of \eqref{eq:thm:sufficient_condition:bound1}, we deduce
		\begin{equation}\label{eq:thm:sufficient_condition:bound2}
			\begin{aligned}
				& \bfJ^{{\rm mfc}}(A^\star) - \bfJ^{{\rm mfc}}(A) \leq \varliminf_{T \to \infty} \frac{1}{T}\E\left[ \int_0^T l_x(X^{A^\star}_t,\theta^{\star})(X^{A^\star}_t - X^{A}_t) \dd t  + K_+ (A^{\star,+}_T - A^{+}_T) + K_- (A^{\star,-}_T - A^{-}_T)   \right] \\
				& \;\; - \lim_{T \to \infty}\frac{1}{T}\E\left[\int_0^T \lambda^\star (X^{A^\star}_t-X^{A}_t) \dd t \right] \\
				& \leq \varliminf_{T \to \infty} \frac{1}{T}\E\left[ \int_0^T \big( l_x(X^{A^\star}_t,\theta^{\star}) - \lambda^\star\big) (X^{A^\star}_t - X^{A}_t)\dd t  + K_+ (A^{\star,+}_T - A^{+}_T) + K_- (A^{\star,-}_T - A^{-}_T)   \right].
			\end{aligned}
		\end{equation}
		By \eqref{eq:optimality:first_order_condition}, the right-hand side of \eqref{eq:thm:sufficient_condition:bound2} is always non-positive, which implies that $A^\star$ is optimal.
	\end{proof}

\subsection{Numerical method}
	
Proposition~\ref{prop:main_fixed_point_characterization} reduces the computation of the optimal MSR policy to three nested tasks, namely (1) solving a frozen obstacle problem, (2) computing the invariant distribution generated by the resulting reflecting bands, and (3) imposing the scalar consistency condition. Our implementation follows this construction directly. In particular, it does not impose an ordering of the four regime-dependent barriers. This is important because their relative ordering may change with the parameters and because degenerate configurations, in which two barriers coincide, may arise. For a fixed candidate mean $\theta$, set $\lambda(\theta)=2\psi(\bar\theta-\theta),$ and $c^\theta(x):=l_x(x,\theta)-\lambda(\theta) = \rho+2\phi x+2(\psi-\phi)\theta-2\psi\bar\theta.$
	
\paragraph{(1) Frozen obstacle problem.}
	We truncate the state space to a sufficiently large interval $[\underline x,\overline x]$ and introduce a grid
	$\underline x=x_1<\cdots<x_N=\overline x$. For $i\in\cE$ and $j\neq i$, define
	\[
	\cH_i^\theta(U)(x)
	:=-\delta_iU_i(x)-\delta_i x\,\partial_xU_i(x)
	+\frac{\sigma^2}{2}\partial_{xx}U_i(x)
	+\gamma_i\bigl(U_j(x)-U_i(x)\bigr)+c^\theta(x),
	\]
	where $U_i(x):=U(x,i)$. The obstacle problem in
	\eqref{eq:main_thm:free_boundary_problem} can be written as
	\[
	\max\Big\{
	\min\big\{\cH_i^\theta(U)(x),K_--U_i(x)\big\},
	-K_+-U_i(x)
	\Big\}=0,
	\qquad i\in\cE.
	\]
	We discretize the differential operators by finite differences and solve the resulting coupled discrete variational inequality by policy iteration. At every iteration, each grid point is assigned to the lower-contact, continuation, or upper-contact region. The obstacle value is imposed at contact points, whereas the discretized equation $\cH_i^\theta(U)=0$ is imposed at continuation points. The resulting linear system for $(U_1,U_2)$ is solved, the active sets are updated, and the procedure is repeated until both the active sets and the projected residual are stable. At the endpoints of the truncated domain we impose the obstacle values $U_i(\underline x)=-K_+$ and $U_i(\overline x)=K_-$. The domain is chosen sufficiently wide that enlarging it does not alter the computed barriers at the reported precision.
	
The thresholds are recovered from the contact sets by interpolation:
\[
a_+(i,\theta) :=\sup\{x:U_i^\theta(x)=-K_+\}, \qquad a_-(i,\theta) :=\inf\{x:U_i^\theta(x)=K_-\}.
\]
No ordering is imposed across regimes; the configuration of the four barriers is determined only after the obstacle problem has been solved.
	
\paragraph{Invariant distribution and stationary-mean map.}
Given the barriers, we compute the regime-wise stationary CDFs $\mu_i^\theta(x):=\mu^\theta(x,i)$ by applying finite differences to the coupled system
\eqref{eq:regime_wise_CDF}, with
\[
\mu_i^\theta(x)=0\quad\text{for }x\leq a_+(i,\theta), \qquad \mu_i^\theta(x)=\pi_i\quad\text{for }x\geq a_-(i,\theta).
\]
The stationary mean generated by the candidate bands is then evaluated through the compact representation in \eqref{eq:fixed_point_map}:
\[ 
\Phi(\theta) = \sum_{i=1}^2 \Big(\pi_i a_-(i,\theta)  - \int_{a_+(i,\theta)}^{a_-(i,\theta)}\mu_i^\theta(x)\,\dd x \Big).
\]
Thus, one evaluation of $\Phi$ consists of one obstacle solve followed by one stationary-distribution solve.
	
\paragraph{(3) Outer fixed-point search.}
	We define $F(\theta):=\Phi(\theta)-\theta$ and first evaluate $F$ on a coarse grid of candidate means. Once a sign change has been located, the unique root is refined by bisection. If the initial search interval does not contain a sign change, it is enlarged until one is found. The existence and uniqueness results in Section~\ref{subsec:fixed_point_auxiliary} imply that the exact map has a single fixed point. The numerical output is
$\theta^\star,$	$\lambda^\star=\lambda(\theta^\star),$ and the thresholds $a_\pm^\star(i)=a_\pm(i,\theta^\star),$ for $i\in\cE.$
	
\paragraph{Simulation and numerical checks.}
	Sample paths are generated by an Euler scheme with Markov switching. At each time step, the regime and the uncontrolled state increment are updated, after which the state is projected onto the reflecting interval associated with the updated regime. The upward and downward projection amounts are recorded as the increments of $A^+$ and $A^-$. This procedure also captures the instantaneous adjustment that may be triggered by a regime switch.

	\bibliographystyle{abbrv}

\end{document}